\definecolor{dgreen}{HTML}{006400}
\tikzset{axis line style/.style={thin, gray, -stealth}}
\newcommand{\Ebb}[1]{\Eb\lbb #1\rbb}
\newcommand{\Eb}{\mathbb{E}}
\newcommand{\lb}{\left(}
\newcommand{\rb}{\right)}
\newcommand{\lbb}{\left [}
\newcommand{\rbb}{\right ]}
\newcommand{\lbrb}[1]{\lb #1 \rb}
\newcommand{\D}{\mathrm{d}}
\newcommand{\so}[1]{\mathrm{o}\lbrb{#1}}
\newcommand{\bo}[1]{\mathrm{O}\lbrb{#1}}
\newcommand{\ind}[1]{\mathbbm{1}_{\left\{{#1}\right\}}}
\renewcommand{\P}{\mathbb{P}}
\newcommand{\Var}{\mathbb{V}\mathrm{ar}}
\DeclareMathOperator{\sign}{sign}
\DeclareMathOperator{\totalB}{totalB}
\DeclareMathOperator{\totalD}{totalD}
\DeclareMathOperator{\totalDTrunc}{totalDTrunc}
\DeclareMathOperator{\bfsScore}{bfsScore}
\DeclareMathOperator{\dfsScore}{dfsScore}
\DeclareMathOperator{\dfsTruncScore}{dfsTruncScore}
\DeclareMathOperator{\lev}{lev}
\theoremstyle{plain}
\newtheorem{theorem}{Theorem}[section]
\newtheorem{corollary}[theorem]{Corollary}
\newtheorem{definition}[theorem]{Definition}
\newtheorem{lemma}[theorem]{Lemma}
\newtheorem{question}{Question}[section]
\theoremstyle{remark}\newtheorem{remark}[theorem]{Remark}
\let\originalleft\left
\let\originalright\right
\renewcommand{\left}{\mathopen{}\mathclose\bgroup\originalleft}
\renewcommand{\right}{\aftergroup\egroup\originalright}
\newcommand{\leqnomode}{\tagsleft@true\let\veqno\@@leqno}
\newcommand{\reqnomode}{\tagsleft@false\let\veqno\@@eqno}
\title{BFS versus DFS for fixed-level targets in ordered trees}
\author{Stoyan Dimitrov\\
Department of Mathematics, Dartmouth college \\
\texttt{emailtostoyan@gmail.com}
 \vspace{1.5em} \\ 
Martin Minchev \\
Faculty of Mathematics and Informatics, Sofia University \\
\texttt{mjminchev@fmi.uni-sofia.bg}
 \vspace{1.5em} \\ 
Yan Zhuang \\
Department of Mathematics and Computer Science, Davidson College \\
\texttt{yazhuang@davidson.edu}
}
\begin{document}

\maketitle
\begin{abstract}
We find the average time complexity of the breadth-first search (BFS) and the depth-first search (DFS) algorithms, when one searches for a target node selected uniformly at random among all nodes at level $\ell$ in the set of ordered trees with $n$ edges. Intuition suggests that on average BFS must be asymptotically faster than DFS if and only if $\ell$, as a function of $n$, is below a certain threshold. We confirm this intuition by showing that there exists a unique constant $\lambda\approx 0.789004$, such that in expectation BFS is asymptotically faster than DFS if and only if $\ell\leq \lambda\sqrt{n}$. This gives us a practical rule to select between the two algorithms, even when we do not know the exact value of $\ell$, but only an estimate of it. Furthermore, we find the asymptotic average time complexity of BFS in the given setting for an arbitrary class of Galton--Watson trees, which includes ordered trees, binary trees, and other popular classes. We use results on the occupation measure of Brownian excursions, as well as combinatorial identities related to lattice paths. Finally, we consider the simple \textit{truncated DFS} algorithm, which can be shown easily to be asymptotically faster than both BFS and DFS when $\ell$ is known in advance. We show that in fact its asymptotic time complexity is $1/2$ of the asymptotic complexity of BFS, when $\ell = s\sqrt{n}$ for any constant $s$. Several further questions are also raised.
\end{abstract}
\bigskip
\noindent \textbf{Keywords:} breadth-first search, depth-first search, average-case complexity, trees, Galton--Watson trees, random graphs. 
\section{Introduction}
Several important problems in computer science and artificial intelligence can be formulated as search problems \cite[Chapter 3]{norvig}. Some examples include the traveling salesman problem, scheduling problems, and the problem of finding optimal moves in various games. Most decision-making problems can be also naturally thought of as search problems, where the decision space is modeled as a graph or a tree. For many of these problems, a different search algorithm is optimal, depending on the instance \cite{hoos}. Some machine learning approaches exist for algorithm selection \cite{kotthoff} depending on the instance, but they are usually less useful than having an exact average-case comparison.  

In this paper, we find the average time complexity of the classical tree search algorithms---breadth-first search (BFS) and depth-first search (DFS)---when we perform a search for a random target node at a fixed level, given that we begin at the root of an ordered tree with a prescribed number of edges. We consider the most simple scenario when no additional information about the tree is available. Then, it is reasonable to assume that the target node is sampled uniformly from all nodes at the prescribed level among all trees with the given number of edges. BFS and DFS are compared in terms of their complexities for each target node level, yielding an optimal criterion for choosing between them. 

In particular, by using a folklore correspondence between ordered trees and Dyck paths, we find that the expected number of steps made by DFS, when we search for a unique target node on level $\ell$ in a tree with $n$ edges, is $\frac{\ell}{2\ell +1}(n+\ell + 1)$. Then, we find the expected number of steps made by BFS via results of Tak\'{a}cs \cite{Takacs-99} on the occupation measure of Brownian excursions. A connection between the generating function for the BFS time-complexity and the so-called \emph{Fibonacci polynomials} is also established. The average time complexities of the two algorithms are compared in the asymptotic case, when $\ell,n\to \infty$, and when $\ell$ is a function of $n$.   
% Additional facts, such as Hoeffding's inequality and the Berry-Essen bound (see \cite[p.62]{Shiryaev-16}) are used along the way. 
Consequently, we find a unique threshold $\lambda_{n}\approx 0.789\sqrt{n}$, such that BFS is asymptotically faster than DFS if and only if $\ell\leq \lambda_n$. This confirms the intuition that BFS should have advantage when the target is close to the root, while DFS is superior when the target is far from the root. The limit of the ratio of the two complexities is also found for certain intervals for the target level $\ell$.

While ordered trees are a very general class of structures used to model various recursive processes, we also find the asymptotic average time complexity of BFS for an arbitrary class of Galton--Watson trees, which includes ordered trees, binary and $m$-ary trees, labeled trees and other classes. This generalization is obtained by utilizing some results on stochastic process and the limiting shape of trees by Aldous \cite{Aldous-98}. Finally, we propose a variation of DFS called \emph{truncated DFS}, and we show that it has a better average time performance than both BFS and DFS when the target level $\ell$ is known in advance. It is also shown that the average-case time complexity of the truncated DFS is half of those of BFS, when $\ell = s\sqrt{n}$, for any constant $s$.
% A formula for the generating function encoding the average time complexity of this algorithm, for any given $\ell$, is also found using Lagrange inversion.

Our results have several possible applications, since they are related to the complexity of the two most popular search algorithms in the universal scenario of traversing a tree. 
% Three of these applications are listed below:
% \begin{itemize}
%     \item[--] Judging the hardness of search problems, i.e., estimating the target level by the time it takes by either BFS or DFS to find the target node.
%     \item[--] Algorithm selection. Knowing the target level $\ell$, when searching for a node in ordered trees without additional information, our main result gives the optimal rule to pick either BFS or DFS, depending on $\ell$. Using such an adaptive search algorithm, depending on the instance of the problem would lead to significant improvement of the performance.
%     \item[--] Analyzing more complicated search algorithms bases on BFS or DFS, e.g., algorithms for searching in graphs. Initial investigations in this direction have been done in \cite[Section 6]{everitt}.
% \end{itemize}
Note that in the cases when the size of the tree, $n$, is not known in advance, one may first use the well-known method of Knuth \cite{knuth1975estimating} to estimate that size, before using our formulas to find the better search method for different levels $\ell$.
\subsection{Problem formulation and main result}
\label{subsec:probNot}
Consider the set $\mathcal{T}$ of all unlabeled rooted ordered trees, which we simply refer to as \emph{ordered trees} or \emph{trees}. In each ordered tree, we have a fixed node called a \emph{root}, as well as a fixed ordering for the children of each node. Let $\mathcal{T}_{n}$ denote the subset of trees in $\mathcal{T}$ having exactly $n$ edges; it is well known that the number of trees in $\mathcal{T}_{n}$ is the $n$th Catalan number 
\begin{equation*}C_{n} = \frac{1}{n+1}\binom{2n}{n}.\end{equation*}
Note that each tree in $\mathcal{T}_{n}$ has $n+1$ nodes. The \emph{level} of a node $v$ is the length of the unique path from $v$ to the root; thus the root is at level $0$. Trees will be drawn with the root at the top.

For trees, BFS explores the entire neighborhood of the current node and then continues its execution recursively with the children of the current node one-by-one, while DFS follows one path for as long as possible and backtracks when stuck. DFS is equivalent to a pre-order traversal of the tree, that is, the root is visited first and then recursively the subtrees from left to right. Figure \ref{fig:bfs_dfs_order} illustrates the orders in which BFS and DFS will explore the nodes of a particular tree, if the initial node is the root.
\begin{figure}[ht!]
\centering
\begin{subfigure}{.48\textwidth}
  \centering
    \begin{tikzpicture}  
  [baseline=0, scale=0.8,auto=center]
    
  \node[circle, draw, fill=black!100, inner sep=0pt, minimum width=4pt] (a1) at (0,10) {};
  \node[above] at (0,10) {0}; 
  \node[circle, draw, fill=black!100, inner sep=0pt, minimum width=4pt] (a2) at (-1.5,9) {};
  \node[left] at (-1.5,9) {1};
  \node[circle, draw, fill=black!100, inner sep=0pt, minimum width=4pt] (a3) at (0,9) {};
  \node[left] at (0,9) {2};
  \node[circle, draw, fill=black!100, inner sep=0pt, minimum width=4pt] (a4) at (1.5,9) {};
  \node[left] at (1.5,9) {3};
  \node[circle, draw, fill=black!100, inner sep=0pt, minimum width=4pt] (a5) at (-2.5,8) {}; 
  \node[left] at (-2.5,8) {4};
  \node[circle, draw, fill=black!100, inner sep=0pt, minimum width=4pt] (a6) at (-0.5,8) {}; 
  \node[left] at (-0.5,8) {5};
  \node[circle, draw, fill=black!100, inner sep=0pt, minimum width=4pt] (a7) at (-2.5,7) {}; 
  \node[left] at (-2.5,7) {7};
  \node[circle, draw, fill=black!100, inner sep=0pt, minimum width=4pt] (a8) at (1.5,8) {};
  \node[left] at (1.5,8) {6};
  \node[circle, draw, fill=black!100, inner sep=0pt, minimum width=4pt] (a9) at (0.5,7) {};
  \node[left] at (0.5,7) {8};
  \node[circle, draw, fill=black!100, inner sep=0pt, minimum width=4pt] (a10) at (2.5,7) {};
  \node[left] at (2.5,7) {9};
  
  \draw (a1) -- (a2);
  \draw (a1) -- (a3);  
  \draw (a1) -- (a4);  
  \draw (a2) -- (a5);  
  \draw (a2) -- (a6);
  \draw (a5) -- (a7);  
  \draw (a4) -- (a8);  
  \draw (a8) -- (a9);  
  \draw (a8) -- (a10);  
\end{tikzpicture}
\vspace{-5cm}
  \caption{The order of exploration for the BFS algorithm.}
  % \label{fig:bfs_order}
\end{subfigure}\hspace{5mm}
\begin{subfigure}{.48\textwidth}
  \centering
      \begin{tikzpicture}  
  [baseline=0, scale=0.8,auto=center]
    
  \node[circle, draw, fill=black!100, inner sep=0pt, minimum width=4pt] (a1) at (0,10) {};
  \node[above] at (0,10) {0}; 
  \node[circle, draw, fill=black!100, inner sep=0pt, minimum width=4pt] (a2) at (-1.5,9) {};
  \node[left] at (-1.5,9) {1};
  \node[circle, draw, fill=black!100, inner sep=0pt, minimum width=4pt] (a3) at (0,9) {};
  \node[left] at (0,9) {5};
  \node[circle, draw, fill=black!100, inner sep=0pt, minimum width=4pt] (a4) at (1.5,9) {};
  \node[left] at (1.5,9) {6};
  \node[circle, draw, fill=black!100, inner sep=0pt, minimum width=4pt] (a5) at (-2.5,8) {}; 
  \node[left] at (-2.5,8) {2};
  \node[circle, draw, fill=black!100, inner sep=0pt, minimum width=4pt] (a6) at (-0.5,8) {}; 
  \node[left] at (-0.5,8) {4};
  \node[circle, draw, fill=black!100, inner sep=0pt, minimum width=4pt] (a7) at (-2.5,7) {}; 
  \node[left] at (-2.5,7) {3};
  \node[circle, draw, fill=black!100, inner sep=0pt, minimum width=4pt] (a8) at (1.5,8) {};
  \node[left] at (1.5,8) {7};
  \node[circle, draw, fill=black!100, inner sep=0pt, minimum width=4pt] (a9) at (0.5,7) {};
  \node[left] at (0.5,7) {8};
  \node[circle, draw, fill=black!100, inner sep=0pt, minimum width=4pt] (a10) at (2.5,7) {};
  \node[left] at (2.5,7) {9};
  
  \draw (a1) -- (a2);
  \draw (a1) -- (a3);  
  \draw (a1) -- (a4);  
  \draw (a2) -- (a5);  
  \draw (a2) -- (a6);
  \draw (a5) -- (a7);  
  \draw (a4) -- (a8);  
  \draw (a8) -- (a9);  
  \draw (a8) -- (a10);  
\end{tikzpicture}
\vspace{-5cm}
  \caption{The order of exploration for the DFS algorithm.}
  % \label{fig:dfs_order}
\end{subfigure}
\caption{The $\bfsScore$ and $\dfsScore$ of each node in a tree with $9$ edges.}
\label{fig:bfs_dfs_order}
\end{figure}
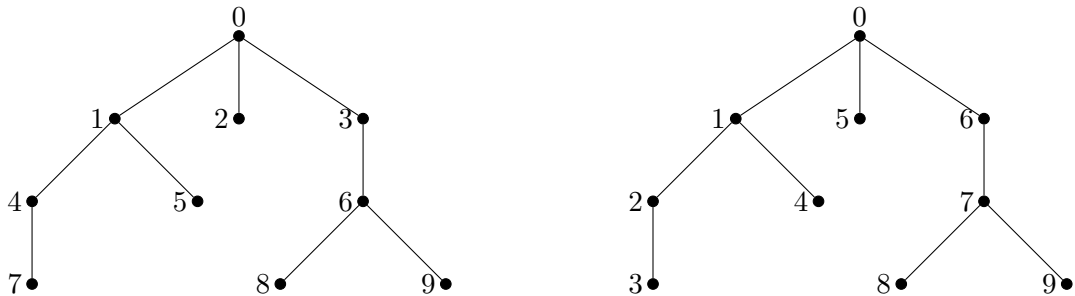
\newpage
\noindent Let us call the label received by a node $v$, when doing the traversals, the \emph{bfsScore} and the \emph{dfsScore} of $v$, respectively. We define them formally as follows: 
\begin{align*}
\bfsScore (v) &\coloneqq \text{the number of nodes visited before $v$ when using BFS}. \\
\dfsScore (v) &\coloneqq \text{the number of nodes visited before $v$ when using DFS}.
\end{align*}
It is important to note that $\bfsScore(v)$ and $\dfsScore(v)$ also give the number of steps taken in searching for $v$ when one starts at the root and uses BFS and DFS, respectively. Thus, these two quantities measure the time complexity of the two algorithms when the target is $v$.

Our goal will be to compare the average time complexity of BFS and DFS---i.e., the average $\bfsScore$ and the average $\dfsScore$---under a minimal set of assumptions for the search. In particular, consider a search from the root of some tree $T\in \mathcal{T}_{n}$ to a specific target node $x$ in $T$, by using BFS or DFS. We assume that we have a function telling us whether the current node is the target $x$, but we do not know the number of children of the current node or anything else about it. If the target $x$ is chosen uniformly at random among all the nodes in the trees in $\mathcal{T}_{n}$, then it is readily seen that the expected number of steps needed to reach $x$ is the same for both BFS and DFS, namely 
$$\frac{|\mathcal{T}_{n}|\sum_{i=1}^{n}i}{n|\mathcal{T}_{n}|} = \frac{1}{n}\frac{n(n+1)}{2} = \frac{n+1}{2}.$$ Would this change if the level of the target node is fixed? We compare the expected number of steps of the two algorithms, if $x$ is selected uniformly at random from all nodes at a fixed level $\ell$ among the trees in $\mathcal{T}_{n}$? For instance, when $n=3$ and $\ell=2$, we have $5$ such nodes as seen in Figure~\ref{fig:n3l2}. Note that under this setting, the chance for $T$ to be each of the trees in $\mathcal{T}_{n}$ is not the same, but it is proportional to the number of nodes at level $\ell$.
\begin{figure}[ht!]
    \centering
\scalebox{0.9}{
     \begin{tikzpicture}  
\begin{scope}
\node[circle, draw, fill=black!100, inner sep=0pt, minimum width=4pt] (a1) at (-5,10) {};
\node[circle, draw, fill=black!100, inner sep=0pt, minimum width=4pt] (a2) at (-5,9) {};
\node[circle, draw, fill=black!100, inner sep=0pt, minimum width=4pt] (a3) at (-5,8) {};
\draw (-5,8) [fill] circle (5pt) [fill=none];
\node[circle, draw, fill=black!100, inner sep=0pt, minimum width=4pt] (a4) at (-5,7) {};
  
  \draw (a1) -- (a2);
  \draw (a2) -- (a3);
  \draw (a3) -- (a4);
\end{scope}

\begin{scope}[xshift=4cm]
\node[circle, draw, fill=black!100, inner sep=0pt, minimum width=4pt] (a1) at (-5,10) {};
\node[circle, draw, fill=black!100, inner sep=0pt, minimum width=4pt] (a2) at (-5,9) {};
\node[circle, draw, fill=black!100, inner sep=0pt, minimum width=4pt] (a3) at (-6,8) {};
\draw (-6,8) [fill] circle (5pt) [fill=none];
\node[circle, draw, fill=black!100, inner sep=0pt, minimum width=4pt] (a4) at (-4,8) {};
\draw (-4,8) [fill] circle (5pt) [fill=none];
  
  \draw (a1) -- (a2);
  \draw (a2) -- (a3);
  \draw (a2) -- (a4);
\end{scope}

\begin{scope}[xshift=8cm]
\node[circle, draw, fill=black!100, inner sep=0pt, minimum width=4pt] (a1) at (-5,10) {};
\node[circle, draw, fill=black!100, inner sep=0pt, minimum width=4pt] (a2) at (-6,9) {};
\node[circle, draw, fill=black!100, inner sep=0pt, minimum width=4pt] (a3) at (-4,9) {};
\node[circle, draw, fill=black!100, inner sep=0pt, minimum width=4pt] (a4) at (-6,8) {};
\draw (-6,8) [fill] circle (5pt) [fill=none];
  
  \draw (a1) -- (a2);
  \draw (a1) -- (a3);
  \draw (a2) -- (a4);
\end{scope}

\begin{scope}[xshift=12cm]
\node[circle, draw, fill=black!100, inner sep=0pt, minimum width=4pt] (a1) at (-5,10) {};
\node[circle, draw, fill=black!100, inner sep=0pt, minimum width=4pt] (a2) at (-6,9) {};
\node[circle, draw, fill=black!100, inner sep=0pt, minimum width=4pt] (a3) at (-4,9) {};
\node[circle, draw, fill=black!100, inner sep=0pt, minimum width=4pt] (a4) at (-4,8) {};
\draw (-4,8) [fill] circle (5pt) [fill=none];
% \node[circle, draw, fill=none, inner sep=0pt, minimum width=4pt] (a4) at (-4,8) {};
  
  \draw (a1) -- (a2);
  \draw (a1) -- (a3);
  \draw (a3) -- (a4);
\end{scope}

\begin{scope}[xshift=16cm]
\node[circle, draw, fill=black!100, inner sep=0pt, minimum width=4pt] (a1) at (-5,10) {};
\node[circle, draw, fill=black!100, inner sep=0pt, minimum width=4pt] (a2) at (-6,9) {};
\node[circle, draw, fill=black!100, inner sep=0pt, minimum width=4pt] (a3) at (-5,9) {};
\node[circle, draw, fill=black!100, inner sep=0pt, minimum width=4pt] (a4) at (-4,9) {};
  
  \draw (a1) -- (a2);
  \draw (a1) -- (a3);
  \draw (a1) -- (a4);
\end{scope}
% \draw [draw=black] (-4,3) rectangle (15,5);
\end{tikzpicture}
}
    \caption{The target node is selected uniformly at random among the five circled nodes at level $\ell=2$, when we have $n=3$ edges.}
    \label{fig:n3l2}
\end{figure}
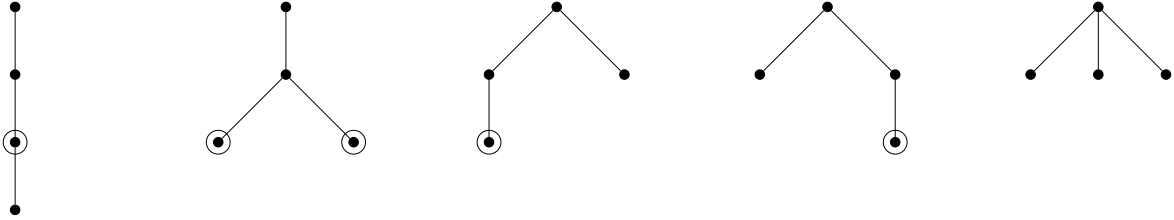

\noindent Intuition suggests that BFS is a better choice if $\ell$ is relatively small, while DFS should be preferred when $\ell$ is large. What is the exact place where DFS becomes faster, and is this place unique? To answer this question, we need to compute and compare the two expectations given below, where $\lev(T,\ell)$ denotes the set of nodes at level $\ell$ in the tree $T$:
\begin{equation*}
\mathop{\mathbb{E}}_{\substack{x\in \lev(T,\ell) \\T\in \mathcal{T}_{n}}}(\bfsScore (x)) \quad\text{and}\quad \mathop{\mathbb{E}}_{\substack{x\in \lev(T,\ell) \\T\in \mathcal{T}_{n}}}(\dfsScore (x)).
\end{equation*}
Each of the two expectations is a fraction with denominator equal to the total number of nodes at level $\ell$ among the trees in $\mathcal{T}_{n}$. A result of Dershowitz and Zaks gives us this denominator.
\begin{theorem}[Dershowitz--Zaks {\cite[Corollary 5.3]{dersh}}] \label{t-dersh}
The total number of nodes at level $\ell$ among trees in $\mathcal{T}_{n}$ is equal to
\begin{equation}
\label{eq:N}
\frac{2\ell +1}{2n+1}\binom{2n+1}{n-\ell }.
\end{equation}
\end{theorem}
Therefore, comparing the average time complexity of BFS and DFS is equivalent to comparing the two sums defined below:
\begin{equation*}
\totalB(n,\ell ) \coloneqq\sum_{T\in{\mathcal T}_{n}} \sum\limits_{v\in \lev(T,\ell )} \bfsScore(v) \quad \text{and} \quad
\totalD(n,\ell ) \coloneqq\sum_{T\in{\mathcal T}_{n}} \sum\limits_{v\in \lev(T,\ell )} \dfsScore(v).
\end{equation*}
The main result of this paper is the following theorem.
\begin{theorem}
\label{th:threshold}
For each $n\geq 1$, there is a threshold $\lambda_n$ such
that $\totalB(n,\ell ) \leq \totalD(n,\ell )$
if and only if $\ell \leq \lambda_n$. Moreover,
as $n \to \infty$, we have
\[
\lambda_n \sim \lambda\sqrt{n}
\]
where $\lambda \approx 0.789004$ is the unique positive root of the
equation
\[
xe^{-x^2} - 2\sqrt{\pi}\lbrb{\Phi\lbrb{2x\sqrt{2}}-
\Phi\lbrb{x\sqrt{2}}} = 0
\]
and $\Phi$ is the cumulative distribution function of the standard normal distribution.
\end{theorem}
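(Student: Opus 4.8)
The plan is to reduce the whole comparison to one analytic inequality, in two stages: first derive closed-form expressions for $\totalB(n,\ell)$ and $\totalD(n,\ell)$, and then analyze the sign of their difference, both for fixed $n$ (to get $\lambda_n$) and asymptotically (to identify $\lambda$).

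\emph{Step 1: exact formulas.} For BFS, note that the level-$\ell$ nodes are visited left to right, so if $v$ is the $j$-th such node and $T$ has $N$ nodes on levels $0,\dots,\ell-1$, then $\bfsScore(v)=N+j-1$; summing over $v$ and over $T$ gives
\[
\totalB(n,\ell)=\sum_{T\in\mathcal T_n}\left(|\lev(T,\ell)|\sum_{i=0}^{\ell-1}|\lev(T,i)|+\binom{|\lev(T,\ell)|}{2}\right),
\]
so everything reduces to evaluating $\sum_{T\in\mathcal T_n}|\lev(T,i)|\,|\lev(T,\ell)|$ for $0\le i\le\ell$. Under the contour encoding of ordered trees by Dyck paths of length $2n$, $|\lev(T,j)|$ is the number of up-crossings of the half-integer height $j-\tfrac12$, so this sum is a lattice-path count that evaluates to a hypergeometric sum of products of binomial coefficients; Theorem~\ref{t-dersh} supplies the common denominator $D(n,\ell)$. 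For DFS, write $\totalD(n,\ell)$ as the number of triples $(T,u,v)$ with $v\in\lev(T,\ell)$ and $u$ preceding $v$ in pre-order; peeling off the $\ell$ proper ancestors of $v$ and classifying the remaining $u$ by the level $m$ of the lowest common ancestor of $u$ and $v$, each residual configuration again corresponds to a lattice path with prescribed heights at marked times. The sum of the resulting products of binomial coefficients over $m$ is where a nonobvious lattice-path identity enters; proving it combinatorially collapses the sum into a closed form for $\totalD(n,\ell)$, hence for $\mathbb E(\dfsScore)$.

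\emph{Step 2: the threshold for fixed $n$.} Let $\Delta(n,\ell)\coloneqq\totalD(n,\ell)-\totalB(n,\ell)$. Using the closed forms, I would show that for each fixed $n$ the sign of $\Delta(n,\ell)$ changes at most once as $\ell$ grows, passing from nonnegative to negative; then $\{\ell:\totalB(n,\ell)\le\totalD(n,\ell)\}$ is an initial segment and $\lambda_n$ is its right endpoint. The natural mechanism is to factor $\Delta(n,\ell)$ as a manifestly positive quantity times a factor monotone in $\ell$, exploiting log-concavity of the binomial coefficients that appear, or equivalently to show that the ratio $\totalB(n,\ell)/\totalD(n,\ell)$ is strictly increasing in $\ell$. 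This already gives the first assertion of the theorem.

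\emph{Step 3: asymptotics and the equation for $\lambda$.} Put $\ell=\lfloor x\sqrt n\rfloor$ with $x>0$ fixed. After dividing by $n$, the formulas of Step~1 exhibit $\mathbb E(\bfsScore)$ and $\mathbb E(\dfsScore)$ as finite sums of ratios of binomial coefficients to which a local central limit theorem applies — equivalently, the rescaled random tree converges to the Brownian continuum random tree, and the limits become integrals against the occupation measure of a normalized Brownian excursion, whose density at a level is explicitly known. Evaluating these integrals, I expect $n^{-1}\mathbb E(\bfsScore)\to B(x)$ and $n^{-1}\mathbb E(\dfsScore)\to D(x)$ for explicit elementary functions $B,D$, with $B(x)=D(x)$ simplifying to
\[
xe^{-x^2}-2\sqrt\pi\left(\Phi(2x\sqrt2)-\Phi(x\sqrt2)\right)=0 .
\]
A short calculus argument then shows that the left-hand side $\phi(x)$ satisfies $\phi(0^+)<0$ and $\phi(x)>0$ for large $x$, and changes sign exactly once on $(0,\infty)$ — for uniqueness one can, for instance, check that $e^{x^2}\phi(x)$ is strictly increasing on the set $\{\phi\le0\}$, or bound the zeros of $\phi$ via the sign of $\phi'$ — and the crossing is numerically $\lambda\approx0.789004$.

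\emph{Step 4: conclusion, and the main obstacle.} Combining the convergence $n^{-1}\bigl(\mathbb E(\dfsScore)-\mathbb E(\bfsScore)\bigr)\to D(x)-B(x)$, uniform for $x$ in compact subsets of $(0,\infty)$, with the fixed-$n$ monotonicity of Step~2, forces the single crossing point $\lambda_n/\sqrt n$ to converge to the unique root $\lambda$ of $\phi$; that is, $\lambda_n\sim\lambda\sqrt n$. I expect the two genuinely hard parts to be (i) discovering and proving the lattice-path identity that puts $\totalD(n,\ell)$ in closed form in Step~1, and (ii) making the Brownian-excursion limit of Step~3 uniform enough in $x$ to transfer the single sign change back to the $\lambda_n$'s, which is delicate because $\totalD$ enters as a nontrivial sum rather than a single binomial term.
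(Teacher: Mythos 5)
Your overall skeleton (exact formulas for $\totalD$ and $\totalB$, local-CLT/Brownian-excursion asymptotics at the scale $\ell=s\sqrt n$, a fixed-$n$ single-crossing property, and then transfer of the crossing to the root of $xe^{-x^2}-2\sqrt{\pi}(\Phi(2x\sqrt2)-\Phi(x\sqrt2))$) is essentially the paper's route, but the proposal leaves precisely the hard steps as assertions, and the one concrete mechanism you offer for the crucial step fails. The gap is Step~2. You propose to get the single sign change of $\Delta(n,\ell)=\totalD(n,\ell)-\totalB(n,\ell)$ either by factoring $\Delta$ as a positive quantity times a factor monotone in $\ell$, or by showing $\totalB(n,\ell)/\totalD(n,\ell)$ is strictly increasing in $\ell$. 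Neither can work as stated: since the only tree contributing at $\ell=n$ is the path, $\totalB(n,n)=\totalD(n,n)=n$, and also $\totalB(n,0)=\totalD(n,0)=0$, so $\Delta$ vanishes at both endpoints and the ratio equals $1$ at $\ell=n$; a ratio that is $<1$ before the threshold and $>1$ after it must come back down to $1$, so it is not monotone, and log-concavity of single binomial coefficients does not obviously control the three-term expression $f_n(\ell)=\totalB-\totalD=(\ell+1)\binom{2n}{n-\ell-1}-\binom{2n}{n-2\ell-1}-2\sum_{j=\ell+1}^{2\ell}\binom{2n}{n-j}$. The paper's actual argument here is the substantive part of the proof: it studies the increments $g_n(\ell)=f_n(\ell)-f_n(\ell-1)$, writes $g_n$ as a difference $A(\ell)-B(\ell)$ of two explicit terms, shows the ratio $A/B$ first increases and then decreases by proving concavity of an explicit degree-five polynomial $h(\ell)$ (using $2\ell^2<3n+2$ and $n\ge10$, with small $n$ checked by hand), and concludes that $g_n$ changes sign exactly twice, which together with the boundary values $f_n(0)=f_n(n)=0$, $f_n(1)<0$, $f_n(n-1)>0$ forces a unique crossing. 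Without an argument of this kind (or some genuinely different proof of unimodality-type behaviour of $f_n$), the first assertion of the theorem, and hence the transfer in your Step~4, is unproven.

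Secondarily, Step~1 is also only a plan: the closed form $\totalD(n,\ell)=\ell\binom{2n}{n-\ell}$ requires the nontrivial lattice-path identity (proved in the paper by a decomposition at the first passage to height $\ell+1$ plus the reflection principle), and the $\totalB$ formula requires the second-moment identities for the occupation measure $v_n(\ell)$ (the paper imports these from Tak\'acs rather than re-deriving the ``hypergeometric sum''). Your Steps~3--4 are sound in outline; note that once the fixed-$n$ single crossing is available, you do not even need uniformity in $x$ on compacts --- pointwise asymptotics of $f_n$ at $\ell=(\lambda\pm\varepsilon)\sqrt n$ suffice to pin $\lambda_n/\sqrt n$ into $(\lambda-\varepsilon,\lambda+\varepsilon)$ --- so the obstacle you flag as (ii) is lighter than you fear, while the obstacle you did not address (the fixed-$n$ uniqueness) is the real one.
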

Therefore, BFS is asymptotically faster than DFS if and only if $\ell\leq \lambda\sqrt{n}$, for the given constant $\lambda$. We also give explicit formulas for $\totalB(n,\ell )$ and $\totalD(n,\ell )$, and thus for the average time complexities of BFS and DFS in the considered settings. 
% In addition, we find the average time complexity of BFS, when the same problem is considered for an arbitrary family of Galton-Watson trees.

\subsection{Related works}
\label{subsec:relWorks}
The average performance of search algorithms over different families of graphs is a direction in which not many exact results are known. An article of Everitt and Hutter \cite{everitt} contains the only study we found considering the average-case complexities of BFS and DFS on trees for a fixed target level. However, their results are about complete binary trees with given depth as opposed to ordered trees with arbitrary depth, as in our case. Furthermore, the way they sample the target nodes is different and the obtained expressions for the expected running times, in case of a single target node, are only approximations (see \cite[Section 4]{everitt}). Another article by Stone and Sipala \cite{stone} investigates the average time complexity of two variations of DFS in binary trees. 

If the number of the vertices in the tree is not known in advance, one can use the simple algorithm of Knuth \cite{knuth1975estimating} to obtain an unbiased estimator of this number, before using our results. Knuth's method can be applied when having a tree in any given family of Galton--Watson trees. Some alternative methods for estimation of the size of the tree were suggested by Kilby et al. \cite{kilby2006estimating}. Two useful articles on algorithm selection for combinatorial search problems are the work of Rice \cite{rice1976algorithm} and the survey by Kotthoff \cite{kotthoff}.

Several other results exist on both BFS and DFS, and their variations. The average performance of the $A^{*}$ algorithm (which reduces to BFS if no heuristic information is available) is studied in \cite{sen2002average}, when searching in graphs. A recent article by Jacquet and Janson \cite{jacjanson} gives asymptotic results concerning several statistics when DFS begins at a random node of a random directed graph with certain distribution of the outdegrees for its vertices. In addition, DFS has been used to obtain some classical results in enumerative combinatorics \cite{ges1, ges2}. 
% Finally, a groundbreaking recent study by Ghaffari et. al \cite{ghaffari2023nearly} describes a parallel DFS algorithm that runs faster than the sequential algorithm with just $\poly(\log(n))$ processors.
% \newpage
\subsection{Outline of the paper}
\label{sec:summary}
The organization of our paper is as follows:
\begin{itemize}[leftmargin=25bp] \setlength\itemsep{5bp}

\item In Section~\ref{sec:totalD}, we find a simple exact formula for $\totalD(n,\ell )$. Our proof for this formula utilizes a classical bijection between ordered trees and Dyck paths; this allows us to translate the problem to establishing an identity, for which we give a combinatorial proof by interpreting both sides as counting certain lattice paths. 

\item In Section~\ref{sec:totalB}, we obtain a summation formula for $\totalB(n,\ell )$ via results of Tak\'{a}cs \cite{Takacs-99} used to study the occupation measure of Brownian excursions. In addition, we find the asymptotic behavior of $\totalB(n,\ell )$ depending on $\ell$. 

\item Section~\ref{sec:threshold} is devoted to the proof of our main result---Theorem~\ref{th:threshold}---which establishes a unique threshold for which BFS becomes asymptotically faster than DFS. In Section~\ref{subsec:ratios}, we show that when $n^{1/3}\prec\ell\prec\sqrt n$, then BFS can be arbitrarily faster than DFS, while if $\sqrt{n}\prec \ell \prec \sqrt{n\ln{n}}$, then DFS can be at most twice as fast. Here and throughout the paper, $f(x) \prec g(x)$ means $f(x) = o(g(x))$.

\item In Section~\ref{sec:fibo}, we find a different formula for $\totalB(n,\ell )$ by using generating function techniques. This approach reveals a curious identity expressing the generating function for $\totalB(n,\ell )$ in terms of the Catalan generating function and Fibonacci polynomials; see Theorem~\ref{t-gf}~(b). We then derive an alternative formula for $\totalB(n,\ell )$ by extracting coefficients from the generating function formula we found; see Theorem~\ref{t-totalBsum}. Along the way---in Section \ref{subsec:lemmas}---we give several identities related to Fibonacci polynomials needed for proving Theorem~\ref{t-gf}. 

\item Section~\ref{sec: CRT} establishes an expression for the asymptotic value of $\totalB(n,\ell)$, for any given $\ell$, when we have an arbitrary Galton--Watson tree; see Equation~\eqref{eq:totalB-GW-asymp}. 

\item In Section~\ref{sec:truncDFS}, we introduce a new algorithm called \emph{truncated DFS}, which is faster in expectation than both BFS or DFS when we know the level $\ell$ in advance. We show that this algorithm is twice as fast as BFS, when $\ell = s\sqrt{n}$, for any constant $s$. 
% Theorem~\ref{th:truncFormula}. 

\item We conclude in Section~\ref{sec:q} by presenting several questions for future study.
\end{itemize}

\section{Exact formula for the average time complexity of DFS}
\label{sec:totalD}
In this section, we establish a surprisingly simple formula for $\totalD(n,\ell )$ by using a classical correspondence between ordered trees and lattice paths. When we refer to a \emph{lattice path} from $(a,b)$ to $(c,d)$, we will mean a path in $\mathbb{Z}^{2}$ starting at $(a,b)$, ending at $(c,d)$, and consisting of a sequence of steps from a prescribed \emph{step set} $S$. We will work with two kinds of lattice paths, defined as follows:
\begin{itemize}[leftmargin=25bp] \setlength\itemsep{5bp}
\item A \emph{diagonal path} is a lattice path with step set $S = \{(0,1), (1,0)\}$, with $(0,1)$ and $(1,0)$ called $N$ and $E$ steps, respectively (from North and East).
\item A \emph{horizontal path} is a lattice path with step set $S = \{(1,1), (1,-1)\}$, with $(1,1)$ and $(1,-1)$ called $U$ and $D$ steps, respectively (from Up and Down).
\end{itemize}
A \emph{Dyck path} of \emph{semilength} $n$ is a diagonal path from $(0,0)$ to $(n,n)$ that stays weakly above the diagonal $y=x$. Note that a simple bijection between horizontal and diagonal paths is to send a horizontal $U$ (respectively $D$) step to a diagonal $N$ (respectively $E$) step. Note also that the set of Dyck paths maps to the set of horizontal paths from $(0,0)$ to $(2n,0)$, staying weakly above $y=0$. 

Given $a\leq b$ and $c \leq d$, let $[(a,b)\rightarrow(c,d)]$ denotes the number of diagonal paths from $(a,b)$ to $(c,d)$ that stay weakly above the diagonal $y=x$. We shall need the following well-known fact from lattice path enumeration.

\begin{lemma}[{\cite[Corollary 10.3.2]{krat}}]
    \label{lemma:paths}
    For every $j\geq i$, the number of diagonal paths from $(0,0)$ to $(i,j)$ that stay weakly above $y=x$ is given by
    \begin{equation*}
        [(0,0)\rightarrow(i,j)] = \frac{j+1-i}{j+i+1}\binom{j+i+1}{i}.
    \end{equation*} 
\end{lemma}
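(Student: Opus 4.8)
The plan is to prove Lemma~\ref{lemma:paths} by the classical reflection principle of André, and then check that the resulting binomial difference agrees with the stated closed form. First I would encode a diagonal path from $(0,0)$ to $(i,j)$ as a word in $i$ steps $D=(1,0)$ and $j$ steps $U=(0,1)$, so that there are $\binom{i+j}{i}$ of them in total, of which the ones we want are exactly those whose every prefix contains at least as many $U$'s as $D$'s (equivalently, the path never leaves the half-plane $y\ge x$). Call a path \emph{bad} if it reaches the line $y=x-1$ at some point; taking the first such point and reflecting the initial segment of the path across that line transforms the start $(0,0)$ into $(1,-1)$ and sets up a bijection between bad paths and \emph{all} lattice paths from $(1,-1)$ to $(i,j)$ with the same step set (such a path necessarily crosses $y=x-1$ since $(1,-1)$ lies strictly below it while $(i,j)$, with $j\ge i$, lies weakly above it). The latter paths use $i-1$ steps $D$ and $j+1$ steps $U$, so there are $\binom{i+j}{i-1}$ of them, giving
\[
[(0,0)\rightarrow(i,j)] = \binom{i+j}{i} - \binom{i+j}{i-1}.
\]

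The second step is a one-line simplification: factoring out $\binom{i+j}{i}$ gives $\binom{i+j}{i}\left(1-\tfrac{i}{j+1}\right)=\tfrac{j+1-i}{j+1}\binom{i+j}{i}$, and since $\tfrac{1}{j+1}\binom{i+j}{i}=\tfrac{1}{i+j+1}\binom{i+j+1}{i}$ (both equal $\tfrac{(i+j)!}{i!\,(j+1)!}$), this equals $\tfrac{j+1-i}{i+j+1}\binom{i+j+1}{i}$, as claimed. The hypothesis $j\ge i$ ensures the count is nonnegative and the reflection bookkeeping is meaningful; the boundary cases $i=0$ (the unique all-$U$ path) and $i=j$ (the Catalan number $\tfrac{1}{2n+1}\binom{2n+1}{n}$) can be verified directly as a sanity check.

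As an alternative that recovers the stated form $\tfrac{j+1-i}{i+j+1}\binom{i+j+1}{i}$ with no algebra at all, I would prepend a $U$ step to reduce to counting words in $i$ steps $D$ and $j+1$ steps $U$ whose every nonempty prefix has strictly more $U$'s than $D$'s, and invoke the Dvoretzky--Motzkin cycle lemma: among the $i+j+1$ cyclic rotations of any such word, exactly $(j+1)-i$ are of this ``strictly positive'' type. Double counting the pairs (word, rotation making it strictly positive) over all $\binom{i+j+1}{i}$ words then yields the count immediately. I do not expect a genuine obstacle, since this is a textbook ballot-type identity (as the citation to Krattenthaler indicates); the only points needing care are the off-by-one between the weak inequality $y\ge x$ for the path and the strict inequality used in the ballot/cycle-lemma formulation, checking that the reflection is taken across the line $y=x-1$ and carries $(0,0)$ to $(1,-1)$, and—in the cycle-lemma argument—the standard subtlety that periodic words must be handled by counting rotations with multiplicity.
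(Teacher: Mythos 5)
Your proposal is correct; note that the paper does not prove Lemma~\ref{lemma:paths} at all but simply cites Krattenthaler's survey, so your write-up supplies a self-contained proof of a fact the paper takes as known. Both of your routes are sound: the reflection across $y=x-1$ does carry $(0,0)$ to $(1,-1)$ and gives $\binom{i+j}{i}-\binom{i+j}{i-1}$, which simplifies to the stated form via $\binom{i+j}{i-1}=\frac{i}{j+1}\binom{i+j}{i}$ and $\frac{1}{j+1}\binom{i+j}{i}=\frac{1}{i+j+1}\binom{i+j+1}{i}$; and the cycle-lemma variant is also complete, since counting rotations by shift amount (so periodic words are counted with multiplicity) gives $N\,(i+j+1)=(j+1-i)\binom{i+j+1}{i}$ by double counting, where the prepended $U$ correctly converts the weak prefix condition into the strict one. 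Either argument would serve as a proof of the cited lemma.
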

\noindent For example, when $i=n$ and $j=n$, we get that the number of Dyck paths of semilength $n$ is 
\begin{equation*}
[(0,0)\rightarrow(n,n)] = \frac{1}{2n+1}\binom{2n+1}{n} = \frac{1}{n+1}\binom{2n}{n} = C_n,
\end{equation*}
the $n$th Catalan number. 

% For completeness, we give a proof of this fact  below, which utilizes the cycle lemma.
% \begin{proof}
%     We have $j$ $U$ steps and $i$ $D$ steps in the lattice paths from $(0,0)$ to $(i,j)$. Let us look at the corresponding sequences of $U$s and $D$s. Assume that we have an arbitrary sequence of $j+1$ $U$ steps and $i$ $D$ steps. Consider two such sequences to be in the same class, if the first one is a cyclic shift of the second. We have exactly $j+1+i$ cyclic shifts for each of the $\binom{j+i+1}{i}$ possible sequences. Lemma~\ref{lemma:cycle} tells us that exactly $j+1-i$ of them are $1$-dominating, that is, each of its prefixes contains more $U$s than $D$s. The number of sequences in each class divides $j+1-i$, since it is possible for a sequence to be the same as one of its cyclic shifts. However, using the cycle lemma, we can see that the fraction of the sequences in each class that are $1$-dominating is always $\frac{j+1-i}{j+i+1}$. Thus the number of the different $1$-dominating sequences is $\frac{j+1-i}{j+i+1}\binom{j+i+1}{i}$. Every such $1$-dominating sequence must begin with an $U$ step. After removing that $U$ step, we get a sequence corresponding to a path of the wanted kind and the correspondence is one-to-one.
% \end{proof}

Below is our main result of this section.

\begin{theorem}
\label{th:totalD}
For every $n\geq 0$ and $0 \leq \ell \leq n$, we have
    \begin{equation}
    \label{eq:totalD}
        \totalD(n,\ell ) = \ell\binom{2n}{n-\ell }.
    \end{equation}
\end{theorem}
\begin{proof}
We will use a folklore bijection between ordered trees and Dyck paths, known in the literature as the \emph{glove} or the \emph{accordion} bijection \cite{callan}. Execute a pre-order traversal for an ordered tree (i.e., run DFS without a target). During this process, write a $N$ step whenever we go further away from the root, and an $E$ step every time we go closer to the root. Whenever we are at a given node $v$ on level $\ell$ in an ordered tree, we will be on the line $y=x+\ell$ in the corresponding Dyck path since we have $\ell$ more $N$ steps than $E$ steps and thus $v$ has coordinates $(i,i+\ell )$ for some $i\geq 0$. It is readily seen that $\dfsScore(v)$ is equal to the number of $N$ steps in the prefix of the corresponding Dyck path that ends at the first point on the line $y=x+\ell $ corresponding to $v$. Note that the last step in this prefix is always an $N$ and the last step in all other prefixes ending at points corresponding to $v$ is an $E$ since we go to $v$ from a node at level $\ell+1$, when we visit it for the second time or more. 

It follows that the contribution of a given tree $T$ to $\totalD(n,\ell )$ is the sum of the number of $N$ steps in all prefixes $p$ of the Dyck path corresponding to $T$, such that $p$ ends with an $N$ step at the line $y=x+\ell $. Therefore, it suffices to count pairs $(\mu,\nu)$ of diagonal paths---both staying weakly above the diagonal $y=x$---such that $\mu$ begins at $(0,0)$ and ends with an $N$ step at $y=x+\ell$, $\nu$ begins at the endpoint of $\mu$ and ends at the point $(n,n)$, and the pair $(\mu,\nu)$ is weighted by the number of $N$ steps in $\mu$ (see Figure~\ref{fig:pairsDP} below). 

 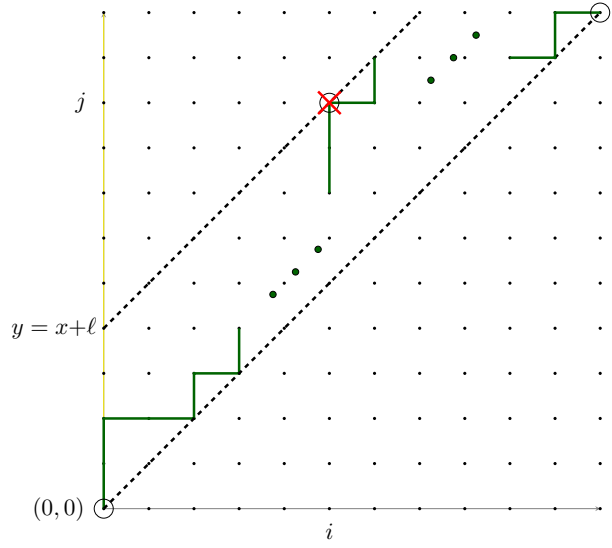
\begin{figure}[!ht]
    \centering
    \scalebox{0.5}{
\begin{tikzpicture}
\draw[axis line style] (0,0)-- (0,11);
\draw[axis line style] (0,0)-- (11,0);
\draw[yellow] (0,0) grid (0,11);
\foreach \i in {0,...,11} 
    \foreach \j in {0,...,11}
        \fill (\i,\j) circle (1pt);
\draw[ultra thick,dgreen,postaction=decorate] (0,0) -- (0,2) -- (2,2) -- (2,3) -- (3,3) -- (3,4);
\draw[ultra thick,dgreen,postaction=decorate] (5,7) -- (5,8);
\draw[ultra thick,dashed, black,postaction=decorate] (0,0) -- (11,11);
\draw[ultra thick,dashed, black,postaction=decorate] (0,4) -- (7,11);
\draw[fill=dgreen] (3.75,4.75) circle (2pt);
\draw[fill=dgreen] (4.25,5.25) circle (2pt);
\draw[fill=dgreen] (4.75,5.75) circle (2pt);
\draw[ultra thick,dgreen,postaction=decorate] (5,8) -- (5,9) -- (6,9) -- (6,10) ;
\node[scale=1.35,align=left] at (-1,0) {$(0,0)$};
\draw[fill=dgreen] (7.25,9.5) circle (2pt);
\draw[fill=dgreen] (7.75,10) circle (2pt);
\draw[fill=dgreen] (8.25,10.5) circle (2pt);
\draw[ultra thick,dgreen,postaction=decorate] (9,10) -- (10,10) -- (10,11) -- (11,11);
\node[scale=1.35,align=left] at (-1.1,4) {$y=x{+}\ell$};
\node[scale=1.35] at (5,-0.5) {$i$};
\node[scale=1.35,align=left] at (-0.5,9) {$j$};
\draw[fill=none] (0,0) circle (6pt);
%\draw[fill=none] (5,8) circle (6pt);
\draw[fill=none] (5,9) circle (6pt);
\draw[fill=none] (11,11) circle (6pt);
\draw[ultra thick, red] (4.75,8.75)-- (5.25,9.25);
\draw[ultra thick, red] (5.25,8.75)-- (4.75,9.25);
\end{tikzpicture}}
\caption{A Dyck path that is an image of an ordered tree under the folklore bijection. The $\dfsScore$ of the tree node corresponding to the point marked with a cross is equal to the number of $N$ steps in the prefix of the path ending at that point.}
\label{fig:pairsDP}
\end{figure}
If $\mu$ has $j$ $N$ steps and $i$ $E$ steps, then its endpoint is $(i,j)$ and $j = i + \ell$. Hence $j \geq \ell$ and we have
\begin{align*}
\totalD(n,\ell ) &= \sum\limits_{j = l}^{n} j\cdot[(0,0)\rightarrow(i,j-1)]\cdot[(i,j)\rightarrow(n,n)] \\
&= \sum\limits_{j = \ell }^{n} j\cdot[(0,0)\rightarrow(j-\ell ,j-1)]\cdot[(j-\ell ,j)\rightarrow(n,n)].
\end{align*}
%$$\totalD(n,l) = \sum\limits_{j \geq l}^{n} j\cdot[(0,0)\longrightarrow(i,j-1)]\cdot[(i,j)\longrightarrow(n,n)].$$
%After replacing $i$ with $j-l$ we obtain
%$$\sum\limits_{j \geq l}^{n} j\cdot[(0,0)\longrightarrow(j-l,j-1)]\cdot[(j-l,j)\longrightarrow(n,n)].$$
By symmetry and Lemma~\ref{lemma:paths}, we have
\begin{align*}
\totalD(n,\ell ) &= \sum\limits_{j = \ell }^{n} j\cdot[(0,0)\rightarrow(j-\ell ,j-1)]\cdot[(0,0)\rightarrow(n-j,n-j+\ell )] \\
&= \sum\limits_{j = \ell }^{n} \frac{j\ell }{2j-\ell } \binom{2j-\ell }{j} \frac{\ell + 1}{2n-2j+\ell +1} \binom{2n-2j+\ell +1}{n-j} \\
&= \ell \sum\limits_{j = \ell }^{n} \frac{\ell +1}{2n-2j+\ell +1} \binom{2j-\ell -1}{j-1}\binom{2n-2j+\ell +1}{n-j}.
\end{align*}
It remains to prove that 
\begin{equation}
\label{eq:paths}
    \sum\limits_{j = \ell }^{n} \frac{\ell +1}{2n-2j+ \ell +1} \binom{2j- \ell -1}{j-1}\binom{2n-2j+\ell +1}{n-j} = \binom{2n}{n- \ell }.
\end{equation}
The last follows  from Gould's generalized Vandermonde convolution~\cite[Eq.~(11)]{Gould-56}, with the substitution $N=n-\ell$, $k=n-j$, $\alpha=\ell-1$, $\gamma=\ell+1$, and $\beta=2$.

However, we provide also a combinatorial proof by showing that both sides of \eqref{eq:paths} count horizontal paths from $(0,0)$ to $(2n,2)$, i.e., those with $n+1$ $U$s and $n-1$ $D$s crossing the horizontal line $y=\ell $.

Let $P$ be such a path. We decompose $P$ as $P = P'P''$, where $P'$ ends at the first point at which $P$ reaches $y=\ell +1$, and where $P''$ is the remaining path (see Figure~\ref{fig:p'p''} below). Note that the number of $D$ steps of $P''$ can be from $\ell-1$ to $n-1$. If $P''$ has $j-1$ $D$ steps, then it must have $(j-1)-(\ell -1)$ $U$ steps, and $P''$ can be any path of this kind. Thus, there are $\binom{2(j-1)-(\ell -1)}{j-1} = \binom{2j-\ell -1}{j-1}$ possibilities for $P''$. 

As for $P'$, first observe that since we have $2n$ steps in $P$ and $2j-\ell -1$ steps in $P''$, there are $2n-2j+\ell +1$ steps in $P'$. In particular, there are $n-j$ $D$ steps in $P'$, as we have $n-1$ $D$ steps in $P$ and $j-1$ such steps in $P''$. There are $\binom{2n-2j+\ell +1}{n-j}$ paths with $n-j$ $D$ steps and $2n-2j+\ell +1$ steps in total, but we only want to count those that do not cross $y=\ell $. Lemma~\ref{lemma:paths} essentially gives what we want, although we need to translate our situation to the setting of diagonal paths. 

Upon excluding the last step in $P'$ (which must be an $U$ step ending at $y=\ell +1$), we must count the horizontal paths starting at $(0,0)$ that consist of $n-j$ $D$ steps, $n-j+\ell $ $U$ steps, and which do not cross $y=\ell $. Mapping any such horizontal path to a diagonal path via the canonical bijection gives a diagonal path with the same number of $N$ and $E$ steps and which stays weakly above $y=x$. By Lemma~\ref{lemma:paths}, the number of these paths is exactly $\binom{2n-2j+\ell +1}{n-j}\frac{\ell +1}{2n-2j+\ell +1}$, so this counts the possibilities for $P'$. Therefore, the left-hand side of \eqref{eq:paths} counts the paths $P$ satisfying the prescribed conditions.
 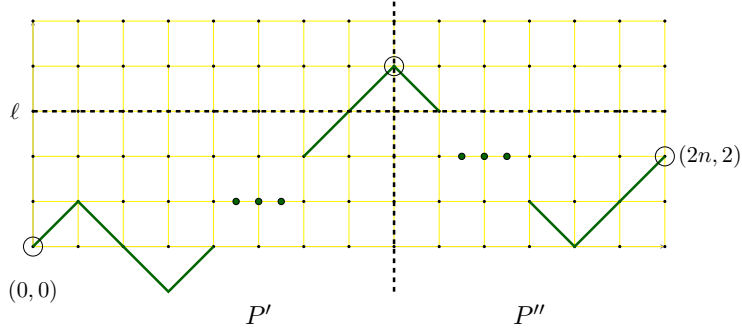
\begin{figure}[!ht]
 \vspace{-5bp}
    \centering
    \scalebox{0.5}{
\begin{tikzpicture}
\draw[axis line style] (0,0)-- (14,0);
\draw[axis line style] (0,0)-- (0,5);
\draw[yellow] (0,0) grid (14,5);
\foreach \i in {0,...,14}
    \foreach \j in {0,...,5}
        \fill (\i,\j) circle (1pt);
\draw[ultra thick,dgreen,postaction=decorate] (0,0) -- (1,1) -- (3,-1) -- (4,0);
\draw[ultra thick,dgreen,postaction=decorate] (6,2) -- (7,3) -- (8,4) -- (9,3);
\draw[ultra thick,dashed, black,postaction=decorate] (0,3) -- (14,3);
\draw[ultra thick,dashed, black,postaction=decorate] (8,-1) -- (8,5.5);
\draw[fill=dgreen] (9.5,2) circle (2pt);
\draw[fill=dgreen] (10,2) circle (2pt);
\draw[fill=dgreen] (10.5,2) circle (2pt);
\draw[ultra thick,dgreen,postaction=decorate] (11,1) -- (12,0) -- (13,1) -- (14,2);
\draw[fill=dgreen] (4.5,1) circle (2pt);
\draw[fill=dgreen] (5,1) circle (2pt);
\draw[fill=dgreen] (5.5,1) circle (2pt);
\node[scale=1.3] at (0,-1) {$(0,0)$};
\node[scale=1.3,align=right] at (15,2) {$(2n,2)$};
\node[scale=1.3,align=left] at (-0.4,3) {$\ell$};
\node[scale=1.5,align=left] at (5,-1.5) {$P'$};
\node[scale=1.5,align=left] at (11,-1.5) {$P''$};
% \node[scale=1.7,align=left] at (5,-0.5) {$i=5$};
\draw[fill=none] (0,0) circle (6pt);
\draw[fill=none] (8,4) circle (6pt);
\draw[fill=none] (14,2) circle (6pt);
% \draw[ultra thick, red] (5.75,2.25)-- (6.25,1.75);
% \draw[ultra thick, red] (5.75,1.75)-- (6.25,2.25);
\end{tikzpicture}}
\caption{The decomposition $P=P'P''$ of the paths crossing $y=\ell$ counted by the two sides of Equation~\eqref{eq:paths}.}
\label{fig:p'p''}
\end{figure}
On the other hand, we can use the \emph{reflection principle} \cite{hump} to show that the number of horizontal paths from $(0,0)$ to $(2n,2)$ that cross $y=\ell$ is exactly $\binom{2n}{n-\ell }$. Taking the same decomposition $P=P'P''$, the reflection of $P'$ about $y=\ell +1$ together with $P''$ gives us a horizontal path starting at $(0,2\ell +2)$ and ending at $(2n,2)$. Each of these paths have $n+\ell $ $D$ steps and $n-\ell $ $U$ steps. The described map is surjective since every horizontal path from $(0,2\ell +2)$ to $(2n,2)$ crosses $y=\ell +1$. Given such a path, we can take the first point where it intersects $y=\ell +1$ and reflect the prefix ending there, yielding a horizontal path from $(0,0)$ to $(2n,2)$ that crosses $y=\ell $.
\end{proof}

Using Theorems \ref{t-dersh} and \ref{th:totalD}, we can now obtain the expected $\dfsScore$ over all nodes at a given level $\ell$ among the trees in $\mathcal{T}_{n}$: 
$$
\mathop{\mathbb{E}}_{\substack{x\in \lev(T,\ell ) \\T\in \mathcal{T}_{n}}}(\dfsScore(x)) = \frac{\ell\binom{2n}{n-\ell }}{\frac{2\ell +1}{2n+1}\binom{2n+1}{n-\ell }} = \frac{\ell }{2\ell +1}(n+\ell +1).
$$
When $\ell \to \infty$ and $n \to \infty$, with $l\prec n$, this expectation is approximately $n/2$.

\section{Exact formula for the average time complexity of BFS}
\label{sec:totalB}

%\red{
In this section, we give an exact summation formula for $\totalB(n,\ell )$. To begin, let us define some random variables on the trees in $\mathcal{T}_{n}$:
\begin{equation}\label{def:Random variables}
    \begin{split}
        h_n(\ell ) &\coloneqq \text{total number of vertices at level $\ell$},\\
H_n(\ell ) &\coloneqq \text{total number of vertices at levels $\leq \ell $},\\
\nu_n(\ell ) &\coloneqq \text{total number of vertices at levels $\geq \ell $},\\
S_n(\ell ) &\coloneqq \text{sum of bfsScores on level $\ell$}. 
    \end{split}
\end{equation}
For a sample realization of these random variables, see Figure \ref{fig:bfs_calculation}.
\begin{figure}[ht!]
\centering
\begin{subfigure}{.5\textwidth}
  \centering
    \begin{tikzpicture}  
  [baseline=0, scale=1.1,auto=center]
    
  \node[circle, draw, fill=black!100, inner sep=0pt, minimum width=4pt] (a1) at (0,10) {};
  \node[above] at (0,10) {0}; 
  \node[circle, draw, fill=black!100, inner sep=0pt, minimum width=4pt] (a2) at (-1.5,9) {};
  \node[left] at (-1.5,9) {1};
  \node[circle, draw, fill=black!100, inner sep=0pt, minimum width=4pt] (a3) at (0,9) {};
  \node[left] at (0,9) {2};
  \node[circle, draw, fill=black!100, inner sep=0pt, minimum width=4pt] (a4) at (1.5,9) {};
  \node[left] at (1.5,9) {3};
  \node[circle, draw, fill=black!100, inner sep=0pt, minimum width=4pt] (a5) at (-2.5,8) {}; 
  \node[left] at (-2.5,8) {4};
  \node[circle, draw, fill=black!100, inner sep=0pt, minimum width=4pt] (a6) at (-0.5,8) {}; 
  \node[left] at (-0.5,8) {5};
  \node[circle, draw, fill=black!100, inner sep=0pt, minimum width=4pt] (a7) at (-2.5,7) {}; 
  \node[left] at (-2.5,7) {7};
  \node[circle, draw, fill=black!100, inner sep=0pt, minimum width=4pt] (a8) at (1.5,8) {};
  \node[left] at (1.5,8) {6};
  \node[circle, draw, fill=black!100, inner sep=0pt, minimum width=4pt] (a9) at (0.5,7) {};
  \node[left] at (0.5,7) {8};
  \node[circle, draw, fill=black!100, inner sep=0pt, minimum width=4pt] (a10) at (2.5,7) {};
  \node[left] at (2.5,7) {9};
  
  \draw (a1) -- (a2);
  \draw (a1) -- (a3);  
  \draw (a1) -- (a4);  
  \draw (a2) -- (a5);  
  \draw (a2) -- (a6);
  \draw (a5) -- (a7);  
  \draw (a4) -- (a8);  
  \draw (a8) -- (a9);  
  \draw (a8) -- (a10);  
\end{tikzpicture}
\vspace{-7cm}
  \caption{bfsScores for a tree with $n = 9$ edges.}
  % \label{fig:bfs_order}
\end{subfigure}%
\begin{subfigure}{.5\textwidth}
  \centering
    \begin{tikzpicture}[baseline=0, scale=1.1, auto=center]
 \node at (0,10) {$\,\,\,h_n(0) = 1,\, H_n(0) = 1,\, v_n(0) = 10,\, S_n(0) = 0$.};
      \node at (0,9) {$h_n(1) = 3,\, H_n(1) = 4,\, v_n(1) = 9,\, S_n(1) = 6$.};
      \node at (0,8){$\,\,h_n(2) = 3,\, H_n(2) = 7,\, v_n(2) = 6,\, S_n (2)= 15$.};
            \node at (0,7) {$\,\,\,\,h_n(3) = 3,\, H_n(3) = 10,\, v_n(3) = 3,\, S_n(3) = 24$.};
    \end{tikzpicture}
\vspace{-7cm}
  \caption{Values of the random variables defined in \eqref{def:Random variables}.}
  % \label{fig:dfs_order}
\end{subfigure}
\caption{Sample realization of the random variables defined in~\eqref{def:Random variables}.}
\label{fig:bfs_calculation}
\end{figure}
We can express $\totalB(n,\ell )$ in terms of these quantities in the following way:
\begin{align}
\totalB(n,\ell ) = \left|\mathcal{T}_n\right|\cdot \Ebb{S_n(\ell )}
&= C_n \cdot \Ebb{\binom{H_n(\ell )}{2}-\binom{H_n(\ell -1)}{2}} \nonumber\\
&= C_n \cdot \Ebb{\binom{n+1-v_n(\ell +1)}{2}-\binom{n+1- v_n(\ell )}{2}} \nonumber\\
&=C_n\cdot\Ebb{\binom{v_n(\ell +1)}{2} - \binom{v_n(\ell )}{2} - n\lbrb{v_n(\ell +1) - v_n(\ell )}}. \label{eq:totB expectation}      
\end{align}
Moreover, in what follows, we shall use the following results of Tak\'{a}cs \cite[Theorem 4]{Takacs-99}:
\begin{align}
C_n \cdot \Ebb{v_n(\ell )} &= \binom{2n}{n-\ell },
\quad
\text{and} \label{eq: Takacs1}\\
C_n\cdot \Ebb{\binom{v_n(\ell )}{2}} &= (n+\ell )\binom{2n}{n-\ell } - \frac{n+2\ell }{2}
\binom{2n}{n-2\ell }- 2 \ell\sum_{j=\ell }^{2\ell -1}\binom{2n}{n-j}. \label{eq: Takacs2}
\end{align}
 The motivation of Tak\'{a}cs
 was to find explicit expressions for
 the moments of the occupation measure of a 
 Brownian excursion as a limit of the respective
 moments for simple Bernoulli excursions. Using our
notation, the occupation
 measure of the Bernoulli
 excursion is exactly $v_n(\ell )$, and the Brownian
 excursion is the natural encoding of the limit of properly 
 scaled random trees from $\mathcal{T}_n$---that is, the \textit{continuum random tree} introduced by Aldous in
 \cite{CRT1, CRT2, CRT3}. In fact, this link
 allow us to generalize our asymptotics for $\totalB(n,\ell )$
 on $\mathcal{T}_n$ to all classes of trees which
 can be constructed as conditioned Galton--Watson trees (see Section \ref{sec: CRT}). 
 
A formula for $\totalB(n,\ell )$ is obtained below. In the statement of this result, $\Phi$ denotes the cumulative distribution function of the standard normal distribution:
\[
\Phi(s) \coloneqq \frac{1}{\sqrt{2\pi}}\int_{-\infty}^s e^{-x^2/2}\, \D x.
\]
As noted in Section~\ref{sec:summary}, we use the notation $a_n\prec b_n$ to mean that $a_n=o(b_n)$, that is,
$a_n/b_n\to 0$ as $n\to\infty$. Similarly, $a_n\succ b_n$ means that
$a_n=\omega(a_n)$.
\begin{theorem}\label{thm: totalB} 
    For every $n \geq 0$ and
        $0 \leq \ell \leq n$, we have
    \begin{align}
    \label{eq:totalBExpr}
    \totalB(n,\ell ) &= \frac{n(2\ell +1)}{n+\ell +1}\binom{2n}{n-\ell } - \binom{2n}{n-2\ell-1}
    - 2 \sum_{j = \ell + 1}^{2\ell } \binom{2n}{n-j}\\
    &= \totalD(n,\ell ) + \totalD(n,\ell+1 ) - \binom{2n}{n-2\ell -1}
    - 2 \sum_{j = \ell +1}^{2\ell } \binom{2n}{n-j}. \nonumber
    \end{align}
Moreover, if $s=\ell /\sqrt{n}$, we have
   \begin{equation}
       \label{eq:asymptotics totalB}
   \frac{\totalB(n,s\sqrt{n})}{4^n} =
\frac{2s}{\sqrt{\pi}}e^{-s^2}-2\lbrb{\Phi\lbrb{2\sqrt{2}s}-\Phi\lbrb{\sqrt{2}s}}+\bo{\frac{1}{\sqrt{n}}}, 
   \end{equation}
and if $s \succ n^\epsilon$ for some $\epsilon > 0$, we have
% \red{or $s = \omega(n^\epsilon)$ or $n^\epsilon = \so{s}$? or whatever notation you prefer},
\begin{equation}
    \label{eq:exp bound}
 \frac{\totalB(n,s\sqrt{n})}{4^n} 
 \leq 2s\sqrt{n}e^{-n^{2\epsilon}} = \so{e^{-n^\epsilon}}.
\end{equation}

\end{theorem}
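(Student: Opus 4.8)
The plan is to establish the three assertions of Theorem~\ref{thm: totalB} in turn. \emph{For the closed form \eqref{eq:totalBExpr}}, I would substitute the Tak\'acs identities \eqref{eq: Takacs1} and \eqref{eq: Takacs2} — the latter evaluated both at level $\ell$ and at level $\ell+1$ — into the last line of \eqref{eq:totB expectation}, and then collect terms according to the index of the binomial coefficient. The coefficients of $\binom{2n}{n-\ell}$ and $\binom{2n}{n-\ell-1}$ combine into $\ell\binom{2n}{n-\ell} + (\ell+1)\binom{2n}{n-\ell-1}$, which equals $\frac{n(2\ell+1)}{n+\ell+1}\binom{2n}{n-\ell}$ once one uses the elementary ratio $\binom{2n}{n-\ell-1} = \frac{n-\ell}{n+\ell+1}\binom{2n}{n-\ell}$. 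The two weighted partial sums produced by \eqref{eq: Takacs2}, namely $2\ell\sum_{j=\ell}^{2\ell-1}\binom{2n}{n-j}$ and $2(\ell+1)\sum_{j=\ell+1}^{2\ell+1}\binom{2n}{n-j}$, overlap in all but a handful of terms, and their difference collapses to $-2\sum_{j=\ell+1}^{2\ell}\binom{2n}{n-j}$ plus boundary terms indexed by $n-2\ell$, $n-2\ell-1$ and $n-2\ell-2$. Those three boundary binomials then cancel by the identity
\[
\frac{n-2\ell}{2}\binom{2n}{n-2\ell} - (2\ell+1)\binom{2n}{n-2\ell-1} - \frac{n+2\ell+2}{2}\binom{2n}{n-2\ell-2} = 0,
\]
which one checks by writing the first and last binomials as rational multiples of $\binom{2n}{n-2\ell-1}$ via the same consecutive-coefficient ratio and verifying that the remaining scalar coefficient vanishes identically in $n$ and $\ell$. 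What is left is exactly the first line of \eqref{eq:totalBExpr}, and the second line then follows from Theorem~\ref{th:totalD}. All these manipulations remain valid for every $0 \le \ell \le n$ under the convention $\binom{2n}{m} = 0$ for $m < 0$.

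\emph{For the asymptotics \eqref{eq:asymptotics totalB}}, I would divide \eqref{eq:totalBExpr} by $4^n$ and estimate the three summands using the local-limit estimate $4^{-n}\binom{2n}{n-k} = \frac{1}{\sqrt{\pi n}}\,e^{-k^2/n}\lbrb{1 + \bo{1/n}}$, which holds uniformly for $k = \bo{\sqrt n}$ and follows directly from Stirling's formula. Because $\frac{n(2\ell+1)}{n+\ell+1} = 2\ell + \bo{1} = 2s\sqrt n + \bo{1}$, the first summand contributes $\frac{2s}{\sqrt\pi}e^{-s^2} + \bo{1/\sqrt n}$; the term $4^{-n}\binom{2n}{n-2\ell-1}$ is $\bo{1/\sqrt n}$ and is absorbed into the error; and the middle sum becomes $\frac{2}{\sqrt{\pi n}}\sum_{j=\ell+1}^{2\ell}e^{-j^2/n}\lbrb{1+\bo{1/n}}$, a Riemann sum of mesh $1/\sqrt n$ for the integrand $e^{-t^2}$ on $[s,2s]$, hence equal to $\frac{2}{\sqrt\pi}\int_s^{2s}e^{-t^2}\,\D t + \bo{1/\sqrt n}$. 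The substitution $u = t\sqrt 2$ and the definition of $\Phi$ give $\int_s^{2s}e^{-t^2}\,\D t = \sqrt\pi\lbrb{\Phi\lbrb{2s\sqrt 2} - \Phi\lbrb{s\sqrt 2}}$, so this summand equals $2\lbrb{\Phi\lbrb{2s\sqrt 2} - \Phi\lbrb{s\sqrt 2}} + \bo{1/\sqrt n}$. Adding the three contributions yields \eqref{eq:asymptotics totalB}.

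\emph{For the tail bound \eqref{eq:exp bound}}, note that $\totalB(n,s\sqrt n) \ge 0$ and the last two terms of \eqref{eq:totalBExpr} are nonnegative, so $\totalB(n,s\sqrt n) \le \frac{n(2\ell+1)}{n+\ell+1}\binom{2n}{n-\ell}$ with $\ell = s\sqrt n$; moreover $\frac{n(2\ell+1)}{n+\ell+1} \le 2\ell$ whenever $2\ell^2 + 2\ell \ge n$, which certainly holds once $s \succ n^\epsilon$. Hence $4^{-n}\totalB(n,s\sqrt n) \le 2\ell\cdot 4^{-n}\binom{2n}{n-\ell} \le 2s\sqrt n\,e^{-\ell^2/n} = 2s\sqrt n\,e^{-s^2}$, the middle inequality being Hoeffding's inequality for a $\mathrm{Binomial}(2n,1/2)$ random variable. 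For $s \succ n^\epsilon$ we have $s^2 \ge n^{2\epsilon}$ for all large $n$, which yields the stated bound, and $2s\sqrt n\,e^{-n^{2\epsilon}} = \so{e^{-n^\epsilon}}$ because $e^{-(n^{2\epsilon}-n^\epsilon)}$ outpaces every polynomial in $n$.

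The step I expect to be the main obstacle is the first one: pushing the two copies of the Tak\'acs formula through \eqref{eq:totB expectation} and then merging the two offset partial sums and the three boundary binomial coefficients without sign errors. The binomial identities needed are elementary once isolated — they are all instances of the consecutive-coefficient recurrence — but the bookkeeping is delicate. By contrast, the asymptotic analysis and the tail bound are routine applications of Stirling's formula, Riemann-sum approximation, and Hoeffding's inequality.
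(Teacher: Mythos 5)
Your proposal is correct and, in its overall architecture, matches the paper's proof: the exact formula is obtained exactly as in the paper by feeding the Tak\'acs identities \eqref{eq: Takacs1}--\eqref{eq: Takacs2} (at levels $\ell$ and $\ell+1$) into \eqref{eq:totB expectation}, and the boundary cancellation you single out, $\frac{n-2\ell}{2}\binom{2n}{n-2\ell}-(2\ell+1)\binom{2n}{n-2\ell-1}-\frac{n+2\ell+2}{2}\binom{2n}{n-2\ell-2}=0$, does hold identically (both outer binomials are $\frac{n+2\ell+1}{2}$ and $\frac{n-2\ell-1}{2}$ times $\binom{2n}{n-2\ell-1}$, respectively), so your bookkeeping reproduces \eqref{eq:totalBExpr}; likewise your tail bound is the paper's argument (keep only the positive first term, then Hoeffding). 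The one genuine divergence is in \eqref{eq:asymptotics totalB}: the paper estimates $\sum_{j=\ell+1}^{2\ell}\binom{2n}{n-j}=4^n\P(n+\ell+1\le\mathcal{B}_{2n}\le n+2\ell)$ in one stroke via the Berry--Esseen bound \eqref{eq:Berry-Essen}, together with Petrov's local limit theorem, which is uniform in $\ell$; you instead apply a Stirling-type local estimate termwise and sum a Riemann scheme for $\int_s^{2s}e^{-t^2}\,\D t$. That route is legitimate and gives the same $\bo{1/\sqrt n}$ error, but note that your uniformity claim ($k=\bo{\sqrt n}$, i.e.\ bounded $s$) does not by itself cover the full range $0\le\ell\le n$ allowed in the statement; you should add the observation (implicit in the paper's use of $s\le\sqrt n$ and the boundedness of $se^{-s^2}$ and $s^2e^{-s^2}$) that once $s\gtrsim\sqrt{\log n}$ both sides of \eqref{eq:asymptotics totalB} are already $\bo{1/\sqrt n}$ — e.g.\ by the same Hoeffding bound $\totalB(n,s\sqrt n)/4^n\le(2s\sqrt n+1)e^{-s^2}$ — so the Riemann-sum regime only needs to handle $s=\bo{\sqrt{\log n}}$, where your relative-error estimate is still valid. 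The advantage of the paper's Berry--Esseen route is precisely that this case split is unnecessary; the advantage of yours is that it avoids quoting the CLT machinery and stays entirely with Stirling-level estimates.
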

\begin{remark}
    The last result shows that the levels with non-negligible $\totalB$ (with respect to the sum of
    all scores) are exactly
    those of order $s\sqrt{n}$ for
    a constant $s$, in the sense that 
    \[
    \text{if $s \prec 1$ or $s \succ 1$, then }
    \frac{\totalB(n,s\sqrt{n})}{4^n} = \so{1}.\]
    More precisely, using \eqref{eq:asymptotics totalB} for $s\prec n^\epsilon$ and
    \eqref{eq:exp bound} for $s\succ n^\epsilon$, we have that
    \begin{equation*}
        \frac{\totalB(n,s\sqrt{n})}{4^n}
        = \begin{cases}
            \bo{\max\{s,n^{-1/2}\}}, &\text{if $s \prec 1$},\\
            \bo{\max\{se^{-s^2},n^{-1/2}\}}, &\text{if $1 \prec s \prec n^\epsilon$},\\
            \so{e^{-n^\epsilon}}, &\text{if $s \succ n^\epsilon$}.
        \end{cases}
    \end{equation*}
\end{remark}

% \begin{remark}\label{remark: bigO term}
% A close inspection of the proof of \eqref{eq:asymptotics totalB} shows that the speed of convergence $\bo{n^{-1/2}}$ is optimal. This
% is partially due to the used Berry-Essen bound (see \eqref{eq:Berry-Essen}), but also a few other terms in \eqref{eq:totalB in proof}.
% \end{remark}

\begin{proof}[Proof of Theorem \ref{thm: totalB}]
    Substituting \eqref{eq: Takacs1} and \eqref{eq: Takacs2} into \eqref{eq:totB expectation}, we get
\begin{small}\begin{align}\label{eq:totalB calc}
        \totalB(n,\ell ) &= (n+\ell +1)\binom{2n}{n-\ell -1} - (n+\ell )\binom{2n}{n-\ell }
        -\frac{n+2\ell +2}{2}\binom{2n}{n-2\ell -2}+\frac{n+2\ell }{2}\binom{2n}{n-2\ell } \nonumber\\
        &\hphantom{{}=
(n}
-2(\ell +1)\sum_{j=\ell +1}^{2\ell +1} \binom{2n}{n-j}+ 2\ell \sum_{j=\ell }^{2\ell -1}
        \binom{2n}{n-j}- n\lbrb{\binom{2n}{n-\ell -1}
        - \binom{2n}{n-\ell }} \nonumber\\
        &= \lbrb{\ell +1}\binom{2n}{n-\ell -1} + l\binom{2n}{n-\ell }
        \nonumber\\
        &\hphantom{{}=
(n}+ \binom{2n}{n-2\ell -1}\lbrb{-\frac{n-2\ell -1}{2}
        + \frac{n+2\ell +1}{2}
        -2\ell -2} 
        -2 \sum_{j=\ell +1}^{2\ell}\binom{2n}{n-j} \nonumber \\
        &= \frac{n(2\ell +1)}{n+\ell +1}\binom{2n}{n-\ell }
        - \binom{2n}{n-2\ell -1} -2 \sum_{j=\ell +1}^{2\ell }\binom{2n}{n-j}
\end{align}\end{small}as needed. The fact that the first term in the last expression 
    is equal to $\totalD(n,\ell ) + \totalD(n,\ell +1)$ can be seen from the
    second equality above and Theorem \ref{th:totalD}.

    For the remaining asymptotic statements, we will use two additional facts giving us asymptotic bounds for $\binom{2n}{n-\ell }$ and $\sum_{j=\ell +1}^{2\ell }\binom{2n}{n-j}$, respectively. The first fact is a variant of the local limit theorem \cite[Chapter VII, Theorem 13]{Petrov-75}, which states that for 
     a binomial random variable
    $\mathcal{B}_{2n} \sim \operatorname{Bin}(2n, 1/2)$, we have
    \[
    \sup_{\ell }\,\left|\P\lbrb{\mathcal{B}_{2n} = n-\ell } - \frac{1}{\sqrt{\pi n}}e^{-\ell^2/n} \right|\leq \frac{c_0}{n^{3/2}}
    \]
    for some constant $c_0 > 0$;
    this implies
    \begin{equation}
        \label{eq:local limit}
    \binom{2n}{n-\ell } = 4^n \lbrb{\frac{1}{\sqrt{\pi n}}e^{-\ell^2/n} + \bo{\frac{1}{n^{3/2}}}} = 4^n\lbrb{\bo {\frac{1}{\sqrt{n}}}}.
    \end{equation}
     The second fact is the standard central limit theorem, which along
    with the Berry--Essen bound (see \cite[p.62]{Shiryaev-16}), implies that
    \begin{equation}
        \label{eq:Berry-Essen}
   \sup_{x}\,\left| 
    \P\lbrb{\frac{\mathcal{B}_{2n}-n}{\sqrt{n/2}} \leq x} - \Phi\lbrb{x}\right| = 
    \sup_{x}\, \left| 
    \P\lbrb{\mathcal{B}_{2n} \leq x} - \Phi\lbrb{\frac{x-n}{\sqrt{n/2}}}\right| \leq \frac{1}{\sqrt{2n}}.
    \end{equation}
    Therefore, using $s = \ell/\sqrt{n}$, we have
     \begin{align}
     \mkern-48mu
\sum_{j=\ell +1}^{2\ell }\binom{2n}{n-j} =
4^n \P\lbrb{n+\ell +1 \leq \mathcal{B}_{2n} \leq n+2\ell }
    = 4^n\lbrb{\Phi\lbrb{2\sqrt{2}s}-\Phi\lbrb{\sqrt{2}s}+\bo{\frac{1}{\sqrt{n}}}}. \label{e-2nnj}
    \end{align}
    Substituting \eqref{eq:local limit} and \eqref{e-2nnj} into \eqref{eq:totalB calc}, we obtain
    \begin{equation}\label{eq:expand totalB}
        \begin{split}
              \frac{1}{4^n}\totalB(n,s\sqrt{n}) &= 
    \frac{2s\sqrt{n}+1}{1+(s\sqrt{n}+1)/n}\lbrb{\frac{1}{\sqrt{\pi n}}e^{-s^2} + \bo{\frac{1}{n^{3/2}}}} \\
    &
    \hphantom{{}=
\frac{2s\sqrt{n}+1}{1+(s\sqrt{n}+1)/n}\frac{2s\sqrt{n}+1}{1}}
-2\lbrb{\Phi\lbrb{2\sqrt{2}s}-\Phi\lbrb{\sqrt{2}s}}+\bo{\frac{1}{\sqrt{n}}}.
        \end{split}
    \end{equation}
In fact, \eqref{eq:expand totalB} holds for every $s$ that is a function of $n$. To further expand, we first note that if $\ell \succ n^{1/2+\epsilon}$ for
some $\epsilon>0$, or equivalently if $s \succ n^\epsilon$, then $\totalB(n,\ell )$ is very small. Indeed, we can get the following inequalities:
    \begin{equation}
        \label{eq:totalB in proof}
    \frac{1}{4^n}\totalB(n,\ell ) \leq
   \frac{n(2\ell +1)}{n+\ell +1} \frac{1}{4^n}\binom{2n}{n-\ell } \leq
n\P\lbrb{\mathcal{B}_{2n} \geq n +\ell } \leq
ne^{-\ell^2/n} \leq n e^{-n^{2\epsilon}} = \so{e^{-n^\epsilon}}.
    \end{equation}
    For the first inequality, we upper-bounded $\totalB(n,\ell )$ by the first term of the expression obtained in~\eqref{eq:totalB calc}, since it is the only positive term. For the second inequality, we use 
    $$\frac{1}{4^{n}}\binom{2n}{n-\ell } = \P\lbrb{\mathcal{B}_{2n} = n +\ell }\leq \P\lbrb{\mathcal{B}_{2n} \geq n +\ell }.$$ 
    Finally, the third inequality follows from Hoeffding's inequality \cite[Theorem 1]{Hoeffding-63}, and thus we have established \eqref{eq:exp bound}.
    Further, using the Taylor expansion
    $1/(1+x) = 1 + \bo{x}$ in \eqref{eq:expand totalB}, we obtain
    \begin{equation*}
        \begin{split}
             \frac{1}{4^n}\totalB(n,s\sqrt{n}) &= 
\lbrb{2s\sqrt{n} + 1}\lbrb{1+\bo{\frac{s}{\sqrt{n}}} + \bo{\frac1n}}\lbrb{\frac{1}{\sqrt{\pi n}}e^{-s^2} + \bo{\frac{1}{n^{3/2}}}} \\
 &
    \hphantom{{}=
\frac{2s\sqrt{n}+1}{1+(s\sqrt{n}+1)/n}\frac{2s\sqrt{n}+1}{1}}
-2\lbrb{\Phi\lbrb{2\sqrt{2}s}-\Phi\lbrb{\sqrt{2}s}}+\bo{\frac{1}{\sqrt{n}}}\\
&= \frac{2se^{-s^2}}{\sqrt{\pi}}-2\lbrb{\Phi\lbrb{2\sqrt{2}s}-\Phi\lbrb{\sqrt{2}s}} +\bo{\frac{1}{\sqrt{n}}},
        \end{split}
    \end{equation*}
    where after expanding the former
    equation, we use
     the inequality $s \leq \sqrt{n}$ and the fact that $se^{-s^2} $ and
     $s^2e^{-s^2} $ are bounded functions of $s$.
\end{proof}

\section{Comparing the average time complexity of BFS and DFS.}
\label{sec:threshold}
Now, we are ready to prove the main result of this paper.
\begin{proof}[Proof of Theorem~\ref{th:threshold}]
    Combining the results of
    Theorems \ref{th:totalD} and
    \ref{thm: totalB}, we have
\[
    f_n(\ell ) \coloneqq \totalB(n,\ell ) - \totalD(n,\ell ) = (\ell +1)\binom{2n}{n-\ell -1} - \binom{2n}{n-2\ell -1} - 2 \sum_{j = \ell +1}^{2\ell } \binom{2n}{n-j}.
\]
Therefore, by Equation~\eqref{eq:asymptotics totalB} and the Stirling approximation for $\totalD(n,\ell )$, we have
\[f_n\lbrb{s\sqrt{n}}\sim \frac{4^n}{\sqrt{\pi}} \left( s e^{-s^2} - 2\sqrt{\pi}\left( \Phi\left(2\sqrt{2}s\right) - \Phi\lbrb{\sqrt{2}s} \right) \right)
\]
when $s\geq 0$ and $n\to \infty$. It is readily verified that the function in the brackets has a single positive root, which establishes the second part of the theorem provided that the threshold is unique.

We note that it can be checked that if $f_n(\ell ) = 0$ with some $\ell \in (0, n)$,
then if $n$ is large enough, $\ell$ should be of order $\sqrt{n}$, and therefore
we are in the scenario considered above. However, as we have observed that the
threshold is unique for small values of $n$ too, we prove that it exists for all
$n$.

Let $n\geq 3$. First note that $f_n(1)< 0=f_n(0)$ and $f_n(n-1)>0=f_n(n)$.
We will show that $g_n(\ell ) \coloneqq f_n(\ell ) - f_n(\ell -1)$
changes sign two times, and because $f_n$ is
 decreasing around $0$ and $n$, this would mean
 that $f_n(\ell -1)\leq 0$ and $f_n(\ell ) > 0$ for a 
 unique $\ell\in(0,n)$. A brief calculation
 shows that
\[g_n(\ell ) =f_n(\ell ) - f_n(\ell -1) = \frac{-2\ell^2+3n+2}{n+\ell +1}\binom{2n}{n+\ell } - \binom{2n+2}{n+2\ell +1}.
\] 
Denote the first term $A(\ell )$ and the second $B(\ell )$, 
where we skip the dependence on $n$ for
simpler notation. Since $g_n$ is negative for $2\ell^2\geq 3n+2$, let us consider $\ell$ for which $A(\ell )>0$. Moreover, consider only the non-trivial cases where $B(\ell )>0$, i.e., $2\ell < n$. This will be implied by the previous restriction when $n \geq 7$. In these cases, observe that
\[
\sign\lbrb{g_n(\ell )} = \sign\lbrb{A(\ell )-B(\ell )} = \sign\lbrb{\frac{A(\ell )}{B(\ell )} - 1}.
\]
Therefore, setting $D(\ell ) \coloneqq A(\ell )/B(\ell )$, we have to
solve $D(\ell ) > 1$.
We have that
\[
\frac{D(\ell +1)}{D(\ell )}  = \frac{(-2(\ell +1)^2 + 3n + 2)(n+2\ell +2)(n+2\ell +3)(n-\ell )}{(-2\ell^2 + 3n + 2)(n+\ell +2)(n-2\ell )(n-2\ell +1)},
\]
so $D(\ell +1)>D(\ell )$ if and only if
\begin{multline*}
\quad h(\ell )\coloneqq  n^3(14\ell +4) + n^2(-12\ell^3-16\ell^2+28\ell +6) \\
+n(-52\ell^3-76\ell^2-12\ell -4) +16\ell^5+48\ell^4+36\ell^3+12\ell^2
+8\ell > 0. \quad
\end{multline*}
Differentiating, we obtain
\begin{align*}
h'(\ell) &= 14n^3 + n^2(-36\ell^2\!-\!32\ell \!+\!28)+n(-156\ell^2\!-\!152\ell\!-\!12)+80\ell^4\!+\!192\ell^3\!+\!108\ell^2\!+\!24\ell\!+\!8,\\
h''(\ell ) &= n^2(-72\ell \!-\!32)+n(-312\ell \!-\!152)+320\ell^3\!+\!576\ell^2\!+\!216\ell \!+\!24\\
&= \ell (-72n^2\!-\!312n\!+\!320\ell^2\!+\!216)-32n^2\!-\!152n\!+\!576\ell^2\!+\!24\\
&<\ell \lbrb{-72n^2\!+\!168 n\!+\!536} -32n^2 \!+\! 712n \!+\! 600 < 0
\end{align*}
for $n \geq 10$ and where we used $2\ell^2 < 3n+2$. (The statement of the theorem, that is uniqueness of the threshold, for $n < 10$ can be verified directly.)
Therefore, $h$ is strictly concave. A direct substitution shows
that $h(0) > 0$ and 
$h(6\sqrt{n}/5) <0$, so
$D(\ell )$ initially increases,
then decreases. As
$D(1) <1 $, this means that $g_n$ changes sign $0$ or $2$ times, but the former is impossible since
$\sum_k g_n(k) = 0$ and $g_n(1)<0$. Thus our proof is complete.
\end{proof}
Theorem~\ref{th:threshold} gives us a unique threshold occurring at $\lambda\sqrt{n}$, where $\lambda\approx 0.789004$. It is natural to suspect that this threshold will be close to the average level of a node in $\mathcal{T}_n$, when $n\to\infty$. In order to make this comparison, we will use the following result of Flajolet and Sedgewick.
\begin{theorem}[{\cite[Prop. VII.3]{flajSed}}]
    The average level of the nodes among all trees in $\mathcal{T}_{n}$ is $\frac{1}{2}\sqrt{\pi n} + \bo{1}$.
\end{theorem}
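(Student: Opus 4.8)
The plan is to compute the total of all node levels, summed over every tree in $\mathcal{T}_n$, in closed form, and then divide by the total number of nodes. The key observation is a ``layer-cake'' identity: within a single tree $T$, a node $v$ at level $L$ is counted exactly once in each of $v_n(1), v_n(2), \dots, v_n(L)$, because $v_n(\ell)$ records the vertices at levels $\geq \ell$. Hence, writing $L(v)$ for the level of $v$,
\[
\sum_{T\in\mathcal{T}_n}\ \sum_{v\in T} L(v) \;=\; \sum_{\ell=1}^{n} C_n\cdot\Ebb{v_n(\ell)}.
\]
The sum over $\ell$ may stop at $n$ since no tree in $\mathcal{T}_n$ has a node at level $>n$. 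Applying Tak\'acs's identity~\eqref{eq: Takacs1}, the right-hand side becomes $\sum_{\ell=1}^{n}\binom{2n}{n-\ell} = \sum_{j=0}^{n-1}\binom{2n}{j}$, and the symmetry $\binom{2n}{j}=\binom{2n}{2n-j}$ together with $\sum_{j=0}^{2n}\binom{2n}{j}=4^n$ collapses this to $\tfrac12\bigl(4^n-\binom{2n}{n}\bigr)$.

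I would then divide by the total number of nodes, which is $(n+1)C_n = \binom{2n}{n}$, so that the average level equals
\[
\frac{1}{2}\left(\frac{4^n}{\binom{2n}{n}}-1\right).
\]
Stirling's formula gives $4^n/\binom{2n}{n} = \sqrt{\pi n}\,\bigl(1+\bo{1/n}\bigr) = \sqrt{\pi n}+\bo{1/\sqrt n}$, hence the average level is $\tfrac12\sqrt{\pi n}-\tfrac12+\bo{1/\sqrt n} = \tfrac12\sqrt{\pi n}+\bo{1}$, as required.

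There is no genuine difficulty in this argument: it is essentially a two-line computation once the layer-cake identity and Tak\'acs's formula~\eqref{eq: Takacs1} are available. The only points I would be careful to state are the finiteness of the sum over $\ell$, the partial-sum identity $\sum_{j=0}^{n-1}\binom{2n}{j}=\tfrac12\bigl(4^n-\binom{2n}{n}\bigr)$, and the observation that the additive constant $-\tfrac12$ (indeed the entire $\bo{1/\sqrt n}$ error from Stirling) is absorbed into the $\bo{1}$ term. As an alternative I would note that one could instead start from the Dershowitz--Zaks count $h_n(\ell)=\tfrac{2\ell+1}{2n+1}\binom{2n+1}{n-\ell}$ of Theorem~\ref{t-dersh} and evaluate $\sum_{\ell\geq 1}\ell\,h_n(\ell)$ directly; this also works but requires a slightly less transparent binomial manipulation than the telescoping via $v_n(\ell)$, and it is the route that makes the connection to $\totalD$ and $\totalB$ most visible.
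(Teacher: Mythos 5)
Your proposal is correct, and its first half takes a genuinely different route from the paper. The paper starts from the Dershowitz--Zaks level counts (Theorem~\ref{t-dersh}), writes the total of all levels as $\sum_{\ell=0}^{n}\ell\frac{2\ell+1}{2n+1}\binom{2n+1}{n-\ell}$, and then spends most of its proof establishing the identity \eqref{eq:avLev} by a binomial telescoping manipulation (substituting $i=n-\ell$ and summing a telescoping difference) before applying the local limit estimate \eqref{eq:local limit}. You instead obtain the same intermediate quantity $\sum_{j=0}^{n-1}\binom{2n}{j}=\tfrac12\bigl(4^n-\binom{2n}{n}\bigr)$ directly, via the layer-cake identity $\sum_{v}L(v)=\sum_{\ell\geq 1}v_n(\ell)$ combined with Tak\'acs's formula \eqref{eq: Takacs1}, so the binomial computation that constitutes the bulk of the paper's argument disappears entirely; as a byproduct your argument gives a conceptual (Abel-summation) proof of the identity \eqref{eq:avLev} itself. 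The endgame is the same in both treatments: the partial-sum symmetry, division by the total node count $(n+1)C_n=\binom{2n}{n}$, and Stirling (or the local limit bound) to get $\tfrac12\sqrt{\pi n}+\bo{1}$; your exact closed form $\tfrac12\bigl(4^n/\binom{2n}{n}-1\bigr)$ for the average level is a nice bonus. What the paper's route buys is self-containedness at the level of elementary binomial identities (it needs only Theorem~\ref{t-dersh}); what yours buys is brevity and transparency, at the cost of invoking Tak\'acs's occupation-measure formula, which the paper has in any case already imported for Theorem~\ref{thm: totalB}.
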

 We have that $\frac{1}{2}\sqrt{\pi}\approx 0.8862>0.789\approx\lambda$ and thus for ordered trees, the unique threshold occurs around the average level for the nodes in $\mathcal{T}_{n}$, but slightly below it. In Section~\ref{sec:q}, we ask whether this is a general phenomenon that applies to other families of trees. For instance, is there a similar transition for \emph{binary trees} around the average level for the nodes in them? We also ask for an explanation of the obtained small discrepancy in the case of ordered trees.
\subsection{The BFS and DFS complexity ratio for certain intervals of target levels}
\label{subsec:ratios}
We will also show that the ratios between the average complexities of BFS and DFS are the constants $0$ and $2$, in two large intervals for $\ell$ around $\ell = \sqrt{n}$. 
% Indeed, as the intuition suggest, for small target levels BFS can be arbitrarily faster than DFS, while for large target levels DFS can be at most 2 times faster than BFS (see the caption of Figure~\ref{fig:heuristic} for an heuristic argument). Here, we give rigorous proofs of these statements. 

First, observe that Theorem~\ref{th:totalD} and the local limit theorem (see Equation~\eqref{eq:local limit}) imply that for any $\ell = s\sqrt{n}$,
\begin{equation}
% \label{eq:Dform}
\frac{\totalD(n,s\sqrt n)}{4^n}\longrightarrow g(s):=\frac{s}{\sqrt\pi}e^{-s^2}.
\end{equation}
In addition, recall that Equation~\eqref{eq:asymptotics totalB} gives that for any $\ell = s\sqrt{n}$,
\begin{equation}
\frac{\totalB(n,s\sqrt n)}{4^n} \;\longrightarrow\; B(s) := \frac{2s}{\sqrt\pi}e^{-s^2} - 2\Delta(s) = 2(g(s)-\Delta(s)), \qquad n\to\infty,
\label{eq:Bform}
\end{equation}
where
\begin{equation}
\Delta(s) := \Phi(2\sqrt2\,s) - \Phi(\sqrt2\,s).
% \label{eq:Delta}
\end{equation}

We then have 
\begin{equation}
\label{eq:BDratio}
\frac{\totalB(n,s\sqrt n)}{\totalD(n,s\sqrt n)}\to \frac{B(s)}{g(s)} = 2\left(1-\frac{\Delta(s)}{g(s)}\right).    
\end{equation}
To investigate this ratio at $s\to 0$, we will need the expansion of $\Phi$ near $0$ to third order.

\begin{lemma}[Taylor expansion of $\Phi$ at $0$]\label{lem:taylor}
As $x\to0$,
\[
\Phi(x) = \frac12 + \frac{1}{\sqrt{2\pi}}\left(x-\frac{x^3}{6}\right) + O(x^5).
\]
\end{lemma}
\begin{proof}
By definition $\Phi(x)=\frac12+\frac{1}{\sqrt{2\pi}}\int_0^x e^{-t^2/2}\,dt$. The MacLaurin series of the integrand is,
\[
e^{-t^2/2} = 1-\frac{t^2}{2}+\frac{t^4}{8}-O(t^6).
\]
Integrating term by term over $[0,x]$ gives us
\[
\int_0^x e^{-t^2/2}\,dt = x - \frac{x^3}{6} + \frac{x^5}{40} - O(x^7).
\]
Substituting in the definition of $\Phi$ gives the claim.
\end{proof}

Apply Lemma~\ref{lem:taylor} at $x=2\sqrt2\,s$ and $x=\sqrt2\,s$. Using $(2\sqrt2\,s)^3 = 8\cdot2\sqrt2\,s^3=16\sqrt2\,s^3$ and $(\sqrt2\,s)^3=2\sqrt2\,s^3$,
\[
\Phi(2\sqrt2\,s) = \frac12+\frac{1}{\sqrt{2\pi}}\left(2\sqrt2\,s-\frac{16\sqrt2}{6}s^3\right)+O(s^5)
= \frac12+\frac{1}{\sqrt{2\pi}}\left(2\sqrt2\,s-\frac{8\sqrt2}{3}s^3\right)+O(s^5),
\]
\[
\Phi(\sqrt2\,s) = \frac12+\frac{1}{\sqrt{2\pi}}\left(\sqrt2\,s-\frac{2\sqrt2}{6}s^3\right)+O(s^5)
= \frac12+\frac{1}{\sqrt{2\pi}}\left(\sqrt2\,s-\frac{\sqrt2}{3}s^3\right)+O(s^5).
\]
After simplification, we obtain that when $s\to0$,
\begin{equation*}
\Delta(s) = \Phi(2\sqrt2\,s)-\Phi(\sqrt2\,s) = \frac{1}{\sqrt{2\pi}}\left(\sqrt2\,s-\frac{7\sqrt2}{3}s^3\right)+O(s^5) = \frac{s}{\sqrt\pi} - \frac{7}{3\sqrt\pi}\,s^3 + O(s^5).
% \label{eq:Deltasmall}
\end{equation*}

Meanwhile 
$$g(s)=\dfrac{s}{\sqrt\pi}e^{-s^2}=\dfrac{s}{\sqrt\pi}\bigl(1-s^2+O(s^4)\bigr)=\dfrac{s}{\sqrt\pi}-\dfrac{s^3}{\sqrt\pi}+O(s^5).$$ 
Hence in $B(s)=2g(s)-2\Delta(s)$ the linear terms cancel exactly, and only the cubic terms survive:
\begin{equation*}
B(s) = 2\left(-\frac{s^3}{\sqrt\pi}\right)-2\left(-\frac{7}{3\sqrt\pi}s^3\right)+O(s^5) = \left(-2+\frac{14}{3}\right)\frac{s^3}{\sqrt\pi}+O(s^5) = \frac{8}{3\sqrt\pi}\,s^3+O(s^5).
% \label{eq:Bsmall}
\end{equation*}
Consequently, when $s\to 0$,
\begin{equation}
\frac{\totalB(n,s\sqrt n)}{\totalD(n,s\sqrt n)} \ \longrightarrow\ \frac{B(s)}{g(s)} \sim \frac{8}{3}s^2 \ \longrightarrow\ 0.
\label{eq:ratio0}
\end{equation}
In fact, this conclusion is not limited to a fixed constant $s$, but extends to any sequence of target levels tending to $0$ relative to $\sqrt n$, at a suitable rate, as follows.
\begin{theorem}
\label{th:43}
Let $\ell=\ell(n)$ be a sequence of target levels satisfying $n^{1/3}\prec\ell(n)\prec\sqrt n$ (equivalently, $s(n):=\ell(n)/\sqrt n$ satisfies $n^{1/6}s(n)\to\infty$ and $s(n)\to0$). Then
\[
\frac{\mathrm{totalB}(n,\ell(n))}{\mathrm{totalD}(n,\ell(n))} \longrightarrow 0, \qquad n\to\infty.
\]
\end{theorem}
\begin{proof}
Equation~\eqref{eq:asymptotics totalB} holds for every $s$ that is a function of $n$ — its two ingredients, the local limit theorem and the Berry\textendash Esseen bound, are $\sup_\ell$- and $\sup_x$-bounds with a universal constant, not statements calibrated to one fixed $s$. Hence, uniformly in $\ell=o(n)$,
\[
\frac{\mathrm{totalB}(n,\ell)}{4^n} = B(s)+O(n^{-1/2}), \qquad \frac{\mathrm{totalD}(n,\ell)}{4^n} = g(s)+O(n^{-1/2}), \qquad s=\ell/\sqrt n,
\]
with implied constants independent of $s$. Since $s(n)\to0$, we have $B(s)\sim\frac{8}{3\sqrt\pi}s^3$ near $0$, so the $O(n^{-1/2})$ error term above is dominated by $B(s(n))$ precisely when $s(n)^3\succ n^{-1/2}$, i.e.\ when $n^{1/6}s(n)\to\infty$, equivalently $\ell(n)\succ n^{1/3}$; and since $g\succ B$ near $0$, the same rate suffices for the error to be negligible relative to $g(s(n))$ as well. Under the hypothesis $n^{1/3}\prec\ell(n)\prec\sqrt n$, both error terms become relative $o(1)$ corrections, and
\[
\frac{\mathrm{totalB}(n,\ell(n))}{\mathrm{totalD}(n,\ell(n))} = \frac{B(s(n))\bigl(1+o(1)\bigr)}{g(s(n))\left(1+o(1)\right)} \sim \frac83 s(n)^2 \longrightarrow 0. \qedhere
\]
\end{proof}

When $s\to\infty$, we will need the Lemma below which can be proven via integration by parts. 
\begin{lemma}[Mills' ratio, \cite{Mills1926}]
\label{lem:mills}
For $\varphi(x)=\frac1{\sqrt{2\pi}}e^{-x^2/2}$, as $x\to\infty$,
\[
1-\Phi(x) = \frac{\varphi(x)}{x}\bigl(1+O(x^{-2})\bigr).
\]
\end{lemma}
% \begin{proof}
% Since $\varphi'(t)=-t\varphi(t)$, integrating by parts,
% \[
% 1-\Phi(x)=\int_x^\infty \varphi(t)\,dt = \frac{\varphi(x)}{x} - \int_x^\infty \frac{\varphi(t)}{t^2}\,dt,
% \]
% and the remainder is bounded by $\int_x^\infty \varphi(t)t^{-2}dt \le x^{-2}\int_x^\infty\varphi(t)\,dt = x^{-2}(1-\Phi(x)) \le \varphi(x)x^{-3}$, using the trivial bound $1-\Phi(x)\le\varphi(x)/x$ obtained by dropping the (nonnegative) remainder in the same identity.
% \end{proof}
\noindent Write $\Delta(s)$ as a difference of two upper tails,
\[
\Delta(s) = \Phi(2\sqrt2\,s)-\Phi(\sqrt2\,s) = \bigl[1-\Phi(\sqrt2\,s)\bigr] - \bigl[1-\Phi(2\sqrt2\,s)\bigr],
\]
and apply Lemma~\ref{lem:mills}. Since $(\sqrt2\,s)^2/2 = s^2$, we have $\varphi(\sqrt2\,s)=\frac{1}{\sqrt{2\pi}}e^{-s^2}$ and  thus
\[
1-\Phi(\sqrt2\,s) = \frac{\varphi(\sqrt2\,s)}{\sqrt2\,s}\bigl(1+O(s^{-2})\bigr)
= \frac{e^{-s^2}}{\sqrt{2\pi}\cdot\sqrt2\,s}\bigl(1+O(s^{-2})\bigr)
= \frac{e^{-s^2}}{2\sqrt\pi\,s}\bigl(1+O(s^{-2})\bigr).
\]
For the second tail, since $(2\sqrt2\,s)^2/2=4s^2$, we have $\varphi(2\sqrt2\,s)=\frac1{\sqrt{2\pi}}e^{-4s^2}$, so
\[
1-\Phi(2\sqrt2\,s) = \frac{\varphi(2\sqrt2\,s)}{2\sqrt2\,s}\bigl(1+O(s^{-2})\bigr)
= \frac{e^{-4s^2}}{\sqrt{2\pi}\cdot2\sqrt2\,s}\bigl(1+O(s^{-2})\bigr)
= \frac{e^{-4s^2}}{4\sqrt\pi\,s}\bigl(1+O(s^{-2})\bigr).
\]
The second tail is exponentially smaller than the first since when $s\to\infty$ their ratio is
\[
\frac{e^{-4s^2}/(4\sqrt\pi\,s)}{e^{-s^2}/(2\sqrt\pi\,s)} = \frac{1}{2}\,e^{-3s^2} \longrightarrow 0
\]
Thus, after subtracting we obtain
\begin{equation}
\Delta(s) = \frac{e^{-s^2}}{2\sqrt\pi\,s}\bigl(1+O(s^{-2})\bigr) - \frac{e^{-4s^2}}{4\sqrt\pi\,s}\bigl(1+O(s^{-2})\bigr) = \frac{e^{-s^2}}{2\sqrt\pi\,s}\bigl(1+O(s^{-2})\bigr).
% \label{eq:Deltalarge}
\end{equation}
\noindent Therefore,
\[
\frac{\Delta(s)}{g(s)} = \frac{\dfrac{e^{-s^2}}{2\sqrt\pi\,s}\bigl(1+O(s^{-2})\bigr)}{\dfrac{s}{\sqrt\pi}e^{-s^2}}
= \frac{1}{2\sqrt\pi\,s}\cdot\frac{\sqrt\pi}{s}\,\bigl(1+O(s^{-2})\bigr) = \frac{1}{2s^2}\bigl(1+O(s^{-2})\bigr) \ \longrightarrow\ 0.
\]
% where the common factor $e^{-s^2}$ cancels exactly (it does not merely become negligible — it disappears from the ratio entirely, leaving a purely polynomial decay $1/(2s^2)$). 
In particular $\Delta(s)=o(g(s))$. Hence, when $s\to\infty$, by Equation~\eqref{eq:BDratio} we have
\begin{equation}
B(s) \sim 2g(s)\qquad\text{ and } \qquad \frac{\totalB(n,s\sqrt n)}{\totalD(n,s\sqrt n)} \ \longrightarrow\ 2.
\label{eq:ratioinf}
\end{equation}
As with the previous limit, this conclusion is not limited to a fixed $s$, and extends as follows.
\begin{theorem}
\label{th:45}
Let $\ell=\ell(n)$ be a sequence of target levels satisfying $\sqrt n\prec\ell(n)\prec\sqrt{n\ln n}$ (equivalently, $s(n):=\ell(n)/\sqrt n$ satisfies $s(n)\to\infty$ and $s(n)/\sqrt{\ln n}\to0$). Then
\[
\frac{\mathrm{totalB}(n,\ell(n))}{\mathrm{totalD}(n,\ell(n))} \longrightarrow 2, \qquad n\to\infty.
\]
\end{theorem}
\begin{proof}
By the same uniform form of Equation~\eqref{eq:asymptotics totalB} used in the proof of Theorem~\ref{th:43}, it suffices to check that $g(s(n))$ dominates the $O(n^{-1/2})$ error term, since $B(s)\sim2g(s)$ and $\Delta(s)=o(g(s))$ as $s\to\infty$ by (22). Now
\[
g(s) = \frac{s}{\sqrt\pi}e^{-s^2} \succ n^{-1/2} \iff s^2 \prec \tfrac12\ln n + \ln s + O(1),
\]
and since $\ln s(n) = o(\ln n)$, this holds whenever $s(n)^2=o(\ln n)$, i.e.\ $s(n)=o\bigl(\sqrt{\ln n}\bigr)$, equivalently $\ell(n)=o\bigl(\sqrt{n\ln n}\bigr)$. Under this condition both $\mathrm{totalB}(n,\ell)/4^n$ and $\mathrm{totalD}(n,\ell)/4^n$ reduce to their leading terms $B(s(n))\sim2g(s(n))$ and $g(s(n))$, giving
\[
\frac{\mathrm{totalB}(n,\ell(n))}{\mathrm{totalD}(n,\ell(n))} \longrightarrow 2. \qedhere
\]
\end{proof}
 
Together, Theorems~\ref{th:43} and~\ref{th:45} show that the ratio $\mathrm{totalB}(n,\ell)/\mathrm{totalD}(n,\ell)$ already exhibits its limiting behavior $0$, respectively $2$, for every target level in certain large intervals. The exponents $\tfrac13$ and $\tfrac12$ in these two theorems are not claimed to be sharp. In fact, we suspect that the two theorems hold whenever $\ell\prec \sqrt{n}$ and  $\ell\succ \sqrt{n}$, respectively. Yet, we were not able to show that. One reason for us to believe the mentioned extension is possible (at least for the latter interval) is Theorem~\ref{th:dTruncBFSratio}, its possible extension mentioned in Remark~\ref{rem:main}, and the fact that the truncated DFS algorithm is almost identical to the DFS algorithm for large target levels $\ell$.
% Thus, the result in this section confirms that 
% $$
% \totalDTrunc(n,\ell)\;\sim\;\tfrac12\,\totalB(n,\ell),
% $$
% for $\ell\to$
\section{The generating function for $\totalB$ and Fibonacci polynomials}
\label{sec:fibo}
In this section, we obtain an expression for the generating function of $\totalB(n,\ell )$, without using the results in the previous Section~\ref{sec:totalB}. Instead, we use a generating function approach which demonstrates a surprising connection between the generating function of $\totalB(n,\ell )$ and Fibonacci polynomials. Furthermore, we leverage this connection to find a formula for $\totalB(n,\ell )$ as a difference of two alternating sums, which is different from the one from Equation~\eqref{eq:totalBExpr}. We begin by defining the generating functions
\begin{equation*}
F_{\ell }(x,y,z) \coloneqq \sum_{n=0}^{\infty} \sum\limits_{T\in \mathcal{T}_n} x^{k(T)}y^{m(T)}z^{n},
\end{equation*}
where $k(T)$ is the number of non-root nodes in $T$ at levels smaller than $\ell$ and $m(T)$ is the number of nodes in $T$ at level $\ell$. Now, consider the ordinary generating function
\begin{equation*}
C=C(z)\coloneqq \sum_{n=0}^{\infty} C_n z^n = \frac{1-\sqrt{1-4z}}{2z} = \frac{2}{1+\sqrt{1-4z}}
\end{equation*}
 for the Catalan numbers $C_n$. Since $C_n$ counts ordered trees with $n$ edges and every ordered tree has a single node (the root) at level 0, it follows that 
\begin{equation*}
F_0 = yC = \frac{2y}{1+\sqrt{1-4z}}.
\end{equation*}
Using one of the standard decompositions of ordered trees, which gives that the children of the root of every ordered tree are themselves roots of ordered trees, implies the equations
\begin{equation}
\label{eq:F-1}
F_{1} = 1 + zF_{0} + (zF_{0})^2 + \cdots = \frac{1}{1-zF_{0}} = \frac{1}{1-yzC}
\end{equation}
and
\begin{equation}
\label{eq:F-l}
F_{\ell } = 1 + xzF_{\ell -1} + (xzF_{\ell -1})^2 + \cdots = \frac{1}{1-xzF_{\ell -1}} \quad (\ell\geq 2).
\end{equation}
This recursive relation shows that $F_\ell$ can be written as a continued fraction as, for example, in the work of de Bruijn, Knuth and Rice \cite{de1972average}. The formula for $F_\ell$ that one can obtain via this approach, will be proven directly in Section~\ref{subsec:genfuncs}. Then, since our goal is to find $\totalB(n,\ell )$, we will use the obtained formula for $F_\ell$ to find the generating function
\begin{equation*}
B_{\ell }=B_{\ell }(z)\coloneqq\sum_{n=0}^{\infty}\totalB(n,\ell )z^{n}.
\end{equation*}
If $m(T)=m$ and $k(T)=k$, then a straightforward counting argument yields 
\begin{equation*}
\sum\limits_{v\in \lev(T,\ell )} \bfsScore(v) = mk + \frac{m(m+1)}{2}.
\end{equation*}
The latter equation gives us the following expression for $B_\ell$ in terms of $F_\ell$ and its derivatives.
\begin{equation} \label{e-BF}
B_{\ell }=\left[\Big(\frac{\partial^{2}}{\partial x\partial y}+\frac{1}{2}\cdot\frac{\partial^{2}}{\partial y^{2}}+\frac{\partial}{\partial y}\Big)\,F_{\ell }\right]_{\substack{x=1,\,y=1}
}.
\end{equation}

\subsection{Fibonacci polynomials and some identities}
\label{subsec:lemmas}
In this section, we prove some lemmas that will be needed to show that the generating functions $F_{\ell }$ and $B_{\ell }$ can be expressed in terms of the \textit{Fibonacci polynomials} $f_{n}(z)$ defined by
\[
f_{n}(z)\coloneqq\sum_{k=0}^{\left\lfloor n/2\right\rfloor }{n-k \choose k}z^{k}
\] 
for $n\geq0$ (and $f_{n}(z)\coloneqq 0$ for $n<0$). The first few polynomials $f_{n}(z)$ are displayed in Table~\ref{tb-fibpoly}. It is well known that the $n$th Fibonacci number $f_n$ (defined with initial terms $f_0=1$ and $f_1=1$) counts square-domino tilings of a $1\times n$ rectangle, and the coefficient ${n-k \choose k}$ of $f_{n}(z)$ counts such tilings with $k$ dominoes and $n-2k$ squares.\footnote{We note that \cite{benj} and some other sources define Fibonacci polynomials slightly differently, by taking the coefficient of $z^n$ in $f_{n}(z)$ to be the number of square-domino tilings of a $1\times n$ rectangle with $k$ squares and $(n-k)/2$ dominoes.} Thus setting $z=1$ in $f_{n}(z)$ recovers the $n$th Fibonacci number $f_n$.

\begin{table}[!h]
\begin{center}
\renewcommand{\arraystretch}{1.3}%
\begin{tabular}{|c|c|c|c|c|}
\cline{1-2} \cline{2-2} \cline{4-5} \cline{5-5} 
$n$ & $f_{n}(z)$ &  & $n$ & $f_{n}(z)$\tabularnewline
\cline{1-2} \cline{2-2} \cline{4-5} \cline{5-5} 
$0$ & $1$ &  & $5$ & $1+4z+3z^{2}$\tabularnewline
\cline{1-2} \cline{2-2} \cline{4-5} \cline{5-5} 
$1$ & $1$ &  & $6$ & $1+5z+6z^{2}+z^{3}$\tabularnewline
\cline{1-2} \cline{2-2} \cline{4-5} \cline{5-5} 
$2$ & $1+z$ &  & $7$ & $1+6z+10z^{2}+4z^{3}$\tabularnewline
\cline{1-2} \cline{2-2} \cline{4-5} \cline{5-5} 
$3$ & $1+2z$ &  & $8$ & $1+7z+15z^{2}+10z^{3}+z^{4}$\tabularnewline
\cline{1-2} \cline{2-2} \cline{4-5} \cline{5-5} 
$4$ & $1+3z+z^{2}$ &  & $9$ & $1+8z+21z^{2}+20z^{3}+5z^{4}$\tabularnewline
\cline{1-2} \cline{2-2} \cline{4-5} \cline{5-5} 
\end{tabular}
\par\end{center}
\caption{The first few Fibonacci polynomials.} \label{tb-fibpoly}
\end{table}

Next, we prove several technical lemmas related to Fibonacci polynomials.

\begin{lemma}
\label{l-frec}
For all $n\geq0$, we have
$f_{n+1}(z)=f_{n}(z)+zf_{n-1}(z)$.
\end{lemma}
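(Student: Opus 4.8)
The plan is to prove the recurrence $f_{n+1}(z) = f_n(z) + z f_{n-1}(z)$ directly from the combinatorial definition
\[
f_n(z) = \sum_{k=0}^{\lfloor n/2\rfloor} \binom{n-k}{k} z^k,
\]
using the well-known interpretation stated just above in the text: $\binom{n-k}{k}$ counts the square-domino tilings of a $1\times n$ rectangle with exactly $k$ dominoes (and hence $n-2k$ squares). First I would dispose of the small cases $n=0$ and $n=1$ by hand, since the convention $f_m(z) \coloneqq 0$ for $m<0$ has to be invoked there: for $n=0$ we need $f_1 = f_0 + z f_{-1} = 1 + 0 = 1$, which checks, and for $n=1$ we need $f_2 = f_1 + z f_0$, i.e. $1+z = 1 + z$, which also checks.

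For $n \geq 2$, the main step is a standard "condition on the last tile" bijection. Consider any square-domino tiling of the $1\times(n+1)$ rectangle. Either the rightmost tile is a square, in which case removing it leaves an arbitrary tiling of the $1\times n$ rectangle with the same number of dominoes; or the rightmost tile is a domino, in which case removing it leaves an arbitrary tiling of the $1\times(n-1)$ rectangle with one fewer domino. Tracking the weight $z^{(\text{number of dominoes})}$, the first case contributes $f_n(z)$ and the second contributes $z\cdot f_{n-1}(z)$, giving the claimed identity. If one prefers a purely algebraic argument instead, the same recurrence follows from the Pascal identity $\binom{n+1-k}{k} = \binom{n-k}{k} + \binom{n-k}{k-1}$ applied termwise to $f_{n+1}(z)$ and a reindexing $k \mapsto k+1$ in the sum coming from the second binomial; I would present whichever is shorter, most likely the combinatorial one since the tiling interpretation has just been recalled.

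I do not expect any real obstacle here: the only thing to be slightly careful about is the edge behavior of the summation bounds (whether $\lfloor (n+1)/2 \rfloor$ introduces or drops a term relative to $\lfloor n/2 \rfloor$ and $\lfloor (n-1)/2\rfloor$), but in the combinatorial proof this is automatic because empty tiling-sets simply contribute nothing, and in the algebraic proof it is absorbed by the convention that $\binom{a}{b}=0$ when $b<0$ or $b>a$. This lemma is essentially the defining recurrence of the Fibonacci polynomials, so it should be a short paragraph.
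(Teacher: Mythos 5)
Your proof is correct and takes essentially the same route as the paper: the paper also cites the standard last-tile (square versus domino) decomposition of square-domino tilings, with $z$ tracking dominoes so that appending a domino contributes the factor $z$ to $f_{n-1}(z)$. Your extra care with the cases $n=0,1$ and the optional Pascal-identity argument are fine but not needed beyond what the paper already does.
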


\begin{proof}
This follows from the combinatorial proof that the number of square-domino tilings of a $1\times n$ rectangle satisfy the Fibonacci recurrence; the term $f_{n-1}(z)$ is multiplied by the variable $z$ (keeping track of dominoes), because that corresponds to the case where we add a domino to the end of a $1\times(n-1)$ tiling to get a $1\times(n+1)$ tiling.
\end{proof}

\begin{lemma}
\label{l-lclike}
For all $n\geq0$, we have
$f_{n}(z)^{2}-f_{n-1}(z)f_{n+1}(z)=(-z)^{n}$.
\end{lemma}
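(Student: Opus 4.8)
The plan is to prove the identity $f_{n}(z)^{2}-f_{n-1}(z)f_{n+1}(z)=(-z)^{n}$ by induction on $n$, using the Fibonacci-type recurrence from Lemma~\ref{l-frec} as the engine. First I would check the base case: for $n=0$ the left side is $f_0(z)^2 - f_{-1}(z)f_1(z) = 1 - 0\cdot 1 = 1 = (-z)^0$, and for $n=1$ it is $f_1(z)^2 - f_0(z)f_2(z) = 1 - (1+z) = -z = (-z)^1$, so the statement holds initially. (Having two base cases is convenient because the inductive step naturally decreases the index by either one or two, depending on how one substitutes the recurrence.)

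For the inductive step, assume the identity holds for $n-1$, i.e. $f_{n-1}(z)^2 - f_{n-2}(z)f_n(z) = (-z)^{n-1}$. I would then expand $f_n(z)^2 - f_{n-1}(z)f_{n+1}(z)$ and substitute $f_{n+1}(z) = f_n(z) + z f_{n-1}(z)$ from Lemma~\ref{l-frec} into the second term:
\[
f_n(z)^2 - f_{n-1}(z)\lbrb{f_n(z) + z f_{n-1}(z)} = f_n(z)\lbrb{f_n(z) - f_{n-1}(z)} - z f_{n-1}(z)^2.
\]
Now rewrite $f_n(z) - f_{n-1}(z) = z f_{n-2}(z)$, again by Lemma~\ref{l-frec} (applied at index $n-1$), so the expression becomes
\[
z f_n(z) f_{n-2}(z) - z f_{n-1}(z)^2 = -z\lbrb{f_{n-1}(z)^2 - f_{n-2}(z)f_n(z)} = -z\cdot(-z)^{n-1} = (-z)^n,
\]
where the middle equality invokes the inductive hypothesis. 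This closes the induction.

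I do not expect any serious obstacle here — the proof is a short two-line manipulation once the recurrence is in hand, and the only thing to be careful about is the convention $f_n(z) = 0$ for $n < 0$, which makes the $n=0$ base case and the recurrence $f_1(z) = f_0(z) + z f_{-1}(z)$ consistent. An alternative, perhaps more elegant, route would be to express $\lbrb{\begin{smallmatrix} f_{n+1}(z) & f_n(z) \\ f_n(z) & f_{n-1}(z)\end{smallmatrix}}$ as the $n$th power of the matrix $\lbrb{\begin{smallmatrix} 1 & 1 \\ z & 0 \end{smallmatrix}}$ (whose determinant is $-z$) and take determinants of both sides; this gives the identity immediately via multiplicativity of the determinant, but it requires first verifying the matrix power formula by its own (trivial) induction, so the direct approach above is the cleanest to present.
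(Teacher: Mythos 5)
Your proof is correct: the base cases are right (given the convention $f_{n}(z)=0$ for $n<0$), and the inductive step is a valid two-application use of Lemma~\ref{l-frec} (at indices $n$ and $n-1$), reducing $f_{n}(z)^{2}-f_{n-1}(z)f_{n+1}(z)$ to $-z\lb f_{n-1}(z)^{2}-f_{n-2}(z)f_{n}(z)\rb$. However, it takes a different route from the paper. The paper treats the identity as a weighted refinement of Cassini's formula and simply invokes the ``tail-swapping'' bijective proof (Benjamin--Quinn, Identity 8), with the variable $z$ tracking dominoes in square--domino tilings; that argument is combinatorially transparent and explains \emph{why} the discrepancy is exactly one tiling pair weighted by $(-z)^{n}$, consistent with the paper's combinatorial treatment of Lemma~\ref{l-frec}. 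Your induction is purely algebraic, uses only the recurrence already established in Lemma~\ref{l-frec}, and is entirely self-contained --- arguably easier to verify line by line, at the cost of losing the bijective insight. Your closing remark about taking determinants of the matrix power $\lb\begin{smallmatrix}1 & 1\\ z & 0\end{smallmatrix}\rb^{n}$ is also a standard and correct alternative, though as you note it just relocates the induction. Either your induction or the paper's combinatorial citation is an acceptable proof of the lemma.
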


\begin{proof}
This formula is a straightforward refinement of \textit{Cassini's formula} $f_{n}^{2}-f_{n-1}f_{n+1}=(-1)^{n}$ for the Fibonacci numbers $f_{n}$, and the ``tail-swapping'' combinatorial proof of Cassini's formula (see \cite[Identity 8]{benj}) suffices to prove this refinement as well.
\end{proof}

\begin{lemma}
\label{l-denomC}
For all $n\geq0$, we have
$\left(f_{n}(-z)-f_{n-1}(-z)zC\right)^{-1}=C^{n}$.
\end{lemma}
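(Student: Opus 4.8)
The plan is to prove the equivalent statement $C^{-n} = f_n(-z) - f_{n-1}(-z)\,zC$ by induction on $n$, using the defining quadratic $zC^2 = C - 1$ (equivalently $C^{-1} = 1 - zC$) together with the Fibonacci recurrence from Lemma \ref{l-frec}. First I would check the base cases: for $n=0$ we need $C^0 = 1 = f_0(-z) - f_{-1}(-z)\,zC = 1 - 0 = 1$, and for $n=1$ we need $C^{-1} = f_1(-z) - f_0(-z)\,zC = 1 - zC$, which is exactly the rearranged Catalan functional equation $C = \frac{1}{1-zC}$ stated in the excerpt.

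For the inductive step, suppose the identity holds for $n-1$ and $n$; multiply the $n$th instance by $C^{-1} = 1 - zC$ to get
\begin{align*}
C^{-(n+1)} &= \left(f_n(-z) - f_{n-1}(-z)\,zC\right)(1 - zC) \\
&= f_n(-z) - f_n(-z)\,zC - f_{n-1}(-z)\,zC + f_{n-1}(-z)\,z^2C^2.
\end{align*}
Now substitute $zC^2 = C - 1$, so $z^2 C^2 = zC - z$, turning the last term into $f_{n-1}(-z)(zC - z)$. Collecting the $zC$ coefficients and the constant terms yields
\begin{align*}
C^{-(n+1)} &= \left(f_n(-z) - z f_{n-1}(-z)\right) - \left(f_n(-z) + f_{n-1}(-z) - f_{n-1}(-z)\right) zC \\
&= \left(f_n(-z) - z f_{n-1}(-z)\right) - f_n(-z)\,zC.
\end{align*}
Finally, apply Lemma \ref{l-frec} with $z$ replaced by $-z$: $f_{n+1}(-z) = f_n(-z) + (-z) f_{n-1}(-z) = f_n(-z) - z f_{n-1}(-z)$. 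This identifies the constant term as $f_{n+1}(-z)$ and the $zC$-coefficient as $f_n(-z)$, giving $C^{-(n+1)} = f_{n+1}(-z) - f_n(-z)\,zC$, which completes the induction.

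I do not expect any real obstacle here; the only mild subtlety is bookkeeping the sign substitution $z \mapsto -z$ in the Fibonacci recurrence correctly and making sure the $z^2C^2$ reduction via the Catalan quadratic is applied in the right place. Alternatively, one could avoid induction entirely: the two roots of $t^2 = t\cdot 1 - t\cdot zC$... more cleanly, note that $C$ and its conjugate $\bar C = \frac{1-\sqrt{1-4z}}{2z}$'s companion both satisfy $w^{-1} = 1 - zw$, and the sequence $a_n \coloneqq f_n(-z) - f_{n-1}(-z)zC$ satisfies the same first-order recurrence $a_{n+1} = (1-zC)a_n = C^{-1}a_n$ with $a_0 = 1$, so $a_n = C^{-n}$; this is really just a repackaging of the induction above and I would present whichever is shorter in the final writeup.
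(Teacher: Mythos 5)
Your proof is correct and is essentially the paper's own argument: induction on $n$ with base cases $n=0,1$, using the Catalan functional equation $C=1+zC^{2}$ together with Lemma \ref{l-frec} at $-z$ in the inductive step. The only difference is cosmetic---you manipulate the reciprocal identity $C^{-n}=f_{n}(-z)-f_{n-1}(-z)zC$ by multiplying by $C^{-1}=1-zC$, whereas the paper carries out the same computation inside the inverse.
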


\begin{proof}
The case $n=0$ reduces to the statement $(1-0\cdot zC)^{-1}=1$, which is obviously true, and the case $n=1$ reduces to 
$(1-zC)^{-1}=C$, which is equivalent to the functional equation $C=1+zC^{2}$ for the Catalan generating function. Now, assume that 
$\left(f_{n}(-z)-f_{n-1}(-z)zC\right)^{-1}=C^{n}$
for a fixed $n\geq1$. We then have
\begin{align*}
C^{n+1} & =\left(f_{n}(-z)-f_{n-1}(-z)zC\right)^{-1}(1-zC)^{-1}\\
 & =\left(f_{n}(-z)-f_{n-1}(-z)zC-f_{n}(-z)zC+f_{n-1}(-z)z^{2}C^{2}\right)^{-1}\\
 & =\left(f_{n}(-z)-zf_{n-1}(-z)+zf_{n-1}(-z)-f_{n-1}(-z)zC-f_{n}(-z)zC+f_{n-1}(-z)z^{2}C^{2}\right)^{-1}\\
 & =\left(f_{n+1}(-z)-f_{n}(-z)zC+zf_{n-1}(-z)\left(1-C+zC^{2}\right)\right)^{-1}\\
 & =\left(f_{n+1}(-z)-f_{n}(-z)zC\right)^{-1},
\end{align*}
and the result follows by induction.
\end{proof}

\begin{lemma}
\label{l-fderiv}
For all $n\geq0$, we have
\[
nf_{n}(-z)-2z\frac{\D}{\D z}f_{n}(-z)=-\frac{\D}{\D z}f_{n+1}(-z).
\]
\end{lemma}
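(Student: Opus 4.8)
The plan is to prove the identity by extracting the coefficient of $z^{k}$ from both sides and checking that the resulting equality is an instance of a single elementary binomial identity. First I would write out $f_{n}(-z)=\sum_{k\geq 0}(-1)^{k}\binom{n-k}{k}z^{k}$, noting that the sum effectively ranges over $0\leq k\leq\lfloor n/2\rfloor$ because $\binom{n-k}{k}$ vanishes outside that range. Differentiating termwise gives $\frac{\D}{\D z}f_{n}(-z)=\sum_{k\geq 1}(-1)^{k}k\binom{n-k}{k}z^{k-1}$, so the coefficient of $z^{k}$ in the left-hand side $nf_{n}(-z)-2z\frac{\D}{\D z}f_{n}(-z)$ is $(-1)^{k}(n-2k)\binom{n-k}{k}$.

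Next I would compute the right-hand side similarly: $-\frac{\D}{\D z}f_{n+1}(-z)=-\sum_{j\geq 1}(-1)^{j}j\binom{n+1-j}{j}z^{j-1}$, whose coefficient of $z^{k}$ (taking $j=k+1$) is $(-1)^{k}(k+1)\binom{n-k}{k+1}$. Thus it suffices to verify that
\[
(n-2k)\binom{n-k}{k}=(k+1)\binom{n-k}{k+1}
\]
for every $k\geq 0$. This follows at once from the absorption relation $\binom{m}{j+1}=\frac{m-j}{j+1}\binom{m}{j}$ with $m=n-k$ and $j=k$, which gives $\binom{n-k}{k+1}=\frac{n-2k}{k+1}\binom{n-k}{k}$; clearing denominators yields the desired equality. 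The displayed identity remains valid at the extreme values of $k$ (where one or both binomial coefficients may be $0$), so no separate boundary analysis is needed; in particular this argument covers the small cases $n=0$ and $n=1$ without special treatment.

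I do not expect any genuine obstacle: once the claim is rephrased as a coefficient comparison, it reduces to a one-line binomial manipulation. If one preferred to avoid computing with coefficients, an induction on $n$ using the Fibonacci recurrence $f_{n+1}(z)=f_{n}(z)+zf_{n-1}(z)$ from Lemma~\ref{l-frec} would also work, but it forces one to track an auxiliary relation alongside the stated one through the inductive step, so I would favor the direct computation.
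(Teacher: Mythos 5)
Your proof is correct, but it takes a different route from the paper. You expand $f_{n}(-z)$ and $f_{n+1}(-z)$ termwise and compare coefficients of $z^{k}$, reducing the identity to $(n-2k)\binom{n-k}{k}=(k+1)\binom{n-k}{k+1}$, which is exactly the absorption relation; the computation of both coefficients is right, and the boundary cases (e.g.\ $k=\lfloor n/2\rfloor$ with $n$ even, where both sides vanish) work out as you say, given the usual convention that $\binom{m}{k}=0$ for $k>m\geq 0$ together with the truncated defining sum. The paper instead works with the bivariate generating function $G(z,x)=\sum_{n\geq0}f_{n}(z)x^{n}=1/(1-x-zx^{2})$, derives the partial differential relation $x\,\partial_{x}G(-z,x)-2z\,\partial_{z}G(-z,x)=-\tfrac{1}{x}\partial_{z}G(-z,x)$, and extracts coefficients of $x^{n}$. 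Your argument is more elementary and self-contained---a one-line binomial manipulation once the coefficients are written down---whereas the paper's generating-function setup earns its keep because the same object $\partial_{z}G(-z,x)=-x^{2}/(1-x+zx^{2})^{2}$ is reused immediately afterward in the proof of Lemma~\ref{l-fderiv2}; either proof is acceptable for the lemma itself.
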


\begin{proof}
Consider the ordinary generating function
\[
G(z,x)\coloneqq\sum_{n=0}^{\infty}f_{n}(z)x^{n}=\frac{1}{1-x-zx^{2}}
\]
for the Fibonacci polynomials. Taking the partial derivatives of $G(-z,x)$ yields
\[
\frac{\partial}{\partial x}G(-z,x)=\frac{1-2zx}{(1-x+zx^{2})^2}\quad\text{and}\quad\frac{\partial}{\partial z}G(-z,x)=\frac{-x^{2}}{(1-x+zx^{2})^2},
\]
so we have the partial differential equation
\[
x^{2}\frac{\partial}{\partial x}G(-z,x)=(2zx-1)\frac{\partial}{\partial z}G(-z,x),
\]
which is equivalent to
\begin{equation}
x\frac{\partial}{\partial x}G(-z,x)-2z\frac{\partial}{\partial z}G(-z,x)=-\frac{1}{x}\frac{\partial}{\partial z}G(-z,x).\label{e-diffeq}
\end{equation}
Extracting coefficients of $x^{n}$ from both sides of (\ref{e-diffeq}) completes the proof.
\end{proof}

\begin{lemma}
\label{l-fderiv2}
For all $n\geq0$, we have
\[
f_{n}(-z)=\frac{\D}{\D z}f_{n+1}(-z)-z\frac{\D}{\D z}f_{n}(-z)+\frac{\D}{\D z}f_{n+2}(-z).
\]
\end{lemma}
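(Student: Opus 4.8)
The plan is to read off Lemma~\ref{l-fderiv2} directly from the Fibonacci recurrence of Lemma~\ref{l-frec}, after passing to the argument $-z$. First I would reindex Lemma~\ref{l-frec} (replacing $n$ by $n+1$) to get the polynomial identity $f_{n+2}(z) = f_{n+1}(z) + z f_{n}(z)$, valid for all $n \geq 0$; the case $n = 0$ reads $f_2 = f_1 + z f_0$, so no edge case is lost. Substituting $z \mapsto -z$ then gives
\[
f_{n+2}(-z) = f_{n+1}(-z) - z f_{n}(-z).
\]

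Next I would differentiate this relation with respect to $z$. Applying the product rule to the term $-z f_{n}(-z)$ yields
\[
\frac{\D}{\D z} f_{n+2}(-z) = \frac{\D}{\D z} f_{n+1}(-z) - f_{n}(-z) - z\,\frac{\D}{\D z} f_{n}(-z),
\]
and solving this equation for $f_{n}(-z)$ gives precisely the three-term identity asserted in Lemma~\ref{l-fderiv2}. The argument is uniform in $n \geq 0$ with no separate base case, since the reindexed recurrence already involves only nonnegative indices. The one point that calls for care --- and it is the closest thing here to an obstacle --- is the chain-rule bookkeeping: throughout, $\frac{\D}{\D z} f_{m}(-z)$ must be read as the derivative of the composite function (which equals $-f_{m}'(-z)$) and kept distinct from $f_{m}'(-z)$, since confusing the two flips signs in the final identity.

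Before writing this up I would run a quick consistency check: verify the identity by hand for $n = 0, 1, 2$ from Table~\ref{tb-fibpoly}, and confirm that it is compatible with Lemma~\ref{l-fderiv}, which expresses $\frac{\D}{\D z} f_{n+1}(-z)$ through $f_{n}(-z)$ and $\frac{\D}{\D z} f_{n}(-z)$; substituting that (and the recurrence above) into the right-hand side and simplifying makes the explicit factors of $n$ cancel and recovers the same relation. Since the core derivation amounts to a single differentiation and a rearrangement, I anticipate no genuine difficulty, and I would present the direct route above as the proof.
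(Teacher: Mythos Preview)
Your argument is correct and is a genuinely different, more elementary route than the paper's. The paper works with the bivariate generating function $G(-z,x)=1/(1-x+zx^{2})$ and its $z$-derivative $\partial_z G(-z,x)=-x^{2}/(1-x+zx^{2})^{2}$, then rewrites $1/(1-x+zx^{2})$ as $(1-x+zx^{2})/(1-x+zx^{2})^{2}$ and extracts coefficients of $x^{n}$. You instead substitute $z\mapsto -z$ in the Fibonacci recurrence and differentiate once; this is a one-line proof. The generating-function approach buys consistency with the surrounding lemmas (which are all proved that way), while your approach makes the dependence on Lemma~\ref{l-frec} explicit and avoids any auxiliary machinery.

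One remark worth making in your write-up: both your derivation and the paper's proof actually produce
\[
f_{n}(-z)=\frac{\D}{\D z}f_{n+1}(-z)-z\,\frac{\D}{\D z}f_{n}(-z)-\frac{\D}{\D z}f_{n+2}(-z),
\]
with a \emph{minus} sign on the last term; this is also the form used downstream in the proof of Theorem~\ref{t-Dgf}. The ``$+$'' in the displayed statement is a typo, as your sanity check at $n=0$ (where the right-hand side with ``$+$'' evaluates to $-1$) would reveal. So when you say ``solving \dots\ gives precisely the three-term identity asserted,'' flag the sign correction rather than silently matching a misprinted statement.
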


\begin{proof}
We can use the generating function
\[
\sum_{n=0}^{\infty}\frac{\D}{\D z}f_{n}(-z)x^{n}=\frac{\partial}{\partial z}G(-z,x)=\frac{-x^{2}}{(1-x+zx^{2})^{2}}
\]
from the proof of Lemma \ref{l-fderiv}.
Observe that 
\begin{align*}
\sum_{n=0}^{\infty}f_{n}(-z)x^{n} & =\frac{1}{1-x+zx^{2}}\\
 & =\frac{-x}{(1-x+zx^{2})^{2}}+\frac{zx^{2}}{(1-x+zx^{2})^{2}}+\frac{1}{(1-x+zx^{2})^{2}}\\
 & =\sum_{n=0}^{\infty}\Big(\frac{\D}{\D z}f_{n+1}(-z)-z\frac{\D}{\D z}f_{n}(-z)-\frac{\D}{\D z}f_{n+2}(-z)\Big)x^{n}.
\end{align*}
Extracting coefficients of $x^{n}$ completes the proof.
\end{proof}

\subsection{Alternative $\totalB$ formula} 
\label{subsec:genfuncs}
We are now ready to establish the connection between the generating functions $F_\ell$ and $B_\ell$ and Fibonacci polynomials, which will lead to an explicit formula for $\totalB(n,\ell )$ as a difference of two alternating sums.
\begin{theorem}
\label{t-gf}
For all $\ell\geq1$, we have \leqnomode
\[
\tag{{a}}\qquad F_{\ell }(x,y,z)=\frac{f_{\ell -1}(-xz)-f_{\ell -2}(-xz)yzC}{f_{\ell }(-xz)-f_{\ell -1}(-xz)yzC}
\]
and 
\[
\tag{{b}}\qquad B_{\ell }(z)=z^{\ell }C^{3\ell +1}\left(zC\frac{\D}{\D z}f_{\ell }(-z)-\frac{\D}{\D z}f_{\ell +1}(-z)\right).
\]
\end{theorem}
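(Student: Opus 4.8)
The plan is to prove part (a) by induction on $\ell$ and then derive part (b) from part (a) via the operator formula \eqref{e-BF}. For part (a), the base case $\ell = 1$ asks us to check that $F_1 = \dfrac{f_0(-xz) - f_{-1}(-xz)yzC}{f_1(-xz) - f_0(-xz)yzC} = \dfrac{1}{1-yzC}$, which matches \eqref{eq:F-1} since $f_{-1} = 0$ and $f_0 = f_1 = 1$. For the inductive step, I would assume the formula holds for $\ell - 1$ and substitute into the recurrence \eqref{eq:F-l}, namely $F_\ell = (1 - xzF_{\ell-1})^{-1}$. Writing $a = f_{\ell-2}(-xz)$, $b = f_{\ell-3}(-xz)$ to ease notation, the induction hypothesis gives $F_{\ell-1} = \dfrac{a - byzC}{f_{\ell-1}(-xz) - a\,yzC}$, so
\[
1 - xzF_{\ell-1} = \frac{f_{\ell-1}(-xz) - a\,yzC - xz(a - byzC)}{f_{\ell-1}(-xz) - a\,yzC}.
\]
The numerator equals $\bigl(f_{\ell-1}(-xz) - xz\,f_{\ell-2}(-xz)\bigr) - \bigl(f_{\ell-2}(-xz) - xz\,f_{\ell-3}(-xz)\bigr)yzC$, and by the Fibonacci recurrence of Lemma~\ref{l-frec} applied with argument $-xz$ (so $f_{\ell}(-xz) = f_{\ell-1}(-xz) - xz\,f_{\ell-2}(-xz)$), this is exactly $f_\ell(-xz) - f_{\ell-1}(-xz)yzC$. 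Taking reciprocals yields the claimed formula for $F_\ell$, completing the induction.

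For part (b), I would apply the differential operator $\partial^2/\partial x\partial y + \tfrac12\,\partial^2/\partial y^2 + \partial/\partial y$ to the formula in part (a) and evaluate at $x = y = 1$. The key simplification is that at $x = 1$ the denominator of $F_\ell$ becomes $f_\ell(-z) - f_{\ell-1}(-z)zC$, which by Lemma~\ref{l-denomC} equals $C^{-\ell}$; similarly $f_{\ell-1}(-z) - f_{\ell-2}(-z)zC = C^{-(\ell-1)}$, so $F_\ell\big|_{x=y=1} = C^{-(\ell-1)}/C^{-\ell} = C$, as it should be since this tracks no statistics. Since $F_\ell = N/M$ with $N, M$ affine in $y$, the $y$-derivatives are straightforward: $\partial_y F_\ell = (N_y M - N M_y)/M^2$ where $N_y = -f_{\ell-2}(-xz)zC$ and $M_y = -f_{\ell-1}(-xz)zC$ are $y$-independent, and $\partial_y^2 F_\ell = -2M_y(N_y M - N M_y)/M^3 = 2M_y \cdot \partial_y F_\ell / M \cdot(-1)$; after evaluating at $y=1$ using $M = C^{-\ell}$, $N = C^{-(\ell-1)}$ and the Cassini-type identity Lemma~\ref{l-lclike} to simplify $N_y M - N M_y = zC(-f_{\ell-2}M + f_{\ell-1}N)\big|_{x=1}$, everything collapses to powers of $C$ and $z$. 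The mixed derivative $\partial_x\partial_y$ is where the derivative formulas for Fibonacci polynomials enter: differentiating the $y$-linear coefficients in $x$ produces $\tfrac{\D}{\D z}f_\ell(-z)$ and $\tfrac{\D}{\D z}f_{\ell+1}(-z)$ terms (via the chain rule, since $x$ appears only through $xz$), and Lemmas~\ref{l-fderiv} and~\ref{l-fderiv2} are exactly the identities needed to reorganize the resulting combination of $f$'s and their derivatives into the compact form $zC\,\tfrac{\D}{\D z}f_\ell(-z) - \tfrac{\D}{\D z}f_{\ell+1}(-z)$.

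The main obstacle I anticipate is the bookkeeping in the $\partial_x\partial_y$ computation: one must track $\partial_x$ acting on $C = C(z)$ (which does not occur, since $C$ carries no $x$), on the explicit factors $f_j(-xz)$, and on the denominator $M$ to a negative power, and then carefully collect the many terms — each a product of Fibonacci polynomials, their $z$-derivatives, powers of $C$, and powers of $z$ — and repeatedly invoke $C = 1 + zC^2$ together with Lemmas~\ref{l-lclike}, \ref{l-fderiv}, \ref{l-fderiv2} to force cancellation down to the two-term answer. It would help to first record, as intermediate identities, clean closed forms for $\partial_x F_\ell\big|_{x=1}$ and for $(N_yM - NM_y)\big|_{x=1,y=1}$ in terms of $C$ and $z$ alone, so that the final assembly of $B_\ell$ is a short substitution rather than a sprawling simplification. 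Once (b) is in hand, extracting the coefficient of $z^n$ — using the known expansion of $C^k$ and of $\tfrac{\D}{\D z}f_m(-z)$ — will give the promised alternating-sum formula for $\totalB(n,\ell)$, but that is deferred to the subsequent Theorem~\ref{t-totalBsum} and is not needed here.
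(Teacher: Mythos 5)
Your proposal is correct and follows essentially the same route as the paper: part (a) by induction using the recurrence \eqref{eq:F-l} together with Lemma~\ref{l-frec}, and part (b) by applying the operator in \eqref{e-BF} to (a) and simplifying with the Cassini-type identity (Lemma~\ref{l-lclike}), Lemma~\ref{l-denomC}, the evaluation \eqref{e-dfldx}, and Lemma~\ref{l-fderiv}. The only minor deviations are that the paper does not need Lemma~\ref{l-fderiv2} in this proof (it is used later, for Theorem~\ref{t-Dgf}), and that it verifies the case $\ell=1$ of the mixed derivative separately, since there $\partial^{2}F_{1}/\partial x\partial y=0$ and the general formula involving $x^{\ell-2}$ requires that small check.
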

\begin{remark}
The explicit generating function in Theorem~\ref{t-gf} can also be viewed as a possible starting point for an asymptotic derivation of \(\totalB(n,\ell)\), for instance by using singularity-analysis methods for bivariate coefficients such as Drmota's theorem on coefficients of powers of generating functions~\cite[Theorem~4]{Drmota-94}. Such an approach would require singularity analysis uniform in the level parameter \(\ell\), particularly in the regime \(\ell=s\sqrt n\). In the present paper we do not pursue this second asymptotic derivation, since Theorem~\ref{thm: totalB} already gives the asymptotics needed for the BFS and DFS comparison. The purpose of this section is instead to give an independent exact-formula approach and to record the connection with continued fractions and Fibonacci polynomials.
\end{remark}
\begin{remark}
Using a basic continued fraction observation (used in \cite{de1972average}), one can guess that 
$$
F_{\ell} = \frac{T_{\ell-1}}{T_{\ell}},
$$
for some polynomial $T_{\ell}(x,y,z)$. Indeed, if $F_{\ell} = \frac{P_{\ell}}{T_{\ell}}$ for some polynomials $P$ and $T$, then Equation~\eqref{eq:F-l} implies
$$
F_{\ell} = \frac{P_{\ell}}{T_{\ell}} = \frac{1}{1-xzP_{\ell-1}/T_{\ell-1}} = \frac{T_{\ell-1}}{T_{\ell-1} - xzP_{\ell-1}},
$$
and thus $P_{\ell} = T_{\ell-1}$, so 
\begin{equation}
\label{eq:reqT}
    T_{\ell} = T_{\ell-1} - xzP_{\ell-1} = T_{\ell-1} - xzT_{\ell-2}.
\end{equation}
Now, if we know that $T_{l} = f_{\ell }(-xz)-f_{\ell -1}(-xz)yzC$, it is straightforward to verify the recurrence~\eqref{eq:reqT}, using Lemma~\ref{l-frec}. In fact, one can find an expression for $T_{l}$, and thus for $F_{\ell}$ by solving that recurrence. We give a less technical and direct inductive proof of Theorem~\ref{t-gf} a). 
\end{remark}
\begin{remark}
In order to find $F_{\ell }(x,y,z)$, one can also use the results of Kemp \cite{kemp1990number}, who found an expression for $Q_{k}(x,y) = \sum_{n\geq 1}\sum_{r\geq 1} Q_{n,k,r}x^{n}y^{r}$, where $Q_{n,k,r}$ is the number of trees in $\mathcal{T}_{n+1}$ with $r$ nodes at maximal level $k$. Specifically Lemma~2 and Equation~8 in \cite{kemp1990number}, together with the observation that $Q_{k}(x,y)$ is equal to $F_{l}(x,y,z)$ after the substitution $x\mapsto xz$, $y\mapsto \frac{y}{x}\frac{1-\sqrt{1-4z}}{2z}$, will give us an expression for $F_{l}(x,y,z)$. However, the described approach would be comparatively technical and does not exhibit the connection with the Fibonacci polynomials. 
\end{remark}

\begin{proof}[Proof of Theorem~\ref{t-gf}]
We begin by proving (a). Recall that
\[
F_{1}=\frac{1}{1-yzC} \qquad \text{and} \qquad F_{\ell }=\frac{1}{1-xzF_{\ell -1}} \quad (l \geq 2).
\]
The desired result is easily verified for $\ell=1$ using the formula for $F_1$. Proceeding via induction, let us assume that the result holds for a fixed $\ell\geq1$. Then, by the recursive formula for $F_\ell$, we have
\begin{align*}
F_{\ell +1} & =\frac{1}{1-xz\frac{f_{\ell -1}(-xz)-f_{\ell -2}(-xz)yzC}{f_{\ell }(-xz)-f_{\ell -1}(-xz)yzC}}\\
 & =\frac{1}{\frac{f_{\ell }(-xz)-f_{\ell -1}(-xz)yzC}{f_{\ell }(-xz)-f_{\ell -1}(-xz)yzC}-\frac{f_{\ell -1}(-xz)xz-f_{\ell -2}(-xz)xyz^{2}C}{f_{\ell }(-xz)-f_{\ell -1}(-xz)yzC}}\\
 & =\frac{f_{\ell }(-xz)-f_{\ell -1}(-xz)yzC}{f_{\ell }(-xz)-f_{\ell -1}(-xz)yzC-f_{\ell -1}(-xz)xz+f_{\ell -2}(-xz)xyz^{2}C}\\
 & =\frac{f_{\ell }(-xz)-f_{\ell -1}(-xz)yzC}{(f_{\ell }(-xz)-f_{\ell -1}(-xz)xz)-(f_{\ell -1}(-xz)-f_{\ell -2}(-xz)xz)yzC}\\
 & =\frac{f_{\ell }(-xz)-f_{\ell -1}(-xz)yzC}{f_{\ell +1}(-xz)-f_{\ell }(-xz)yzC},
\end{align*}
where the last equality is obtained via Lemma \ref{l-frec}. Hence (a) is proven.

We now move on to (b). In light of \eqref{e-BF}, let us proceed by taking the required derivatives. Using the formula from (a), we get 
\begin{equation}
\frac{\partial F_{\ell }}{\partial y} =\frac{(f_{\ell -1}(-xz)^{2}-f_{\ell -2}(-xz)f_{\ell }(-xz))zC}{(f_{\ell }(-xz)-f_{\ell -1}(-xz)yzC)^{2}}
 =\frac{x^{\ell -1}z^{\ell }C}{(f_{\ell }(-xz)-f_{\ell -1}(-xz)yzC)^{2}}, \label{e-dFldy}
\end{equation}
where the last step is obtained via Lemma \ref{l-lclike}. Taking the derivative with respect to $y$ and dividing by $2$ yields
\begin{equation}
\frac{1}{2}\frac{\partial^{2}F_{\ell }}{\partial y^{2}}=\frac{f_{\ell -1}(-xz)x^{\ell -1}z^{\ell +1}C^{2}}{(f_{\ell }(-xz)-f_{\ell -1}(-xz)yzC)^{3}}, \label{e-dFldy2}
\end{equation}
where
\[
\frac{1}{2}\frac{\partial^{2}F_{\ell }}{\partial y^{2}}+\frac{\partial F_{\ell }}{\partial y}=\frac{x^{\ell -1}z^{\ell }C\left(f_{\ell }(-xz)+f_{\ell -1}(-xz)(1-y)zC\right)}{(f_{\ell }(-xz)-f_{\ell -1}(-xz)yzC)^{3}}.
\]
Therefore,
\begin{equation*}
\left[\frac{1}{2}\frac{\partial^{2}F_{\ell }}{\partial y^{2}}+\frac{\partial F_{\ell }}{\partial y}\right]_{x=1,\,y=1} =\frac{f_{\ell }(-z)z^{\ell }C}{(f_{\ell }(-z)-f_{\ell -1}(-z)zC)^{3}} =z^{\ell }C^{3\ell +1}f_{\ell }(-z)
\end{equation*}
from Lemma \ref{l-denomC}.

Now, consider the derivative of $\partial F_{\ell }/\partial y$ with respect to $x$. If $\ell=1$, then $\partial F_{\ell }/\partial y$ does not contain $x$, so $\partial^{2}F_{\ell }/\partial x\partial y=0$ and therefore
\begin{align*}
B_{1} & =\left[\frac{1}{2}\frac{\partial^{2}F_{\ell }}{\partial y^{2}}+\frac{\partial F_{\ell }}{\partial y}\right]_{x=1,\,y=1}\\
 & =z^{1}C^{3(1)+1}f_{1}(-z)\\
 & =z^{1}C^{3(1)+1}\cdot1\\
 & =z^{1}C^{3(1)+1}\left(zC\frac{\D}{\D z}f_{1}(-z)-\frac{\D}{\D z}f_{2}(-z)\right).
\end{align*} 
If $\ell\geq2$, then we have
\[
\frac{\partial^{2}F_{\ell }}{\partial x\partial y}=\frac{x^{\ell -2}z^{\ell }C\left((\ell -1)\left(f_{\ell }(-xz)-f_{\ell -1}(-xz)yzC\right)-2x\left(\frac{\partial f_{\ell }(-xz)}{\partial x}-\frac{\partial f_{\ell -1}(-xz)}{\partial x}yzC\right)\right)}{(f_{\ell }(-xz)-f_{\ell -1}(-xz)yzC)^{3}}.
\]
Observing that 
\begin{equation}
\label{e-dfldx}
\left.\frac{\partial f_{\ell }(-xz)}{\partial x}\right|_{x=1}=z\frac{\D}{\D z}f_{\ell }(-z)
\end{equation}
and using Lemma \ref{l-denomC}, we get 
\begin{align}
\mkern-36mu \left.\frac{\partial^{2}F_{\ell }}{\partial x\partial y}\right|_{x=1,\,y=1} &=
\frac{z^{\ell }C\left((\ell -1)\left(f_{\ell }(-z)-f_{\ell -1}(-z)zC\right)-2\left(z\frac{\D}{\D z}f_{\ell }(-z)-z\frac{\D}{\D z}f_{\ell -1}(-z)zC\right)\right)}{(f_{\ell }(-xz)-f_{\ell -1}(-xz)yzC)^{3}}\nonumber \\
 & \mkern-50mu = z^{\ell }C^{3\ell +1}\left((\ell -1)\left(f_{\ell }(-z)-f_{\ell -1}(-z)zC\right)-2\left(z\frac{\D}{\D z}f_{\ell }(-z)-z^{2}C\frac{\D}{\D z}f_{\ell -1}(-z)\right)\right), \label{e-dFldxdy}
\end{align}
and thus 
\begin{align*}
 & B_{\ell }(z)=\left[\Big(\frac{\partial^{2}}{\partial x\partial y}+\frac{1}{2}\cdot\frac{\partial^{2}}{\partial y^{2}}+\frac{\partial}{\partial y}\Big)\,F_{\ell }\right]_{\substack{x=1,\,y=1}
}\\
 & \quad=z^{\ell }C^{3\ell +1}\left((\ell -1)\left(f_{\ell }(-z)-f_{\ell -1}(-z)zC\right)-2\left(z\frac{\D}{\D z}f_{\ell }(-z)-z^{2}C\frac{\D}{\D z}f_{\ell -1}(-z)\right)+f_{\ell }(-z)\right)\\
 & \quad=z^{\ell }C^{3\ell +1}\left(\ell f_{\ell }(-z)-2z\frac{\D}{\D z}f_{\ell }(-z)-\left((\ell -1)f_{\ell -1}(-z)-2\frac{\D}{\D z}f_{\ell -1}(-z)\right)zC\right)\\
 & \quad=z^{\ell }C^{3\ell +1}\left(zC\frac{\D}{\D z}f_{\ell }(-z)-\frac{\D}{\D z}f_{\ell +1}(-z)\right)
\end{align*}
from Lemma \ref{l-fderiv}.
\end{proof}
\begin{remark}
    As a sanity check, using Theorem~\ref{t-gf} a) and Lemma~\ref{l-denomC}, we can see that $F_{\ell}(1,1,z) = \frac{C^{-(l-1)}}{C^{-l}} = C$.
\end{remark}

By extracting coefficients from the generating function $B_{\ell } $, we get an explicit formula for $\totalB(n,\ell )$, for which we give a proof below. In this proof, we use the standard notation $[z^n]\, f$ for the coefficient of $z^n$ in $f$. 

\begin{theorem} \label{t-totalBsum}
For all $n \geq 0$ and $0 \leq \ell \leq n$, we have 
\begin{align}
\label{eq:totalBalt}
\totalB(n,\ell ) & =\sum_{k=1}^{\left\lfloor (\ell +1)/2\right\rfloor }(-1)^{k-1}k{\ell -k+1 \choose k}\frac{3\ell +1}{n-k+2\ell +2}{2n-2k+\ell +2 \choose n-\ell -k+1}\\
 & \qquad\qquad\qquad-\sum_{k=1}^{\left\lfloor \ell /2\right\rfloor }(-1)^{k-1}k{\ell -k \choose k}\frac{3\ell +2}{n-k+2\ell +2}{2n-2k+\ell +1 \choose n-\ell -k}.\nonumber
\end{align}
\end{theorem}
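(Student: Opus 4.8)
The plan is to extract the coefficient $[z^n]$ from the generating function formula in Theorem~\ref{t-gf}(b), namely
\[
\totalB(n,\ell) = [z^n]\; z^{\ell}C^{3\ell+1}\left(zC\frac{\D}{\D z}f_{\ell}(-z) - \frac{\D}{\D z}f_{\ell+1}(-z)\right),
\]
and to do this I would handle the two summands separately, since the formula to be proven is visibly a difference of two analogous sums — the first coming from the $zC\,\frac{\D}{\D z}f_\ell(-z)$ term (note $zC^{3\ell+1}\cdot zC = z^2 C^{3\ell+2}$, giving the $3\ell+2$ appearing with $f_\ell$-data), and the second from the $-\frac{\D}{\D z}f_{\ell+1}(-z)$ term (giving $z C^{3\ell+1}$ with $f_{\ell+1}$-data, hence $3\ell+1$). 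Wait — comparing with the stated formula, the first sum carries $3\ell+1$ and involves $\binom{\ell-k+1}{k}$ (i.e. $f_{\ell+1}$-coefficients), while the second carries $3\ell+2$ and involves $\binom{\ell-k}{k}$ (i.e. $f_\ell$-coefficients); so in fact the first displayed sum comes from the $-\frac{\D}{\D z}f_{\ell+1}(-z)$ piece contributing $z^\ell C^{3\ell+1}$, and the second from the $zC\frac{\D}{\D z}f_\ell(-z)$ piece contributing $z^{\ell+1}C^{3\ell+2}$, the overall sign being absorbed into the alternating factors. Either way, the structure is clear and the bookkeeping is routine once the ingredients are assembled.

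The key ingredients I would assemble first are: (i) the explicit form $\frac{\D}{\D z}f_{m}(-z) = \sum_{k\ge 1}(-1)^k k\binom{m-k}{k} z^{k-1}$, obtained by differentiating $f_m(-z) = \sum_k \binom{m-k}{k}(-z)^k$ term by term; (ii) the Lagrange-inversion / Catalan-convolution identity giving the coefficients of powers of the Catalan generating function, namely $[z^n]\, C^{r} = \frac{r}{2n+r}\binom{2n+r}{n}$ (equivalently $\frac{r}{n}\binom{2n+r-1}{n-1}$ for $n\ge 1$), which is the generating-function shadow of Lemma~\ref{lemma:paths}. Then each summand is a Cauchy product: extracting $[z^n]$ from $z^{a}C^{b}\cdot\big(\text{a polynomial } \sum_k c_k z^{k-1}\big)$ becomes $\sum_k c_k\, [z^{\,n-a-k+1}]\,C^{b}$, and plugging in the two explicit formulas above yields exactly the summands in \eqref{eq:totalBalt}, after matching the parameters $a = \ell$ or $\ell+1$, $b = 3\ell+1$ or $3\ell+2$, and $c_k = (-1)^k k\binom{\ell-k}{k}$ or $(-1)^k k\binom{\ell+1-k}{k}$. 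The upper limits $\lfloor(\ell+1)/2\rfloor$ and $\lfloor\ell/2\rfloor$ are inherited from the degrees of the Fibonacci polynomials $f_{\ell+1}$ and $f_\ell$, and the sign flip $(-1)^k \mapsto (-1)^{k-1}$ together with the leading minus sign in front of the second sum records the $-\frac{\D}{\D z}f_{\ell+1}$ and the fact that $z C \frac{\D}{\D z}f_\ell$ carries its own sign pattern.

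Concretely, the steps in order would be: (1) write $f_{\ell+1}(-z) = \sum_{k=0}^{\lfloor(\ell+1)/2\rfloor}\binom{\ell+1-k}{k}(-z)^k$ and differentiate to get $\frac{\D}{\D z}f_{\ell+1}(-z) = \sum_{k\ge1}(-1)^k k\binom{\ell+1-k}{k}z^{k-1}$, and similarly for $f_\ell$; (2) multiply by the monomial-and-Catalan-power prefactors to get $z^\ell C^{3\ell+1}\frac{\D}{\D z}f_{\ell+1}(-z) = \sum_{k\ge1}(-1)^k k \binom{\ell+1-k}{k} z^{\ell+k-1}C^{3\ell+1}$ and $z^{\ell+1}C^{3\ell+2}\frac{\D}{\D z}f_{\ell}(-z) = \sum_{k\ge1}(-1)^k k \binom{\ell-k}{k} z^{\ell+k}C^{3\ell+2}$; (3) extract $[z^n]$ using $[z^m]C^r = \frac{r}{2m+r}\binom{2m+r}{m}$ with $m = n-\ell-k+1$, $r = 3\ell+1$ in the first, and $m = n-\ell-k$, $r = 3\ell+2$ in the second; (4) simplify the resulting binomial prefactors — e.g. $\frac{3\ell+1}{2(n-\ell-k+1)+3\ell+1} = \frac{3\ell+1}{2n-2k+\ell+3}$, hmm, I would need to double-check this against the stated denominator $n-k+2\ell+2$; rewriting $\binom{2n-2k+\ell+2}{n-\ell-k+1}$ suggests the identity $\frac{r}{2m+r}\binom{2m+r}{m} = \frac{r}{m+r}\binom{2m+r-1}{m}$ (valid since $\binom{2m+r}{m} = \frac{2m+r}{m+r}\binom{2m+r-1}{m}$) is the form actually needed, giving denominator $(n-\ell-k+1)+(3\ell+1) = n+2\ell-k+2$, which matches; (5) finally relabel to absorb the sign, turning $(-1)^k$ into $-(-1)^{k-1}$, and combine the two contributions with the signs from Theorem~\ref{t-gf}(b) to land exactly on \eqref{eq:totalBalt}, noting that the $k=0$ terms vanish because the $z$-derivative kills the constant term of each Fibonacci polynomial, so the sums legitimately start at $k=1$.

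The main obstacle will be the bookkeeping in step (4): getting the precise form of the binomial coefficient and its rational prefactor to match the stated formula, since there are several equivalent ways to write $[z^m]C^r$ (using $\binom{2m+r}{m}$ vs. $\binom{2m+r-1}{m}$ vs. $\binom{2m+r-1}{m-1}$), and only one of them yields the denominator $n-k+2\ell+2$ and the upper index $2n-2k+\ell+2$ exactly as written — so I would pin down the right form of the Catalan-power coefficient identity at the outset (deriving it cleanly from Lemma~\ref{lemma:paths} via $C^r(z) = \sum_n [(0,0)\to(\text{shifted endpoint})]z^n$ or from Lagrange inversion) and track the arithmetic carefully. Everything else is routine coefficient extraction and should present no real difficulty; there is no combinatorial subtlety here comparable to the lattice-path bijection in Theorem~\ref{th:totalD}, only careful algebra.
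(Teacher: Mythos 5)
Your proposal is correct and follows essentially the same route as the paper's own proof: expand $\frac{\D}{\D z}f_{\ell}(-z)$ and $\frac{\D}{\D z}f_{\ell+1}(-z)$ termwise, use the Lagrange-inversion expansion $[z^{m}]\,C^{r}=\frac{r}{m+r}\binom{2m+r-1}{m}$ (the paper's Equation~\eqref{e-Cpow}), and extract $[z^{n}]$ from Theorem~\ref{t-gf}~(b) term by term, with exactly the parameter matching and denominator check you carried out in step (4).
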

\begin{proof}
We first note that 
\begin{equation}
C^{m}=\sum_{n=0}^{\infty}\frac{m}{n+m}{2n+m-1 \choose n}z^{n}, \label{e-Cpow}
\end{equation}
as can be proven by a straightforward application of Lagrange inversion \cite[Equation (2.3.2)]{gessel}. Using this fact, along with Theorem \ref{t-gf} (b) and
\[
\frac{\D}{\D z}f_{\ell }(-z)=\sum_{k=0}^{\left\lfloor \ell /2\right\rfloor }(-1)^{k}k{\ell -k \choose k}z^{k-1},
\]
we get
\begin{align*}
 & \totalB(n,\ell )=[z^{n}]\,B_{\ell }(z)\\
 & \qquad=[z^{n}]\,z^{\ell }C^{3\ell +1}\left(zC\frac{\D}{\D z}f_{\ell }(-z)-\frac{\D}{\D z}f_{\ell +1}(-z)\right)\\
 & \qquad=[z^{n}]\,\left(\sum_{k=0}^{\left\lfloor \ell /2\right\rfloor }(-1)^{k}k{\ell -k \choose k}z^{\ell +k}C^{3\ell +2}-\sum_{k=0}^{\left\lfloor (\ell +1)/2\right\rfloor }(-1)^{k}k{\ell +1-k \choose k}z^{\ell +k-1}C^{3\ell +1}\right)\\
 & \qquad=[z^{n}]\,\left(\sum_{k=1}^{\left\lfloor (\ell +1)/2\right\rfloor }(-1)^{k-1}k{\ell +1-k \choose k}z^{\ell +k-1}C^{3\ell +1}-\sum_{k=1}^{\left\lfloor \ell /2\right\rfloor }(-1)^{k-1}k{\ell -k \choose k}z^{\ell +k}C^{3\ell +2}\right)\\
 & \qquad=\sum_{k=1}^{\left\lfloor (\ell +1)/2\right\rfloor }(-1)^{k-1}k{\ell +1-k \choose k}[z^{n-\ell -k+1}]C^{3\ell +1}-\sum_{k=1}^{\left\lfloor \ell /2\right\rfloor }(-1)^{k-1}k{\ell -k \choose k}[z^{n-\ell -k}]C^{3\ell +2}\\
 & \qquad=\sum_{k=1}^{\left\lfloor (\ell +1)/2\right\rfloor }(-1)^{k-1}k{\ell +1-k \choose k}[z^{n- \ell -k+1}]\sum_{n=0}^{\infty}\frac{3\ell +1}{n+3\ell +1}{2n+3l \choose n}z^{n}\\
 & \qquad\qquad\qquad\qquad\qquad-\sum_{k=1}^{\left\lfloor \ell /2\right\rfloor }(-1)^{k-1}k{\ell -k \choose k}[z^{n-\ell -k}]\sum_{n=0}^{\infty}\frac{3\ell +2}{n+3\ell +2}{2n+3\ell +1 \choose n}z^{n}\\
 & \qquad=\sum_{k=1}^{\left\lfloor (\ell +1)/2\right\rfloor }(-1)^{k-1}k{\ell -k+1 \choose k}\frac{3\ell +1}{n-k+2\ell +2}{2n-2k+\ell +2 \choose n-\ell -k+1}\\
 & \qquad\qquad\qquad\qquad\qquad-\sum_{k=1}^{\left\lfloor \ell /2\right\rfloor }(-1)^{k-1}k{\ell -k \choose k}\frac{3\ell +2}{n-k+2\ell +2}{2n-2k+\ell +1 \choose n-\ell -k}. \qedhere
\end{align*}
\end{proof}
It would be interesting to give a direct proof that our two expressions for $\totalB(n,\ell )$---in Equations~\eqref{eq:totalBExpr} and \eqref{eq:totalBalt}---are in fact equal. Recovering the asymptotics of $\totalB(n,\ell)$ obtained in Theorem~\ref{thm: totalB} from Equation~\eqref{eq:totalBalt} would be also interesting and we were not able to do that.

\section{General results on the asymptotics of totalB for Galton--Watson trees}\label{sec: CRT}

A well-known class of random rooted trees are the \textit{Galton--Watson
trees}. In this section, we generalize Theorem~\ref{thm: totalB} to Galton--Watson trees. Here, we present a short overview of these objects. For a thorough and rigorous presentation of Galton--Watson trees, and
random trees in general, we refer to the surveys by Le Gall \cite{LeGall-05} and Janson \cite{Janson-12}, as well as the book of Drmota \cite{Drmota-09}.

Let $Z$ be a random variable on $\{0,1,2,\dots\}$ with
\[
\Ebb{Z} = 1
\quad
\text{and}
\quad
\Var{Z} = \sigma^2.
\]
We define a Galton--Watson tree $T^{(Z)}$ with an offspring
distribution $Z$ recursively, by giving each node independently a
random number of children distributed as $Z$. Let us define
$\mathcal{T}^{(Z)}_n$ to be
the set of all possible trees $T^{(Z)}$ with $n$ edges, endowed
with the conditional probability induced by $Z$. As observed by Aldous (see \cite[p. 28]{CRT2}
and \cite[p.12]{Drmota-09}),
the uniform distribution on some frequently used sets
of rooted trees is generated by $\mathcal{T}_n^{(Z)}$ for a suitable $Z$. For 
example:
\begin{itemize}[leftmargin=25bp] \setlength\itemsep{5bp}
    \item ordered trees, when $Z$ has a geometric
    distribution---i.e., $\P(Z=k) = 1/2^{k+1}$ for $k \geq 0$.
    \item strict $m$-ary trees (every node has $0$ or $m$ children), when $\P(Z=0) = \P(Z=m) = 1/2$.
    \item $m$-ary trees when $Z$ has a binomial $\operatorname{Bin}(m,1/m)$ distribution.
    \item labeled trees when $Z$ has a Poisson distribution with parameter 1---i.e., when
    $\P(Z = k) = e^{-1}/k!$.
\end{itemize}
We obtain
the asymptotics of $\totalB(n,s\sqrt{n})$ on
all these sets of trees; see Corollary \ref{cor: totalB GW asymptotics}.

As we have already seen in Section \ref{sec:totalD}, an ordered tree with $n$ edges can be
associated with a discrete excursion from $0$
to $2n$ via its corresponding Dyck path. It is therefore
natural to expect that, after scaling, the
corresponding Dyck path converges weakly
to a standard Brownian excursion
$\mathbf{e}$ of length 1, and thus
the trees themselves will have a weak limit in some
abstract sense. This was formalized in
the classical works of Aldous \cite{CRT1, CRT2, CRT3}, who called this limiting object
the \textit{continuum random tree}. 

Using the random variables defined in Equation~\eqref{def:Random variables}, in order to find $\totalB(n,l)$ in the general case, we can use the equation
\begin{align}
    S_n(\ell ) = \binom{H_n(\ell )}{2} - \binom{H_n(\ell -1)}{2} &= \frac{1}{2}\lbrb{H_n(\ell ) - H_n(\ell -1)}
    \lbrb{H_n(\ell ) + H_n(\ell -1) - 1} \nonumber \\
    &= \frac12 h_n(\ell )\lbrb{2H_n(\ell ) + h_n(\ell )-1}. \label{eq: S_n(l)}
    \end{align}
Following the work of Aldous, the random variables $h_n(\ell )$ and $H_{n}(\ell )$---giving the number of vertices at level $\ell$
and the accumulated number of vertices up to level $\ell$, respectively---should converge to the respective quantities for the Brownian excursion. To formalize this, first extend
$h_n$ and $H_n$ by linear interpolation to be defined on all reals in $[0,2n]$ instead of only on integers. That is, define
\[
h_n(t) \coloneqq \lbrb{\lfloor t\rfloor+1-t}h_n\lbrb{\lfloor t\rfloor} + \lbrb{t-\lfloor t\rfloor}h_n\lbrb{\lfloor t\rfloor+1}
\]
for $t \in [0,2n]$, and $h_n(t) \coloneqq 0$ for $t\notin [0,2n]$. Next, let us define
\[
\mu_s \coloneqq \int_0^1 \ind{\mathbf{e}_t \leq s}\, \D s
\quad
\text{and}
\quad
L_s \coloneqq \frac{\D}{\D s}\mu_s = \lim_{\epsilon \to 0}\frac{1}{\epsilon}
\int_0^1 \ind{\mathbf{e}_t \in [s, s+\epsilon]}\,\D t,
\]
called, respectively, the occupation measure
and occupation density (or local time) of $\mathbf{e}$. It was proven by Aldous in
\cite[Corollary 3]{CRT2} that for a tree
in $\mathcal{T}_n^{(Z)}$,
\[
\lbrb{\frac{1}{n}H_n\lbrb{s\sqrt{n}}}_{s\geq 0} \xrightarrow[\quad]{d} \lbrb{\mu_{\sigma s/2}}_{s\geq 0},
\]
as processes on the set of continuous real
functions on $[0,1]$, endowed with the uniform norm.
It was also conjectured therein (see \cite[Conjecture 
4]{CRT2}), and proven by Drmota and Gittenberger
in \cite{Drmota-Gettenberger-97}, that
\begin{equation}
    \label{eq: h_n(l) limit}
\lbrb{\frac{1}{\sqrt{n}}h_n\lbrb{s\sqrt{n}}}_{s\geq 0} \xrightarrow[\quad]{d} \lbrb{\frac{\sigma}{2}L_{\sigma s/2}}_{ s\geq 0}.
\end{equation}
We now have the needed background to state
the main result of this section.

\begin{theorem}
\label{th:generaltotalB}
For random trees from $\mathcal{T}_n^{(Z)}$, we have
 \[
\frac{1}{n^{3/2}}\Ebb{S_n\lbrb{s\sqrt{n}}} \to \sigma K_{\sigma s},
\]
for any constant $s \geq 0$ as $n \to \infty$, where
\begin{equation}
    \label{eq:def K_s}
K_s \coloneqq \frac{1}{2} \Ebb{ L_{s/2} \mu_{s/2}}=
2se^{-s^2} - 2\sqrt{\pi}\lbrb{\Phi\lbrb{2\sqrt{2}s} - \Phi\lbrb{\sqrt{2}s}}.
\end{equation}
 As a consequence, if the trees in $\mathcal{T}_n^{(Z)}$ are equiprobable, then we have
 \begin{equation}
 \label{eq:totalB-GW-asymp}
\totalB\lbrb{n,s\sqrt{n}} \sim \sigma K_{\sigma s} n^{3/2}\left|\mathcal{T}_n^{(Z)}\right|     
 \end{equation}
as $n\to \infty$.
\end{theorem}

\begin{proof}
From \eqref{eq: S_n(l)}, we know that
    \begin{equation}\label{eq: S_n in proof}
        \begin{split}
   \frac{1}{n^{3/2}} \Ebb{S_n\lbrb{s\sqrt{n}}} =  \Ebb{ \frac{h_n\lbrb{s\sqrt{n}}}{\sqrt{n}}\cdot \frac{{H_n\lbrb{s\sqrt{n}}}}{n}} + \frac{1}{2n^{3/2}}  \lbrb{\Ebb{h_n^2\lbrb{s\sqrt{n}}} -\Ebb{ h_n\lbrb{s\sqrt{n}}}}.
        \end{split}
    \end{equation}
    By \cite[Theorem 3]{Drmota-Gittenberger-04}, the last two expectations in \eqref{eq: S_n in proof} are of order $n$ and $n^{1/2}$, respectively, so we need only consider the first term. By a standard argument---see, e.g., \cite[Theorem 3]{Aldous-98}---Equation~\eqref{eq: h_n(l) limit} implies
    \[
    \lbrb{\frac{h_n\lbrb{s\sqrt{n}}}{\sqrt{n}} , \frac{H_n\lbrb{s\sqrt{n}}}{n}}_{s \geq 0} \xrightarrow[\quad]{d} \lbrb{\frac{\sigma}{2}L_{\sigma s/2},\mu_{\sigma s/2}}_{s \geq 0}.
    \]
    Indeed, by Skorohod--Dudley's theorem \cite[Theorem 5.31]{Kallenberg-21}, we can find a coupling such that the convergence in \eqref{eq: h_n(l) limit} is almost sure and therefore
    \[
    \lbrb{\frac{h_n\lbrb{s\sqrt{n}}}{\sqrt{n}}, \int_0^s \frac{h_n(y\sqrt{n})}{\sqrt{n}}\, \D y}_{s \geq 0} \xrightarrow[\quad]{d} \lbrb{\frac{\sigma}{2}L_{\sigma s/2},\frac{\sigma}{2} \int_0^s L_{\sigma y/2}\, \D y }_{s \geq 0}.
    \]
    The last term is $H_{\sigma s/2}$ by the definition of the local time $\ell$, and it remains
    to link $H_n$ with the integral of $h_n$, which is readily done. As an illustration, consider
    the case where
    $s\sqrt{n}$ is an integer. Then, we have
    \[
    \frac{H_n\lbrb{s\sqrt{n}}}{n} = \frac{1}{n}\lbrb{\int_0^{s\sqrt{n}} {h_n(y)}\, \D y + \frac12 \lbrb{1 + h_n\lbrb{s\sqrt{n}}}} = \int_0^s \frac{h_n\lbrb{y\sqrt{n}}}{\sqrt{n}}\, \D y + \so{1}.
    \]
    Therefore, by the continuous mapping theorem, we have
    \begin{equation}\label{eq: weak hH}
    \frac{1}{n^{3/2}} h_n\lbrb{s\sqrt{n}}H_n\lbrb{s\sqrt{n}} \xrightarrow[\quad]{d} 
   \frac{\sigma}{2} L_{\sigma s/2}\mu_{\sigma s/2}.
    \end{equation}
    To obtain convergence of the means from the last convergence in distribution, we need to check 
    for uniform integrability. For this, it will be enough to ensure an uniform $L^p$ bound for some $p>1$; see, e.g., \cite[Theorem T22]{Meyer-66}. This is indeed the case as
    \[
      H_n(s \sqrt{n}) \leq \sum_{j \leq s\sqrt{n} + 1} h_n(j) \leq  (s\sqrt{n} + 1) \sup_m h_n(m),      
    \]
    so
    \[
     \frac{1}{n^{3/2}} h_n\lbrb{s\sqrt{n}}H_n\lbrb{s\sqrt{n}} \leq \frac{(s\sqrt{n} + 1)}{\sqrt{n}} \lbrb{
     \frac{\sup_m h_n(m)}{\sqrt{n}}
     }^2,
    \]
    and the last quantity is uniformly bounded in $L^p$ by \cite[Theorem 3]{Drmota-Gittenberger-04}.
    Therefore, \eqref{eq: weak hH} implies that
    \[
        \frac{1}{n^{3/2}}\Ebb{h_n\lbrb{s\sqrt{n}}{H_n\lbrb{s\sqrt{n}}}} \xrightarrow[n \to \infty]{} \sigma \Ebb{\frac{1}{2}
        L_{\sigma s/2} \mu_{\sigma s/2}}.
    \]
    Substituting into \eqref{eq: S_n in proof},
    we get
    \begin{equation} \label{e-con1}
    \frac{1}{n^{3/2}}\Ebb{S_n(s\sqrt{n})}  \to \sigma K_{\sigma s}.
    \end{equation}
    It is immediate from \eqref{e-con1} that if the probability on $\mathcal{T}_n^{(Z)}$ is uniform, then
    \begin{equation} \label{e-con2}
    \frac{1}{n^{3/2}} \cdot \frac{\totalB\lbrb{n, s\sqrt{n}}}{|\mathcal{T}_n^{(Z)}|} \to \sigma
    K_{\sigma s}.
    \end{equation}
    It remains to identify the constant $K_{s} = \Ebb{L_{s/2}\mu_{s/2}}/2$. Recall that random ordered trees can be constructed as Galton--Watson trees with offspring distribution $Z$, such that $\P(Z=k) = 1/2^{k+ 1}$ and $\Var Z = 1$. Therefore \eqref{e-con2} reads
    \[
    \totalB\lbrb{n, s\sqrt{n}} \xrightarrow[n \to \infty]{} n^{3/2} \left|\mathcal{T}_n\right| K_s.
    \]
    However, we know that $|\mathcal{T}_n|$ is equal to the $n$th Catalan number $C_n$, and 
    using $C_{n} = \frac{4^{n}}{\sqrt{\pi}n^{3/2}}(1+\Theta(\frac{1}{n}))$, we obtain
    \[
     \totalB\lbrb{n, s\sqrt{n}} \sim \frac{K_s}{\sqrt{\pi}}4^n.
    \]
    Comparing this with Equation~\eqref{eq:asymptotics totalB} in Theorem \ref{thm: totalB}, we find the desired 
    explicit expression for $K_s$. 
\end{proof}
\begin{corollary}\label{cor: totalB GW asymptotics}Let $K_s$ be defined as in Equation~\eqref{eq:def K_s}, and let $s \geq 0$.
    \begin{enumerate}
        \item[\textup{(a)}] On $m$-ary trees---i.e., when $Z = \operatorname{Bin}(m, 1/m)$
        and $\sigma = \sqrt{(m-1)/m}$---we have
        \[
        \totalB\lbrb{n,s\sqrt{n}} \sim \sigma K_{\sigma s} \frac{n^{3/2} }{(m-1)n + 1}\binom{mn}{n}
        \sim  \frac{K_{\sigma s}}{\sqrt{2\pi}}\frac{m^{mn}}{(m-1)^{(m-1)n+1}}.
        \]
        \item[\textup{(b)}] On rooted labeled trees---i.e., when $Z\sim \operatorname{Poi}(1)$ and $\sigma = 1$---we have
        \[
        \totalB\lbrb{n,s\sqrt{n}} \sim K_{s} n^{3/2} n^{n-1} = K_s n^{n+1/2}.
        \]
    \end{enumerate}
\end{corollary}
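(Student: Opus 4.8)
The plan is to deduce both parts directly from the preceding theorem, which states that for equiprobable trees in $\mathcal{T}_n^{(Z)}$ one has $\totalB(n,s\sqrt{n})\sim \sigma K_{\sigma s}\,n^{3/2}\,|\mathcal{T}_n^{(Z)}|$ as $n\to\infty$, with $K_s$ as in \eqref{eq:def K_s}. For each of the two families we need only: (i) check that the conditioned Galton--Watson measure is uniform, so that the hypothesis of the preceding theorem applies; (ii) recall the classical enumeration of the family; and (iii) apply Stirling's formula to rewrite the answer in closed form. The only steps requiring genuine care are the uniformity check in (i)---which is precisely where Aldous's choice of offspring distributions, recalled at the start of this section, is used---and the collapse of the prefactor in part (a).

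First I would verify (i). For $Z\sim\operatorname{Bin}(m,1/m)$, the probability of an $m$-ary tree with node degrees $(d_v)$ is $\prod_v (1/m)^{d_v}(1-1/m)^{m-d_v}=(1/m)^{n-1}(1-1/m)^{(m-1)n+1}$ (using $\sum_v d_v=n-1$), which is the same for every $m$-ary tree of the given size; hence the conditional measure is uniform. For $Z\sim\operatorname{Poi}(1)$, the probability of a fixed plane-tree shape is proportional to $\prod_v 1/d_v!$, which is itself proportional to the number of rooted labelled trees giving rise to that shape by sorting siblings by label; transporting the measure to labelled trees therefore makes it uniform, as observed by Aldous. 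In either case the preceding theorem applies to the uniform distribution on the family.

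For part (a), I would substitute $\sigma=\sqrt{(m-1)/m}$ and the Fuss--Catalan count $|\mathcal{T}_n^{(Z)}|=\frac{1}{(m-1)n+1}\binom{mn}{n}$ of $m$-ary trees into $\totalB(n,s\sqrt{n})\sim\sigma K_{\sigma s}\,n^{3/2}\,|\mathcal{T}_n^{(Z)}|$; this gives the first displayed asymptotic. For the second, I would use Stirling's formula in the form
\[
\binom{mn}{n}\sim\frac{1}{\sqrt{2\pi n}}\sqrt{\frac{m}{m-1}}\,\frac{m^{mn}}{(m-1)^{(m-1)n}},
\]
so that $\frac{n^{3/2}}{(m-1)n+1}\binom{mn}{n}\sim\frac{1}{\sqrt{2\pi}}\cdot\frac{\sqrt{m}}{(m-1)^{3/2}}\cdot\frac{m^{mn}}{(m-1)^{(m-1)n}}$; multiplying by $\sigma=\sqrt{(m-1)/m}$ reduces the constant factor to $\frac{1}{(m-1)\sqrt{2\pi}}$, which yields $\frac{K_{\sigma s}}{\sqrt{2\pi}}\cdot\frac{m^{mn}}{(m-1)^{(m-1)n+1}}$ as claimed.

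For part (b), I would set $\sigma=1$ and invoke Cayley's formula, giving $|\mathcal{T}_n^{(Z)}|=n^{n-1}$ rooted labelled trees on $n$ vertices; then $\totalB(n,s\sqrt{n})\sim K_s\,n^{3/2}\cdot n^{n-1}=K_s\,n^{n+1/2}$ with no further work. All in all, the corollary is a direct substitution into the preceding theorem; the main obstacle, such as it is, is the small amount of bookkeeping in the uniformity check and in the Stirling-and-$\sigma$ simplification of part (a).
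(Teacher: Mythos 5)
Your proposal is correct and matches the paper's (implicit) argument: the corollary is meant to follow directly from the asymptotic $\totalB(n,s\sqrt{n}) \sim \sigma K_{\sigma s}\, n^{3/2}\,|\mathcal{T}_n^{(Z)}|$ of the preceding theorem, combined with Aldous's identification of the uniform $m$-ary and labeled-tree models as conditioned Galton--Watson trees, the classical counts $\frac{1}{(m-1)n+1}\binom{mn}{n}$ and $n^{n-1}$, and Stirling's formula for the closed form in part (a). Your uniformity checks and the Stirling/$\sigma$ bookkeeping are exactly the details the paper leaves to the reader, and they are carried out correctly.
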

\section{Search in case of a known target level}
\label{sec:truncDFS}
In Section~\ref{sec:threshold}, we compared the average performance of BFS and DFS when searching for a random node at a fixed level $\ell$, and we determined which of the two algorithms has an advantage depending on $\ell$. However, if we know $\ell$ in advance, it is reasonable to use a simple modification of the DFS algorithm that we call \emph{truncated DFS}, which does not visit nodes at levels larger than $\ell$.
\begin{definition}[Truncated DFS]
Apply DFS and ignore nodes at levels $>\ell$, where $\ell$ is fixed.
\end{definition}
Let us define
$$
\dfsTruncScore(v)\coloneqq \text{the number of nodes visited before $v$ when using truncated DFS.}
$$
Figure~\ref{fig:dfs_trunc} gives an example of a tree traversal when using truncated DFS with $\ell=2$. Each node is labeled with its $\dfsTruncScore$. Note that nodes at levels above $\ell=2$ are not visited and that $\dfsTruncScore(v)$ gives the number of steps made by the algorithm if the target is $v$.
\begin{figure}[H]
  \centering
      \begin{tikzpicture}  
  [baseline=0, scale=0.8,auto=center]
    
  \node[circle, draw, fill=black!100, inner sep=0pt, minimum width=4pt] (a1) at (0,10) {};
  \node[above] at (0,10) {0}; 
  \node[circle, draw, fill=black!100, inner sep=0pt, minimum width=4pt] (a2) at (-1.5,9) {};
  \node[left] at (-1.5,9) {1};
  \node[circle, draw, fill=black!100, inner sep=0pt, minimum width=4pt] (a3) at (0,9) {};
  \node[left] at (0,9) {4};
  \node[circle, draw, fill=black!100, inner sep=0pt, minimum width=4pt] (a4) at (1.5,9) {};
  \node[left] at (1.5,9) {5};
  \node[circle, draw, fill=black!100, inner sep=0pt, minimum width=4pt] (a5) at (-2.5,8) {}; 
  \node[left] at (-2.5,8) {2};
  \node[circle, draw, fill=black!100, inner sep=0pt, minimum width=4pt] (a6) at (-0.5,8) {}; 
  \node[left] at (-0.5,8) {3};
  \node[circle, draw, fill=black!100, inner sep=0pt, minimum width=4pt] (a7) at (-2.5,7) {}; 
  % \node[left] at (-2.5,7) {3};
  \node[circle, draw, fill=black!100, inner sep=0pt, minimum width=4pt] (a8) at (1.5,8) {};
  \node[left] at (1.5,8) {6};
  \node[circle, draw, fill=black!100, inner sep=0pt, minimum width=4pt] (a9) at (0.5,7) {};
  % \node[left] at (0.5,7) {8};
  \node[circle, draw, fill=black!100, inner sep=0pt, minimum width=4pt] (a10) at (2.5,7) {};
  % \node[left] at (2.5,7) {9};
  
  \draw (a1) -- (a2);
  \draw (a1) -- (a3);  
  \draw (a1) -- (a4);  
  \draw (a2) -- (a5);  
  \draw (a2) -- (a6);
  \draw (a5) -- (a7);  
  \draw (a4) -- (a8);  
  \draw (a8) -- (a9);  
  \draw (a8) -- (a10);  
\end{tikzpicture}
\vspace{-5cm}
  \caption{The order of exploration for the truncated DFS algorithm.}
  \label{fig:dfs_trunc}
\end{figure}
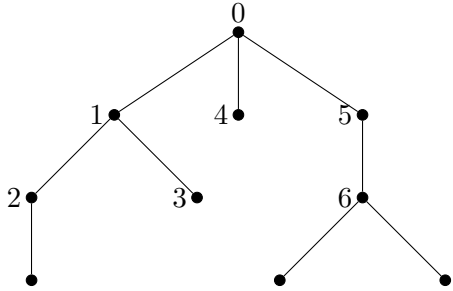
One can see that the truncated DFS has a smaller expected number of steps compared to both BFS and DFS. For DFS, this is immediate, since while using DFS we visit additional nodes that are skipped when using the truncated version. For BFS, observe that in any tree the nodes at level $\ell$ have the largest bfsScores among all nodes at levels below $\ell+1$. However, for the truncated DFS, for most of the trees, the nodes at level $\ell$ are not those with the largest $\dfsTruncScore$. Thus the truncated DFS is also faster than BFS on average.

Our next goal will be to prove a statement on $\totalDTrunc(n,\ell )$ exploiting a particular symmetry, which will be used later to find the asymptotic average time complexity of the truncated DFS algorithm. Recall from Section~\ref{sec:fibo} that 
\[
F_{\ell }=\sum_{n=0}^{\infty}\sum_{T\in{\mathcal T}_n}x^{k(T)}y^{m(T)}z^{n},
\]
where $k=k(T)$ is the number of non-root nodes in $T$ at levels smaller than $\ell$ and $m=m(T)$ is the number of nodes in $T$ at level $\ell$.
\begin{theorem}
\label{t-truncdiff}
For all $n \geq 1$ and $0 \leq \ell \leq n$, we have
$$\totalDTrunc(n,\ell ) = \totalD(n,\ell ) - \frac{1}{2} \sum_{T \in \mathcal{T}_n} (m(T)-1)(n-k(T)-m(T)).$$
\end{theorem}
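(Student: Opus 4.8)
The plan is to reduce the statement to a purely combinatorial identity about pre-order traversals and then close the gap using the horizontal-reflection involution on $\mathcal{T}_n$ --- the same symmetry appearing in the sketch after Theorem~\ref{th:totalD}, but here it goes through cleanly. First I would fix $T\in\mathcal{T}_n$ and a node $v$ at level $\ell$, and observe that
\[
\dfsTruncScore(v) = \dfsScore(v) - D_{>\ell}(T,v),
\]
where $D_{>\ell}(T,v)$ is the number of nodes at levels $>\ell$ visited before $v$ in ordinary DFS; this is immediate, since truncated DFS visits exactly the nodes at levels $\leq\ell$, in the same relative order as DFS. Summing over $v\in\lev(T,\ell)$ and over $T\in\mathcal{T}_n$, the claim becomes equivalent to
\[
\sum_{T\in\mathcal{T}_n}\ \sum_{v\in\lev(T,\ell)} D_{>\ell}(T,v) = \frac12\sum_{T\in\mathcal{T}_n}(m(T)-1)\bigl(n-k(T)-m(T)\bigr).
\]
The case $\ell=0$ is trivial (both sides vanish), so I assume $\ell\geq 1$.

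Next comes the structural step, which I expect to require the most care. In a pre-order traversal the subtree rooted at any node is visited as one contiguous block, nodes on a common level are visited in left-to-right order, and every node at a level $>\ell$ has a unique ancestor at level $\ell$. Consequently, if $v_1,\dots,v_m$ are the level-$\ell$ nodes of $T$ listed left to right (so $m=m(T)$) and $c(v_j)$ denotes the number of proper descendants of $v_j$, then a node at level $>\ell$ is visited before $v_i$ precisely when it is a proper descendant of some $v_j$ with $j<i$; hence $D_{>\ell}(T,v_i)=\sum_{j<i}c(v_j)$ and so
\[
\sum_{v\in\lev(T,\ell)} D_{>\ell}(T,v) = \sum_{j=1}^m (m-j)\,c(v_j).
\]
Since $w\mapsto$ (level-$\ell$ ancestor of $w$) is a bijection from the set of nodes at levels $>\ell$ onto the disjoint union of the descendant sets of the $v_j$, we also get $\sum_{j=1}^m c(v_j)=n-k(T)-m(T)$ (here using $\ell\geq1$, so that the root is not at level $\ell$).

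Finally, I would invoke the involution $T\mapsto T'$ on $\mathcal{T}_n$ reversing the order of children at every node. It preserves the level of every node, hence $k$, $m$, and each count $c(\cdot)$, but it sends $v_j$ to the node in position $m+1-j$ from the left in $T'$. Therefore $\sum_{v\in\lev(T',\ell)}D_{>\ell}(T',v)=\sum_{j=1}^m (j-1)\,c(v_j)$, and adding the two expressions gives
\[
\sum_{v\in\lev(T,\ell)}D_{>\ell}(T,v)+\sum_{v\in\lev(T',\ell)}D_{>\ell}(T',v)=(m(T)-1)\bigl(n-k(T)-m(T)\bigr).
\]
Because $T\mapsto T'$ is an involution, summing over all $T\in\mathcal{T}_n$ turns the left-hand side into $2\sum_{T}\sum_v D_{>\ell}(T,v)$, which yields the displayed identity and hence the theorem. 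The only delicate points are the block book-keeping for pre-order in the structural step; no issue arises from self-symmetric trees $T=T'$, since the involution argument merely reindexes the outer sum.
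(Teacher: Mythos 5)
Your argument is correct and follows essentially the same route as the paper's proof: both express $\dfsScore(v)-\dfsTruncScore(v)$ as the number of nodes at levels $>\ell$ (necessarily non-descendants of $v$) visited before $v$, identify the relevant pairs as $(m(T)-1)(n-k(T)-m(T))$ per tree, and obtain the factor $\tfrac12$ from the left--right reflection symmetry of $\mathcal{T}_n$. Your explicit involution together with the formula $D_{>\ell}(T,v_i)=\sum_{j<i}c(v_j)$ simply makes rigorous what the paper compresses into the phrase ``by symmetry.''
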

\begin{proof}
We will show that $\totalDTrunc(n,\ell )$ is equal to $\totalD(n,\ell )$ minus the number of pairs of nodes $(x,y)$ among all trees $T \in \mathcal{T}_n$ satisfying the following properties:
\begin{enumerate} \setlength\itemsep{2bp}
\item[(a)] $x$ is at level $\ell$;
\item[(b)] $y$ is at level larger than $\ell$;
\item[(c)] $y$ is visited before $x$ in the inorder traversal of $T$;
\item[(d)] $y$ is not a descendant of $x$.
\end{enumerate}
To see this, fix a node $x$ at level $\ell$ in a tree $T \in \mathcal{T}_n$. Then we can obtain $\dfsTruncScore(x)$ from $\dfsScore(x)$ by subtracting the number of nodes $y$ in $T$ with the above properties, since these are precisely the nodes that are not visited when using the truncated DFS algorithm to search for $x$, but are visited when using (ordinary) DFS.

It remains to count the pairs $(x,y)$ satisfying the above properties. Suppose that $T \in \mathcal{T}_n$ has $k$ nodes at levels smaller than $\ell$ and $m$ nodes at level $\ell$. If we only care about the levels of $x$ and $y$, i.e., if we disregard conditions (c) and (d) above, then there are $m(n-k-m)$ pairs in $T$. Out of those, there are $n-k-m$ pairs for which $y$ is a descendant of $x$. For each of the $n-k-m$ possible nodes $y$, there is exactly one node $x$ at level $\ell$ of which $y$ is a descendant. Therefore, $m(n-k-m) - (n-k-m) = (m-1)(n-k-m)$ pairs remain. We must sum up over all trees in $T_n$ and divide the result by 2, because by symmetry $y$ is visited before $x$ when performing inorder traversal in exactly half of these pairs.
\end{proof}
% \vspace{-4mm}
The latter theorem implies that if we define the generating function $D_{\ell }$ by
\begin{align}
D_{\ell }=D_{\ell }(z) & \coloneqq\sum_{n=0}^{\infty} \sum_{T \in \mathcal{T}_n} (m(T)-1)(n-k(T)-m(T))z^{n} \nonumber \\
 & =\left[\Bigg(z\frac{\partial^{2}}{\partial yz}-\frac{\partial^{2}}{\partial xy}-z\frac{\partial}{\partial z}+\frac{\partial}{\partial x}-\frac{\partial^{2}}{\partial y^{2}}\Bigg)F_{\ell }\right]_{x=1,\,y=1}, \label{e-Dl}
\end{align}
the coefficient of $z^{n}$ in $D_{\ell }$ will be twice the difference between $\totalD(n,\ell )$ and $\totalDTrunc(n,\ell )$. One can use this expression for $D_\ell$ to obtain an alternating summation formula for $\totalDTrunc(n,\ell )$, as the one in Equation~\eqref{eq:totalBalt} for $\totalB(n,\ell)$. This can be done again by a similar generating function approach to the one used in Section~\ref{subsec:genfuncs}. We will not include this explicit $\totalDTrunc(n,\ell )$ formula here, since the obtained expression is comprised of several complicated alternating sums, and the generating function approach is very technical. Instead, we will prove that the truncated DFS algorithm is twice as fast as BFS, in the average case, for any given level $\ell = s\sqrt{n}$, where $s$ is a constant. 
% \vspace{-5mm}
\subsection{Asymptotic complexity of the truncated DFS}
\label{sec:complDFStrunc}
Here, we find an expression (Equation\eqref{eq:exact}) for the complexity of the truncated DFS algorithm, $totalDTrunc(n,\ell)$, for any target level $\ell$. We also evaluate this expression when $s$ is a constant. Recall from Section~\ref{sec:totalB}, that $h_n(\ell)$, $H_n(\ell)$ and $\nu_n(\ell)$ are denoting the number of vertices at levels $\ell$, less or equal to $\ell$ and greater or equal to $\ell$, respectively.  If $N(n,\ell) := C_n\,\mathbb E[h_n(\ell)]$ denotes the exact total number of nodes at level $\ell$ among trees in $T_n$, then Theorem~\ref{t-dersh} gives
\[
N(n,\ell) := C_n\,\mathbb E[h_n(\ell)] = \frac{2\ell+1}{2n+1}\binom{2n+1}{n-\ell},
\]
Recall from Section~\ref{subsec:ratios} that
\begin{equation}
\frac{\totalB(n,s\sqrt n)}{4^n} \;\longrightarrow\; B(s) := \frac{2s}{\sqrt\pi}e^{-s^2} - 2\Delta(s), \qquad n\to\infty,
\label{eq:Bform}
\end{equation}
where
\begin{equation}
\Delta(s) := \Phi(2\sqrt2\,s) - \Phi(\sqrt2\,s).
% \label{eq:Delta}
\end{equation}
Recall also that
\begin{equation}
% \label{eq:Dform}
\frac{\totalD(n,s\sqrt n)}{4^n}\longrightarrow g(s):=\frac{s}{\sqrt\pi}e^{-s^2}.
\end{equation}
By Theorem~\ref{t-truncdiff},
\begin{align}
\totalD(n,\ell) - \totalDTrunc(n,\ell)
&= \tfrac12\sum_{T\in T_n}\bigl(m(T)-1\bigr)\bigl(n-k(T)-m(T)\bigr) \notag\\[4pt]
&= \tfrac12\,C_n\,\mathbb{E}\bigl[(h_n(\ell)-1)\,\nu_n(\ell+1)\bigr],
\label{eq:thm72}
\end{align}
Here, we are using that $n-k(T)-m(T)=\nu_n(\ell+1)=n+1-H_n(\ell)$, which follows directly from the definitions of $k(T)$ and $m(T)$ given before Theorem~\ref{t-truncdiff}.
 
For a tree with $m=h_n(\ell)$ nodes at level $\ell$, visited in BFS order, right after all of the $H_n(\ell-1)$ nodes at levels less than $\ell$, we get $H_n(\ell-1), H_n(\ell-1)+1,\dots,H_n(\ell-1)+m-1$ for the BFS scores of the nodes on level $\ell$. Summing up and using that $h_{n} = m$, we obtain
\[
S_n(\ell) = m\,H_n(\ell-1) + \binom m2 = m\bigl(H_n(\ell)-m\bigr) + \binom m2 = h_n(\ell)H_n(\ell) - \tfrac{1}{2}h_n(\ell)\bigl(h_n(\ell)+1\bigr).
\]
Since $C_n\mathbb E[S_n(\ell)] = \totalB(n,\ell)$ by definition, then taking the expectation on both sides, multiplying by $C_n$ and reordering leads to
\begin{equation}
C_n\,\mathbb E[h_n(\ell)H_n(\ell)] \;=\; \totalB(n,\ell) + \tfrac12\,C_n\,\mathbb E[h_n(\ell)^2] + \tfrac12\,N(n,\ell).
\label{eq:hH}
\end{equation}
Since $\nu_n(\ell + 1) = n+1 - H_n(\ell)$, we expand the product in \eqref{eq:thm72} to get
\[
(h_n(\ell)-1)(n+1-H_n(\ell)) = (n+1)h_n(\ell) - h_n(\ell)H_n(\ell) - (n+1) + H_n(\ell).
\]
Taking, again, the expectation on both sides, multiplying by $C_n$, and using Equation~\eqref{eq:thm72} and Equation~\eqref{eq:hH}, gives us 
\begin{align}
\totalD(n,\ell)-\totalDTrunc(n,\ell) &= \tfrac12\Bigl[(n+1)N(n,\ell) - \totalB(n,\ell)\Bigr]
- \tfrac14 C_n\mathbb E[h_n(\ell)^2] - \tfrac14 N(n,\ell)  \notag\\ &-\tfrac12(n+1)C_n\;+\;\tfrac12 C_n\mathbb E[H_n(\ell)].
\label{eq:exact}
\end{align}
Let $s>0$ be a fixed constant and let $\ell=s\sqrt n$, $n\to\infty$. As pointed out in the proof of Theorem~\ref{th:generaltotalB}, in this case we have 
$E[h_n(s\sqrt n)] = \mathcal{O}(\sqrt{n})$ and 
$E[h_n(s\sqrt n)^{2}] = \mathcal{O}(n)$. Thus,
\[
N(n,\ell) = C_n\,\mathbb E[h_n(s\sqrt n)] = \mathcal{O}(\sqrt n\,C_n)=\mathcal{O}\!\left(\tfrac{4^n}{n}\right),\qquad
C_n\,\mathbb E[h_n(s\sqrt n)^2] = \mathcal{O}(n\,C_n) = \mathcal{O}\!\left(\tfrac{4^n}{\sqrt n}\right),
\]
and trivially $C_n\mathbb E[H_n(\ell)]\le (n+1)C_n = \mathcal{O}(4^n/\sqrt n)$. To see the latter, recall that $C_n \sim \frac{4^{n}}{\sqrt{\pi}n^{3/2}}$. Hence every term in \eqref{eq:exact} except the first one is $o(4^n)$.
 
For that first term, the known formulas for $N(n,\ell)$ and $\totalD(n,\ell)$ given by Equation~\eqref{eq:N} and Equation~\eqref{eq:totalD}, respectively, give
\begin{equation}
n\cdot\frac{N(n,\ell)}{\totalD(n,\ell)}  = n\frac{\frac{2\ell+1}{2n+1}\binom{2n+1}{n-\ell}}{\ell\binom{2n}{n-\ell}} =  \frac{n(2\ell+1)}{\ell(n+\ell+1)}\;\xrightarrow[\ell=o(n)]{}\;2,
\label{eq:nN}
\end{equation}
so that $(n+1)N(n,\ell)\sim 2\,\totalD(n,\ell)$ for $\ell=s\sqrt n$. Thus, dividing \eqref{eq:exact} by $4^n$ and taking $n\to\infty$ gives
\begin{equation}
\frac{\totalD(n,\ell)-\totalDTrunc(n,\ell)}{4^n}\;\longrightarrow\; g(s) - \tfrac12 B(s).
\label{eq:diff}
\end{equation}
Combining \eqref{eq:diff} with \eqref{eq:Bform} results in
\[
g(s)-\tfrac12 B(s) = g(s) - \Bigl(g(s)-\Delta(s)\Bigr) = \Delta(s),
\]
so $\totalD(n,\ell)-\totalDTrunc(n,\ell) \sim 4^n\Delta(s)$, and therefore we proved the last major result of this paper, formulated below.
\begin{theorem}
\label{th:dTruncBFSratio}
Let $\ell = s\sqrt n$ for a fixed constant $s>0$. As $n\to\infty$,
\begin{equation}
\frac{\totalDTrunc(n,\ell)}{4^n} \;\longrightarrow\; g(s)-\Delta(s) \;=\; \frac{s}{\sqrt\pi}e^{-s^2} \;-\;\Phi(2\sqrt2\,s)+\Phi(\sqrt2\,s) \;=\; \frac{B(s)}{2}.
\label{eq:main}
\end{equation}
Equivalently, when $s>0$ is a fixed constant and $n\to\infty$, we have
\begin{equation}
\label{eq:totalDTruncTotalB}
\totalDTrunc(n,s\sqrt n)\;\sim\;\tfrac12\,\totalB(n,s\sqrt n).
\end{equation}
\end{theorem}
\begin{remark}
\label{rem:main}
    We believe that Equation~\eqref{eq:totalDTruncTotalB} holds for an arbitrary level $\ell = s\sqrt{n}$, but not just when $s$ is a constant. The most immediate approach would be to use again Equation~\eqref{eq:exact}, yet we were not able to extend our statement to the cases of $s$ being a function of $n$. The heuristic argument sketched on Figure~\ref{fig:heuristic} below provides some additional intuition why the equation may hold in general (see the caption). 
\end{remark}
\begin{figure}[h!]
        \centering
\begin{tikzpicture}[>=stealth, scale=0.5]
  % --- the tree, sketched as a triangle ---
  \coordinate (root) at (0,4);
  \coordinate (left) at (-3,0);
  \coordinate (right) at (3,0);
  \draw[thick] (root) -- (left) -- (right) -- cycle;
  % root node
  \filldraw (root) circle (2pt);
  \node[above] at (root) {root};
  % --- horizontal line marking level ell ---
  \coordinate (Lleft)  at (-1.8,1.6);
  \coordinate (Lright) at ( 1.8,1.6);
  \draw[dashed] (Lleft) -- (Lright);
  \node[left] at (Lleft) {level $\ell$};
  % --- the marked node in the middle of that level ---
  \coordinate (v) at (0,1.6);
  \filldraw[red] (v) circle (2.2pt);
  \node[above right] at (v) {$v$};

  \fill[gray!25] (0,4) -- (-1.8,1.6) -- (0,1.6) -- cycle;
\end{tikzpicture}
        \caption{Given that the target node $v$ is in a fixed tree, on average $v$ would be in the middle of level $\ell$. The nodes in the gray area (traversed with the truncated DFS) are roughly $1/2$ of those traversed by BFS (the nodes above the dashed line denoting level $\ell$ or on that level, but visited prior to $v$).}
        \label{fig:heuristic}
    \end{figure}
    
\section{Further questions}
\label{sec:q}

To conclude this paper, we present the following questions for future study.

\begin{question}
Suppose that we compare BFS and DFS for a random target node at a given level, but in binary trees. Is there, again, a unique threshold where DFS becomes faster than BFS, which is close to the average level for a node in binary trees? If yes, is there a small difference between the threshold and this average node level, as proven for ordered trees at the end of Section~\ref{sec:threshold}? \label{q-binary}
\end{question}

The results in Section~\ref{sec: CRT} give asymptotics for BFS on general conditioned Galton--Watson trees. It would be interesting to develop a corresponding asymptotic theory for DFS-type quantities, including ordinary DFS and truncated DFS. One possible analytic approach would be to derive functional equations for the generating functions encoding the relevant score quantities in the framework of simply generated trees; see e.g. Drmota~\cite[Section 1.2.7]{Drmota-09}. Such a method could potentially yield asymptotic results not only for ordered trees, but also for broader classes of simply generated trees, equivalently for associated conditioned Galton--Watson trees.

There may also be a probabilistic route through the continuum random tree. For BFS-type quantities, profile and local-time convergence are sufficient, since BFS depends essentially on the number of vertices at prescribed levels and on cumulative profiles. In contrast, DFS-type quantities are sensitive to the depth-first order of the vertices. A CRT approach to ordinary DFS or truncated DFS would therefore seem to require convergence of more refined functionals of the discrete contour process, such as time-indexed measures of level upcrossings. We are not aware of such a convergence result in a directly applicable form, but it seems to be a natural direction for future work. We ask the same question for the truncated DFS algorithm, as well.

\begin{question}

Can one obtain asymptotic formulas for \(\totalD(n,\ell)\) and \(\totalDTrunc(n,\ell)\), uniformly in regimes such as \(\ell=s\sqrt n\), for general simply generated trees or conditioned Galton--Watson trees?

\end{question}

The next question is related to Section~\ref{sec:truncDFS}. In the standard implementation of BFS, we push the root into a queue. Then, we push the children of the node at the top of the queue, until we traverse all of them. Then, we pop that node and continue in the same way with the new element that is at the top. DFS is executed in the same way, but the nodes are pushed onto a stack instead of a queue. It is sensible to allow pushing nodes on both sides of a double-ended queue (\emph{deque}). 
\begin{question}
\label{q:deq}
What is the optimal search algorithm implemented using a deque, if $\ell$ is known in advance and the target is selected uniformly at random among all nodes on level $\ell$ in $\mathcal{T}_{n}$?    
\end{question}
We believe that truncated DFS is not only faster in expectation than both BFS and DFS, but that it is faster in expectation than any algorithm that can be implemented with a deque.
\begin{question}
Given that $n$ and the target level $\ell$ are known in advance, is it true that the truncated DFS algorithm defined in Section~\ref{sec:truncDFS} is faster in expectation than any algorithm that uses only a deque, i.e., an algorithm that can switch between BFS and DFS at every step? \label{q-deque}
\end{question}

\begin{question}
Suppose that we do not know the exact level $\ell$ where our target node is located, but only that $\ell$ is in an interval $[u,v]$. What is the optimal algorithm, using the deque data structure, in this setting?
\end{question}

\begin{question}
Suppose that we have multiple target nodes chosen uniformly at random among those at a given level $\ell$ in the trees with $n$ edges. What are the average-case time complexities of BFS and DFS in this scenario?
\end{question}

\section*{Acknowledgments}
We wish to recognize and thank Luz Grisales G\'{o}mez for her involvement at the beginning stages of this project. The first author is thankful to Svante Janson for pointing out reference \cite{CRT2}.
% , as well as for communicating the alternative argument at the end of Section~\ref{sec:totalD}. 
The same author is also indebted to George Spahn for mentioning the idea of the truncated DFS algorithm. The first author was partially supported by an AMS-Simons Travel Grant. The second author is supported by the European Union's NextGenerationEU, through the National Recovery and Resilience Plan of the Republic of Bulgaria, project No BG-RRP-2.004-0008. The third author was partially supported by an AMS-Simons Travel Grant and NSF grant DMS-2316181. Finally, Claude Sonnet 5 was used to obtain some of the results in Section 4.1 and in Section 7.1, as well as for catching mistakes throughout the paper.

\bibliographystyle{plain}
\bibliography{bibliography}

\end{document}